\newtheorem{theorem}{Theorem}
\newtheorem{claim}{Claim}
\newtheorem{corollary}{Corollary}
\newtheorem{definition}{Definition}
\newtheorem{lemma}{Lemma}
\newtheorem{proposition}{Proposition}
\newtheorem{remark}{Remark}
\newenvironment{proof}[1][Proof]{\textbf{#1.} }{\ \rule{0.5em}{0.5em}}
\newcommand{\ep}{\varepsilon}
\newcommand{\dN}{{{\bf N}}}
\newcommand{\E}{{{\bf E}}}
\newcommand{\prob}{{{\bf P}}}
\DeclareMathSymbol{\mlq}{\mathord}{operators}{``}
\DeclareMathSymbol{\mrq}{\mathord}{operators}{`'}
\def\r{\rho}
\def \b{\textcolor{blue} }
\def \r{\textcolor{red} }
\newcommand{\rs}[1]{\textcolor{red}{\sout{#1}}}
\begin{document}

\title{Strategic Experimentation with Private Payoffs%
\thanks{Renault  acknowledges funding from ANITI,    grant ANR-19-PI3A-0004, and from the ANR under the Investments for the Future program, grant ANR-17-EURE- 0010. 
Solan acknowledges the support of the Israel
Science Foundation, Grant \#211/22. 
Vieille thanks the HEC Foundation for support.}}
\author{J\'er\^ome Renault\thanks{Mathematics and Statistics Department, Toulouse School of Economics, Toulouse, France. E-mail: \textsf{jerome.renault@tse-fr.eu}.},
Eilon Solan\thanks{School of Mathematical Sciences, Tel Aviv University, Tel
Aviv 69978, Israel. E-mail: \textsf{eilons@tauex.tau.ac.il}.},
and Nicolas Vieille\thanks{Department of Economics and Decision Sciences, HEC Paris, 1, rue de
la Lib\'{e}ration, 78 351 Jouy-en-Josas, France. E-mail: \textsf{vieille@hec.fr}.}}
\date{\today}

%
\maketitle
\begin{abstract}
We study a strategic experimentation game with exponential bandits, in which experiment outcomes are private. The equilibrium amount of experimentation is always higher than in the benchmark case where experiment outcomes are publicly observed. In addition, for pure equilibria, the equilibrium amount of experimentation is at least socially optimal, and possibly higher. We provide a tight bound on the degree of over-experimentation. The analysis rests on a new form of encouragement effect, according to which a player may hide the absence of a success to encourage future experimentation by the other player, which incentivizes current experimentation. 

\end{abstract}



\section{Introduction}

In many dynamic economic environments, decision makers face uncertainty about the profitability of different strategies and must balance between exploring new opportunities and exploiting known ones. 

This tension is central to strategic experimentation, where multiple players engage in learning through costly actions while observing each other’s choices.
Classical examples include firms investing in uncertain technologies, policymakers testing new policies, or investors navigating uncertain markets.  Decision makers 
gain information through their own experience
as well as by observing the information gained by others, or 
from their behavior if this information cannot be directly observed.

Starting with \cite{rothschild1974two} for one-player problems, and with \cite{bolton1999strategic} for strategic problems, 
 situations involving strategic experimentation have been analyzed in two-arm bandit settings, see \cite{bergemann2008} for a survey.
In such settings, each  player is facing a repeated choice between a safe arm and a risky arm, which is either good or bad.

In  some applications, experimentation outcomes are best thought of as private.
Yet the literature has largely assumed that both actions and experimentation outcomes are public, in which case all players share the same belief on the risky arm's type. Exceptions include \cite{rosenberg2007social}, \cite{murto2011learning}, \cite{heidhues2015strategic}, and  \cite{bonatti2011collaborating}. Quoting 
 \cite{horner2017learning}, the 
 strategic experimentation problem with observed actions and unobserved outcomes remains largely unsolved. 

The present 
paper  contributes to this research direction. 
We adopt a discrete-time version of the workhorse model of exponential bandits, introduced in \cite{keller2005strategic} and used in most of the literature.  
Two players alternate over time in choosing one of the two arms. 
The safe arm is costless and delivers no payoff, while 
pulling the risky arm is 
costly and  yields a positive payoff at random times, only if its type is good. 
Actions are public, but outcomes/payoffs are not. 

In the benchmark case where outcomes are also public, a robust finding from the existing literature is that the equilibrium amount of experimentation is socially suboptimal when time is discrete, see \cite{heidhues2015strategic}.%
\footnote{This is also true of Markov equilibria in continuous time. Going beyond Markov equilibria,  
 \cite{keller2005strategic} for exponential bandits, and \cite{horner2022overcoming} more generally, devise strategies that overcome free-riding.}  Agents do not internalize the social value of experimentation, and underinvest in experimentation.

When outcomes are private, a player is unsure whether to interpret the experiments of other players as evidence that their previous experiments were successful, or as pure experimentation. This opens the way for a potential manipulation of beliefs, with  player $i$  experimenting to convince player $j$ that he was successful, 
thereby induce $j$ to experiment, 
with the intent of free-riding on $j$'s 
experiments. In other words, \emph{today}'s experimentation may serve as a signal that one may have been successful. This induces future experimentation by the other player, 
which encourages today's experimentation. This effect is similar to, but different from, the \textit{encouragement effect} introduced by \cite{bolton1999strategic}, which they defined as  \textit{the prospect of future experimentation by others encourages agents to increase current experimentation.} 
As we show, this effect implies that  the equilibrium amount of experimentation is always higher when outcomes are private rather than public. 
In  equilibrium, though, $j$'s inferences account for the possibility that $i$'s experiments may be deceptive:
equilibrium inferences are highly complex and involve beliefs of all orders.

 In the one-player case and in the two-player public case, the exploration/exploitation trade-off between the option value of an additional experiment and the opportunity cost of that experiment is captured via a single belief cut-off $p^*$, which dictates the optimal behavior. When outcomes are private, we introduce a new \emph{encouragement cut-off} $\widehat p$ that
plays a comparable role, and that balances the cost of one extra experiment with the value of \emph{two} experiments.\footnote{With the outcome of the second experiment being disclosed only later.}
 As we show, players do not experiment when their belief is below $\widehat p$, and often 
experiment when it is above
$\widehat p$. 
These findings allow us to bound the equilibrium amount of experimentation
for pure equilibria: in the absence of conclusive news, the amount of experimentation is  at least the socially optimal amount, and at most twice this amount. Equilibria do not display under-experimentation, but 
 may display over-experimentation.

\medskip
The related literature is discussed in Section \ref{sec lit}. 
The paper most closely related  to our's is Heidhues et al.~\cite{heidhues2015strategic}, which looks at a similar model but assumes that  direct, costless communication between the players is possible at any point in time. Since players have no incentive to lie once they know that the risky arm is good, truthful communication is feasible, and a chief question is to what extent truthful disclosure should be postponed.%
\footnote{A further notable difference lies in the solution concept. Once successful, it is strictly dominant for a player to pull repeatedly the risky arm. Yet, the sequential equilibria in Heidhues et al.~\cite{heidhues2015strategic}   share the feature that  ``a player who was made subjectively certain of the good state by an opponent's announcement of a success maintains this belief even in the face of the opponent's subsequent use of the safe arm.''
Such a behavior in equilibrium may be deemed undesirable. We use a refinement of sequential equilibrium that excludes it.}

\medskip

The  paper is organized as follows. The model is set up in Section \ref{sec model}. Section \ref{sec public payoffs} discusses the benchmark case in which  experiment outcomes are public. Our  equilibrium concept is defined in Section \ref{sec concept}.
 Section 
\ref{sec: encouragement} introduces the encouragement cut-off, with a few examples. Our main results are stated in Sections \ref{sec first bench} and \ref{sec main results}, and proven in the Appendix, Sections \ref{section:proof:prop private} through \ref{sec proof th private3}.\footnote{Additional results are provided in the supplementary material.}

\section{Model}\label{sec model}

We consider a discrete-time version of a strategic experimentation game, see \cite{keller2005strategic}. There are two available arms, a safe arm  $S$ and a risky one $R$. The safe arm always delivers the same payoff, which we normalize to zero. The risky arm is either good ($G$) or bad ($B$), and entails an opportunity cost 
$c>0$.
The type $\theta\in\{G,B\}$ of the risky arm is drawn at time $t=0$ and remains fixed throughout the game. We denote by $p_0:=\prob(\theta=G)>0$ the common prior distribution of $\theta$. 

The risky arm always delivers a payoff of zero in the bad state $\theta= B$. 
If $\theta= G$, whenever selected, the risky arm  yields a random payoff of either 0 or $m$, with probabilities $1-\lambda$ and $\lambda\in (0,1)$ respectively.

Over time, two players alternate in choosing one of the two arms, with player 1 acting first. For convenience, we define a \emph{period} as consisting of two consecutive choices, one for each player. 
That is, in each period $t\geq 1$,
first player 1  chooses an arm, and then player 2 chooses an arm. 
The players choices are \emph{publicly} observed but payoffs are \emph{private}.%
\footnote{Payoffs are private throughout the paper, except in Section \ref{sec public payoffs}.} 
The common discount factor between consecutive periods is $\delta\in (0,1)$. 
This completes the description of the model. 

\medskip

If $\lambda m\leq c$, it is optimal to always pull the safe arm, even if the risky arm is known to be good ($\theta = G$). We rule out this trivial case and assume throughout that $g:=\lambda m-c> 0$. 

We say that a player \emph{experiments} when he chooses $R$, and that the experiment is \emph{successful} if it yields a payoff of $m$.
In line with most of the strategic experimentation literature, news are conclusive:  
when  a player 
with current belief $p$ experiments,\footnote{Unless specified, we mean the first-order belief of that player,  identified with the belief assigned to $\theta = G$.} his belief jumps to 1 if  successful, and decreases to $\displaystyle\phi(p):=\displaystyle \frac{p(1-\lambda)}{p(1-\lambda) +1-p}$ if not.

Since $g> 0$,
a  player  finds it optimal to choose repeatedly $R$ once successful. 
However, since only choices are observed, 
this success is not observed by the other player, who has to draw inferences from the choices  of the successful player. 
The endogenous probabilistic inferences from such observational learning are the topic of the paper. 
\medskip

The assumption that payoffs are private is the main element that  differentiates our paper from  the existing literature. The assumption that players alternate in making choices is mostly for convenience.
However,
the assumption that time is discrete plays a critical role in  shunning cheap talk opportunities. 
Indeed, in continuous time, players have the option to switch back and forth in the fraction of an instant, allowing them  to encode their information at no cost, thereby providing cheap talk opportunities.

\section{The case of public payoffs}\label{sec public payoffs}

In this section (only), we assume that both payoffs and choices are public. 
This well-studied case is a natural benchmark. The results in this section are to a large extent well known, so we remain slightly informal.

\subsection{The one-player case and the social optimum}
\label{section one-player}

Assume there is a single player with discount factor $\delta$. 
The optimal policy is unique up to ties and is a cut-off policy: in a given period, the optimal choice is to experiment 
if and only if the current belief exceeds 
some  cut-off value  $p^*_\delta$. The value of $p^*_\delta$ is pinned down by the condition that when holding the belief $p^*_\delta$, the agent is indifferent between choosing the safe arm forever, and experimenting one last time. This indifference condition translates to  $(1-\delta)\left(p^*_\delta \lambda m -c\right) +\delta  p^*_\delta  \lambda g=0,$
that is, 
\begin{equation}
\label{def:p*}
 p^*_\delta= \frac{c(1-\delta)}{c(1-\delta)+g(1-\delta(1-\lambda))}.
\end{equation}

It is thus optimal to pull the risky arm $N^*_\delta:=\inf\{n \geq 0 : \phi^n(p_0)<p^*_\delta\}$ times before  switching forever to the safe arm if all experiments failed. 
We  write $N^*$ and $p^*$ for $N^*_\delta$ and $p^*_\delta$ when the discount factor is clear from the context.

\medskip

Assume now that there are two agents, the strategies of which are dictated by a social planner whose objective is 
to maximize joint payoffs. If the  goal of the social planner is to maximize the weighted sum  $\gamma^1+\sqrt{\delta} \gamma^2$, where $\gamma^i$ is the $\delta$-discounted sum of player $i$'s payoffs  over time, then the problem of the social planner reduces to that of a single agent choosing an action in each \emph{half}-period, with a  discount factor of $\sqrt{\delta}$.
Accordingly, in this social optimum, agents alternate in pulling the risky arm until the common belief falls below $p^{**} =  p^{**}_\delta:= p^*_{\sqrt{\delta}}$, and the optimal number of experiments is $N^{**} =  N^{**}_\delta:=N^*_{\sqrt{\delta}}$.

\subsection{The two-player case}

When payoffs are public, the experimentation game may be reinterpreted as a stochastic game with perfect information, where the state variable is the common belief over  $\theta$.
 Given this interpretation, a \emph{Markov} strategy is a function $f : [0,1] \to \Delta(\{S,R\})$,
with the understanding that $f(p)$ is the mixed move  selected when the current belief is $p$. 
A \emph{symmetric Markov equilibrium} is a Markov strategy $f$, such that the profile $(f,f)$ is a subgame perfect equilibrium (henceforth, SPE).

Proposition \ref{prop public} below summarizes the main results for that case. In this statement, 
the role of the assumption on $p_0$ is to rule out non-generic cases, and  $N_{\mathrm e} \leq +\infty$ is the total (random) number of experiments over time. We recall that the $n$-th iterate $\phi^n(p_0)$ is the common belief after $n$ failed experiments.

\begin{proposition}[Public payoffs] \label{prop public}
Assume that  $\phi^n(p_0)\neq p^*$ for each $n$.
The following properties hold:
\begin{description}
    \item[P1] At any Nash equilibrium,  $N_{\mathrm e}=N^*$ with probability one conditional on $\theta=B$.
    \item[P2] There is a pure $\mathrm{SPE}$, in which players experiment $N^*$ times in a row, and then stop experimenting if unsuccessful.
    \item[P3] There exists a unique symmetric Markov equilibrium, but a continuum of $\mathrm{SPE}$ payoffs if $p_0>p^*$.
\end{description}

\end{proposition}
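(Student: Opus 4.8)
The proof splits into the three claims P1, P2, P3, and the key underlying fact is the reduction of the public-payoff game to a perfect-information stochastic game whose state is the common belief $p$, which decreases deterministically along the sequence $p_0 > \phi(p_0) > \phi^2(p_0) > \cdots$ as long as no success occurs, and jumps to $1$ upon a success. Since $g>0$, once a success occurs every player pulls $R$ forever (a strictly dominant continuation), so the only interesting dynamics occur on the ``no-success-yet'' path, where the belief is exactly $\phi^n(p_0)$ after $n$ failed experiments. The genericity hypothesis $\phi^n(p_0)\neq p^*$ for all $n$ guarantees that at every reachable belief the one-shot incentive to experiment is strict in one direction or the other.

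\medskip

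\textbf{P1.} The plan is to condition on $\theta=B$ and argue pathwise. Conditional on $\theta=B$ no experiment ever succeeds, so along any play the public belief is $\phi^n(p_0)$ after the $n$-th experiment. First I would show $N_{\mathrm e}\geq N^*$ cannot fail at a Nash equilibrium: if at some belief $p=\phi^n(p_0) > p^*$ a player (weakly) prefers to stop forever, then because the belief is a supermartingale that only decreases absent a success, and because of the indifference/strict-preference structure defining $p^*$ in \eqref{def:p*}, a one-shot deviation to experiment once and then mimic the continuation is strictly profitable --- here I would compare against the benchmark one-player option value, using that an additional experiment can only help (the other player's future experiments are a bonus). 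Conversely, $N_{\mathrm e}\leq N^*$: once $\phi^n(p_0)<p^*$, experimenting one more time and then stopping is strictly worse than stopping immediately, even taking into account any future experimentation by the opponent, because such future experimentation is itself not individually rational at beliefs below $p^*$; a backward/inductive argument on the finitely many beliefs above $p^*$ then forces all experimentation to cease once the belief drops below $p^*$. Combining the two bounds gives $N_{\mathrm e}=N^*$ almost surely on $\{\theta=B\}$. The main subtlety here is handling the possibility that the \emph{two} players split the $N^*$ experiments in various orders or mix; the genericity assumption and the supermartingale/one-shot-deviation argument should make the total count invariant regardless.

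\medskip

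\textbf{P2.} I would exhibit the explicit profile: both players pull $R$ in every period until $N^*$ experiments have been carried out (the players alternating, so this takes $\lceil N^*/2\rceil$ periods with the obvious bookkeeping for who does the last one), then pull $S$ forever unless a success was observed, in which case pull $R$ forever. To verify this is an SPE I would check one-shot deviations at every public history: (i) after a success, pulling $R$ is strictly dominant since $g>0$; (ii) at a history with belief $\phi^n(p_0)>p^*$ and fewer than $N^*$ experiments done, deviating to $S$ forgoes a profitable experiment (by the definition of $p^*$, and noting the opponent will keep experimenting, which only strengthens the incentive); (iii) at a history with belief $\phi^n(p_0)<p^*$, i.e. after $N^*$ failed experiments, deviating to $R$ gives a strictly negative continuation since $\phi^n(p_0)<p^*$ and the opponent plays $S$ forever. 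This is essentially a finite case check.

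\medskip

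\textbf{P3.} For the symmetric Markov equilibrium I would characterize $f$ on the reachable beliefs $\{\phi^n(p_0)\}$. On beliefs $>p^*$, experimenting is strictly dominant-ish in the same one-shot sense, so $f(p)=R$; on beliefs $<p^*$, $f(p)=S$; at $p=p^*$ genericity removes this case. This pins $f$ down uniquely on the relevant part of the state space (its values off-path are irrelevant for the equilibrium outcome, or can be fixed by the same cut-off rule), giving uniqueness. For the continuum of SPE payoffs when $p_0>p^*$, I would construct a one-parameter family of (asymmetric, history-dependent) SPE: use the pure SPE of P2 as the ``reward'' path and a ``punishment'' in which a deviator is held to his worst SPE continuation; by varying which player does how much of the early experimentation --- or by introducing correlated/alternating phases of who experiments when $p_0 > \phi(p_0) > \cdots > p^*$ spans several periods --- one obtains a nondegenerate interval of equilibrium payoff vectors, with the extremes corresponding to one player doing all the costly experimentation versus the other. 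The hypothesis $p_0>p^*$ ensures there is at least one period of experimentation to reallocate, so the payoff set is genuinely a continuum rather than a point. The main obstacle in P3 is making the ``continuum'' claim precise: I expect to need at least $N^*\geq 1$ (guaranteed by $p_0>p^*$) and to exhibit explicitly two SPE with distinct payoffs and then interpolate, e.g. via a public randomization or via varying the length of an initial phase in which only one player experiments while the other waits.

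\medskip

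Overall, the conceptual content is light --- everything reduces to the cut-off structure of \eqref{def:p*} plus one-shot-deviation checks --- and the anticipated hard part is purely bookkeeping: handling mixed strategies and the alternating-move, two-choices-per-period timing in P1 so that the total experimentation count is genuinely pinned down, and exhibiting a clean explicit family in P3.
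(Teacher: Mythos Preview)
Your argument for \textbf{P1} is essentially the paper's: bound $N_{\mathrm e}$ above by looking at the last experimenter (whose belief must be $\geq p^*$), and bound it below by noting that at any on-path history with belief above $p^*$ where experimentation has effectively stopped, deviating to the one-player optimal continuation is profitable. That part is fine.

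However, \textbf{P2} and \textbf{P3} contain a common, genuine error: you treat the pure cutoff profile ``experiment iff the current belief exceeds $p^*$'' as if it were an SPE and as if it were the symmetric Markov equilibrium. It is neither, in general. The paper points this out explicitly in Remark~\ref{remark1}: when $p_0\in(p^*,\min(p_{\mathrm{myop}},\phi^{-1}(p^*)))$, if the active player deviates to $S$, the belief stays at $p_0>p^*$ and the \emph{opponent} (following the cutoff rule) will carry out the experiment. The deviator thereby shifts the cost without reducing the amount of information generated; since $p_0<p_{\mathrm{myop}}$, this deviation is strictly profitable. Your claim in P2 that ``deviating to $S$ forgoes a profitable experiment \dots and the opponent will keep experimenting, which only strengthens the incentive'' gets the sign of the free-riding effect exactly backwards.

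Consequently, for \textbf{P2} the paper constructs a different pure profile, based on a remaining ``budget'' $k(h)=N^*-n_{\mathrm e}(h)$ together with a punishment rule: if the opponent's last choice was a deviation and $k(h)$ is odd, the active player plays $S$. The effect is that a deviation to $S$ merely \emph{delays} one's own required experiments rather than shifting them to the opponent, which kills the free-riding incentive. For \textbf{P3}, the unique symmetric Markov equilibrium is generically \emph{mixed}: the paper shows $f(p)>0$ iff $p>p^*$, but then pins down $f(p)\in(0,1)$ by an indifference condition (using $\gamma_1(p)=\delta\gamma_2(p)$ when $f(p)<1$) and proves inductively that at most one such value exists at each belief level. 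The continuum of SPE payoffs is then obtained (in the case $N^*=1$, $p_0<p_{\mathrm{myop}}$) by varying player~1's mixing probability at the root across a family $\sigma_y$, not by reallocating who does which experiment in a pure profile.
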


If $\theta=G$, agents behave as when the state is $\theta=B$ until one experiment is successful. That is, $N_{\mathrm e}=+\infty$ 
if one of the first $N^*$ experiments is successful, and $N_{\mathrm e}= N^*$ otherwise. 

According to \textbf{P1}, which mirrors Proposition 1 in \cite{heidhues2015strategic}, the number of experiments is the same across all equilibria, and is inefficiently low since $p^{**}<p^*$ implies $N^*\leq N^{**}$. Since the assertion holds for
all Nash equilibria, the scope of the result is broad. According to \textbf{P3} however, while the equilibrium number of experiments is uniquely pinned down, their timing is not.

Since Proposition \ref{prop public} is in line with existing results, 
its proof is in the supplementary material.
\medskip

\begin{remark}\label{remark1}
Define the \emph{myopic} cutoff as $p_{\mathrm{myop}}:=\frac{c}{\lambda m}= \frac{c}{c+g}$, and note that for $p>p_{\mathrm{myop}}$, the risky arm is more informative and produces a higher current reward than the safe arm.

Let $\sigma$ be the Markov profile in which the active player experiments if and only if the current belief is above $p^*$. We note that $\sigma$ is not an SPE if $p_0\in (p^*,\min(p_{\mathrm{myop}},\phi^{-1}(p^*))$. The reason is that given $\sigma$, player $\mathrm{1}$ faces the following trade-off at the initial period. If he chooses $R$ as prescribed by $\sigma$, no one will experiment further, since $\phi(p_0)<p^*$. If he deviates to $S$, player $\mathrm{2}$ will experiment once, because the current belief will still be $p_0$. That is, deviating to $S$  avoids paying the cost of experimenting and does not affect the amount of experimentation. Since $p_0< p_{\mathrm{myop}}$, deviating to $S$ is profitable.%
\footnote{If instead $p_0\in [p_{\mathrm{myop}},\phi^{-1}(p^*)]$ (which requires $\phi^{-1}(p^*)\geq p_{\mathrm{myop}}$), then $\sigma$ is an SPE.} 

\end{remark}

\section{The private case: a first benchmark}\label{sec first bench}

\subsection{Strategies}

We introduce some terminology. \emph{Full histories} specify entirely the play up to the current period, that is, the realized state $\theta$, the sequence of past  choices, and all experiment outcomes. \emph{Public histories} list past choices and are therefore finite sequences of $R$'s and $S$'s. We denote by $H:=\{R,S\}^{<\mathbb{N}}$ the set of public histories. Public histories play a leading role,  and we reserve the symbol $h$ to such histories. We note that player 1 (resp., player 2) is active at $h$ if and only if the length $|h|$ of $h$ is even (resp., odd).
\emph{Private histories} of player $i$ specify in addition whether $i$'s experiments were successful. A strategy  $\sigma^i$ of player $i$ maps the set of $i$'s private histories (where active) into the set $\Delta(\{R,S\})$ of mixed moves. 

Once a player is successful, his only sequentially rational continuation strategy is to choose repeatedly the risky arm. When constructing sequential equilibria, we will abstain from repeating that players choose $R$ at such private histories. To simplify notations, we thus define a strategy profile as a map $\sigma: H\to \Delta(\{S,R\})$, with the understanding that   $\sigma(h)$ is the mixed move at $h$ of the active player if   (i) the sequence of previous choices is $h$ and (ii) all the past experiments of this active player failed.\footnote{The choice of this concise notation will raise two issues. The first will arise when specifying off-path beliefs in a sequential equilibrium: when checking the consistency of beliefs and strategies, we will have to allow for completely mixed strategies that play $S$ with positive probability, even when successful. 
The second issue will arise when discussing Nash equilibria, since Nash equilibria may play dominated continuation strategies off-path. We will deal with these issues when needed, without introducing additional notations.} 
\medskip

There is one case where the optimal arm choice is clear. 
Assume the active player holds the belief  $p$, and chooses to experiment. His expected current reward is $p\lambda m -c$ and his expected continuation payoff is at most $pg$, irrespective of past play and of players' continuation strategies. 
If instead the active player chooses the safe arm, 
he can guarantee a non-negative payoff (e.g., by always choosing the safe arm).
Thus, it is dominating to choose the safe arm 
when $(1-\delta)\left(p\lambda m -c\right) +\delta pg< 0$, that is, when 
\begin{equation}\label{eq tildep}
p<\displaystyle \widetilde p:=\frac{c(1-\delta)}{(1-\delta)\lambda m +\delta g}.
\end{equation}  
Note that $\widetilde p < p^*$. 
We denote by $\widetilde N := \min\{n \geq 0 \colon \phi^{n}(p_0) < \widetilde p\}$ 
the number of failures needed to push the belief below $\widetilde p$.

\subsection{Private vs.~public payoffs: a first comparison}\label{sec first results}

Our first main result states that the equilibrium amount of experimentation is \emph{always} higher when payoffs are private.

\begin{theorem}[Private payoffs]\label{prop private}
Assume that $\phi^n(p_0)\neq p^*$ for each $n$. For any Nash equilibrium $\sigma$ of the game with private payoffs, one has $\prob_\sigma(N^* \leq N_{\mathrm e} < + \infty\mid \theta = B)=1$.
\end{theorem}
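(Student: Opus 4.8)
I want to show two things for any Nash equilibrium $\sigma$ of the private-payoff game, conditional on $\theta=B$: (i) the total number of experiments $N_{\mathrm e}$ is at least $N^*$ almost surely, and (ii) it is finite almost surely. Since conditional on $\theta=B$ no experiment is ever successful, along any realized path both players experiment while their beliefs simply decrease along the deterministic trajectory $p_0, \phi(p_0), \phi^2(p_0), \dots$; however — and this is the crux — a player's \emph{first-order} belief is not pinned down by the public history, because he does not know whether his opponent's past $R$-choices came from a genuine success or from pure experimentation. So the relevant state is a player's subjective belief, which is an endogenous object.

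For the lower bound (i), the plan is to argue by contradiction: suppose that with positive probability (conditional on $\theta=B$) play stops after only $n<N^*$ experiments at some public history $h$. I would look at the last player to experiment on that path, or rather at the player who is supposed to experiment next at $h$ and chooses $S$ with positive probability. The key observation is an \emph{optimistic-belief} argument: whatever this active player's subjective belief is, it is \emph{at least} the ``all-experiments-were-pure'' belief, which along a path with $n\le N^*-1$ prior experiments equals some $\phi^{k}(p_0)$ with $k\le n\le N^*-1$, hence is strictly above $p^*$ (using $\phi^n(p_0)\neq p^*$ and $n<N^*$). Now here is where I need the encouragement effect flagged in the introduction: if this player experiments once more, then on the continuation path the opponent's belief is at least as optimistic, so the opponent will (with the prescribed positive probability, or else we iterate) also experiment, and so on; the option value of this chain of experimentation, bounded below, makes experimenting strictly profitable whenever the belief exceeds $p^*$ — in fact the single-player one-shot comparison already gives that experimenting beats stopping forever when $p>p^*$, and here the continuation is even better because the opponent may experiment too. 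This contradicts optimality of choosing $S$. The delicate point is that the player's subjective belief might in principle be \emph{lower} than the all-pure belief if the opponent were playing a strategy that makes $R$ evidence against success — but $R$ can never be evidence against $\theta=G$ given that a successful player always plays $R$; I would make this monotonicity precise (Bayes' rule: conditioning on any event that has weakly higher probability under $G$ than under $B$ cannot decrease the posterior on $G$, and ``opponent played $R$'' is such an event since a successful — hence $G$-type — opponent plays $R$ for sure).

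For finiteness (ii), I want to rule out perpetual experimentation conditional on $\theta=B$. The point is that a player's subjective belief, however optimistic, is a genuine probability and must eventually reflect the accumulating bad news: conditional on $\theta=B$, along the realized path the opponent keeps playing $R$, and by a martingale/Borel–Cantelli argument the player's subjective probability that $\theta=G$ must converge to $0$ almost surely on $\{\theta=B\}$ — intuitively, if it stayed bounded away from $0$, the player would be ``surprised forever,'' which is impossible since his beliefs form a bounded martingale under his own information filtration and the truth is $B$. Once the subjective belief drops below $\widetilde p$ (the dominance cutoff of \eqref{eq tildep}), it is \emph{dominating} to play $S$, regardless of anything else, so experimentation stops. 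Hence $N_{\mathrm e}$ is finite a.s. conditional on $\theta=B$.

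**Main obstacle.** The hardest part is the lower-bound direction (i): making rigorous the claim that choosing $S$ at a belief above $p^*$ is suboptimal for the active player \emph{in equilibrium}. The subtlety is that the benchmark comparison ``$(1-\delta)(p\lambda m-c)+\delta pg \gtrless 0$'' that defines $p^*$ uses the single-player continuation value $pg$, but in the game the continuation value after a successful experiment could differ, and after a \emph{failed} experiment the player's belief and the opponent's future play matter; I need to show the equilibrium continuation payoff from experimenting-then-following-$\sigma$ dominates the payoff from stopping, and the clean way is to exhibit an explicit profitable deviation (experiment once, then play $S$ forever, or experiment once and mimic what $\sigma$ would have done at the shorter history) and lower-bound its value using only the player's own success probability $p\lambda$ and the guaranteed bound $g>0$ on the post-success flow — together with the observation that adding the opponent's possible future experimentation can only help. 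Handling the case where the opponent \emph{also} mixes to $S$, so that the encouragement chain is probabilistic rather than certain, requires a short induction on $N^*-n$; I expect this bookkeeping, plus the careful treatment of subjective beliefs as the state variable, to be where most of the work lies.
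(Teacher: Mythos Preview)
Your plan contains the right ingredients but misallocates the difficulty, and one step in each half is imprecise.

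\textbf{Lower bound (i).} You correctly isolate the key belief bound $p^i(h)\ge \phi^{n_{\mathrm e}(h)}(p_0)$; the cleanest justification is the decomposition
\[
p^i(h)=\Pr\bigl(j\text{ had a success}\mid h^i\bigr)\cdot 1+\Pr\bigl(j\text{ no success}\mid h^i\bigr)\cdot \phi^{n_{\mathrm e}(h)}(p_0),
\]
which holds regardless of whether $j$ played $R$ or $S$ along $h$. Your ``$R$ is never bad news'' remark only covers one case. More importantly, your ``main obstacle'' is a non-issue: no encouragement chain and no induction on $N^*-n$ are needed. The paper simply argues that if $\Pr_\sigma(N_{\mathrm e}<N^*\mid B)>0$, then for every $\varepsilon>0$ there is an on-path $h$ with $n_{\mathrm e}(h)<N^*$ at which the probability of any future experiment is below $\varepsilon$; the equilibrium continuation payoff there is at most $\varepsilon g$, while deviating to the \emph{single-player} optimal policy yields a payoff bounded away from zero (since $p^i(h)\ge\phi^{n_{\mathrm e}(h)}(p_0)>p^*$). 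The opponent's future play is irrelevant to this lower bound, so your worries about mixed stopping and probabilistic chains evaporate. This is the easy half.

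\textbf{Finiteness (ii).} Your martingale route is a valid alternative to the paper's, but your description is off. The sentence ``the opponent keeps playing $R$'' is neither true in general nor what drives the argument; what matters is that on $\{\theta=B,\,N_{\mathrm e}=\infty\}$ \emph{some} player $i$ experiments infinitely often, and it is \emph{his own} failed outcomes that make $\theta$ $\mathcal F^i_\infty$-measurable on that event, forcing the bounded martingale $p^i$ to $0$ and eventually below $\widetilde p$---a contradiction with dominance. The paper takes a different decomposition: it first shows (via the $\widetilde p$ cutoff) that any on-path history of the form $hS$ with $n_{\mathrm e}(h)>\widetilde N$ forces the continuation $S^\infty$, so almost surely the play tail is either $S^\infty$ or $R^\infty$; it then rules out $hR^\infty$ on $\{\theta=B\}$ by a direct likelihood-ratio computation, $\Pr_\sigma(h_n^1\mid G)\le(1-\lambda)^{n_{\mathrm e}^1(h_n)}\to 0$ while $\Pr_\sigma(h_n^1\mid B)\to\Pr_\sigma(hR^\infty\mid B)>0$. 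Both approaches work; yours is shorter once stated correctly, the paper's is more explicit about where the contradiction occurs.
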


If $\theta= G$, the distribution over histories coincides with the distribution if $\theta =B$, as long as all experiments fail. If one experiment is successful, the successful player repeats $R$ forever, and the other player eventually sticks to $R$.

This  result is largely due to a  form of encouragement effect. If player $j$  sticks to the risky arm, player $i$ will put some weight on the possibility that $j$ was successful and  will therefore be more optimistic than if $j$'s failures were observed by $i$. As a result, $i$ will experiment longer ($N_{\mathrm e}\geq N^*$). 

This intuition leaves open the possibility that, by mutual encouragement, players might engage in an endless phase of experimentation, each player attributing the other player's insistence on $R$ as evidence that he 
(the other player) was successful, and their own failures to bad luck. As the second part of Theorem \ref{prop private} shows, this cannot be the case, and experimentation must eventually stop in the absence of conclusive evidence ($N_{\mathrm e}<+\infty$). 

\medskip

\begin{proof}[Proof Sketch] 
The formal proof appears in Section~\ref{section:proof:prop private}.
Fix a Nash equilibrium $\sigma$.\footnote{Our proof allows for Nash equilibria in which a player may pick $S$ after being successful.}
The proof that $N_{\mathrm e}\geq N^*$ with probability 1 builds on the above intuition. Fix any on-path public history $h$. If players knew that they  were both  unsuccessful  so far, they would share the belief 
 $\phi^{n_{\mathrm e}(h)}(p_0)$ (where $n_{\mathrm e}(h)$ is the number of experiments along $h$). Since outcomes are private, the belief $p^i(h)$ of each player $i$ 
 at $h$
 accounts both for  this worst-case scenario and for the possibility that $j$ may have been successful,  implying that $p^i(h)\geq \phi^{n_{\mathrm e}(h)}(p_0)$. 
The rest of the argument is quite similar to the proof for the public case. 
\medskip

To prove  that $N_{\mathrm e}<+\infty$ if $\theta= B$, we first note that at equilibrium players do not switch back and forth infinitely often between the two arms. The logic is the following. 
Assume there  is an on-path history $h$ that ends with $S$ and such that $n_{\mathrm e}(h) >\widetilde N$. 
The player active at $h$ infers that all experiments  so far
were failures;  therefore, he holds the belief $\phi^{n_{\mathrm e}(h)}(p_0)<\widetilde p$, and must play $S$. The same argument holds for the other player at the history $hS$.
Hence $\sigma(h)=\sigma(hS)= S$, and the equilibrium sequence of choices following $h$ must be $S^\infty$. 
This implies that, with probability 1, the sequence of choices  ends either with $S^\infty$, or with $R^\infty$. 

Hence, with probability 1, one of the two arms is used finitely many times. Assume that conditional on $ B$, there is a positive probability that $S$ is pulled only finitely many times. On that event,  each player $i$ eventually assigns a probability arbitrarily close to 1 to the event that the other player, $j$,  will always choose $R$ in the future, independently of the experiments' outcomes. Consequently, player $j$'s choices become increasingly non-informative from  $i$'s perspective. Eventually, the updating of $p^i(h)$ is mostly based on $i$'s failures, so that $p^i(h)\to 0$, implying that $i$ will eventually switch to $S$. 
\end{proof}

\section{The private case: conceptual issues}\label{sec concept}

This section presents and clarifies our solution concept. We start with a simple  Nash equilibrium which we  next modify into a pure, sequential equilibrium. 
As we will see, 
this sequential equilibrium  relies on off-path beliefs that assign probability 1 to a player making strictly dominated choices.
Following the earlier literature, we then explicitly rule out such beliefs and introduce the concept of \emph{reasonable} sequential equilibrium.

\subsection{A simple pure Nash equilibrium}
\label{section:simple:pure}

Constructing a pure Nash equilibrium is not difficult. 
Let $p_0\geq p^*$ be given, and recall that $N^*=\inf\{n \geq 0: \phi^n(p_0)<p^*\}$. 
Consider the infinite sequence of choices $h^*_\infty:= (RR)^{N^*}\!\!\!\cdot\! S^\infty$, along which each player experiments $N^*$ times, then switches to the safe arm.
Define $\sigma_0$ as the strategy profile that follows  the sequence $h^*_\infty$ as long as past choices are consistent with it
and that selects $R$ otherwise. That is, $\sigma_0(h)=S$ if and only if $h$ is a prefix of $h^*_\infty$ with length $|h|\geq 2 N^*$.

Put differently, according to $\sigma_0$, players  `report' truthfully their private information once they reach period $N^*+1$. 
If player 1 chooses $R$ in period $N^*+1$, then player 2 infers that $\theta= G$ and chooses $R$ ever after. 
If player 1 chooses $S$ in period $N^*+1$,
then in the second half of that period player 2 reports  his information by choosing either $S$ or $R$. An observable deviation from $\sigma_0$ triggers the other player to choose $R$ forever. 
The fact that $\sigma_0$ is an equilibrium  (see Section~\ref{appendix prop 5} for details) follows from two observations. During the experimentation phase, the fact that the other player keeps experimenting is uninformative, and each player's belief is therefore updated only on the basis of his own experiments.
Under $\sigma_0$, each player thus experiments as much as if he were alone, 
and only then learns the outcomes of $N^* $ experiments of the other player. 
On the other hand, if a player deviates and experiments for a suboptimal number  of periods, he foregoes in addition the opportunity to learn the outcomes of the other player's experiments. 

\medskip

According to the strategy profile $\sigma_0$, each player learns the other player's private information only once the experimentation phase is over. This positive externality can be used to incentivize players to experiment \emph{more}. 
Specifically, given $n\geq 0$, consider  the infinite sequence $h^*_\infty(n):= (RR)^{N^*+n}\!\cdot\! S^\infty$, and define $\sigma_n$ as the strategy profile that follows  $h^*_\infty(n)$ as long as the past sequence of choices is consistent with it, and selects $R$ otherwise.
 According to $\sigma_n$,  players are supposed to experiment for $n$ additional periods once their belief falls below $p^*$, rather than to stop immediately. The rationale is that these $n$ extra experiments will give them access \emph{for free} to the outcome of $N^* +n$ experiments (those performed by the other player), which they would never learn if they were to stop before. 

In Section \ref{appendix prop 5}, we identify a simple condition on $\lambda$, $\delta$, $p_0$, and $n$ under which the strategy profile $\sigma_n$ is a Nash equilibrium. This condition implies Proposition \ref{prop comparison p/e} below, according to which the amount of experimentation may be significantly higher than in the public case.

\begin{proposition}
\label{prop comparison p/e}
The following two statements hold.
\begin{itemize}
 \item There are  values of $(\lambda,\delta,p_0,n)$ such that $\sigma_n$ is a
 Nash
 equilibrium and the belief $p_f:= \phi^{n+N^*_\delta-1}(p_0)$ held \emph{before} 
 the last experimentation is such that $\displaystyle\frac{p_f}{p^*_{\delta}}$ is arbitrarily close to $\displaystyle\frac{1}{e}$.
 
    \item There are  values of $(\lambda,\delta,n)$ such that $\sigma_n$ is a
 Nash
 equilibrium when $p_0= p^*_\delta$ and $n$ is arbitrarily large. In that equilibrium, the ratio $\displaystyle \frac{N_{\mathrm e}}{N_\delta^*}= \frac{2(n+1)}{1}$ is arbitrarily large, while the ratio $\displaystyle \frac{N_{\mathrm e}}{N_\delta^{**}}$ is arbitrarily close to  $\displaystyle\frac{2 x_0}{\ln 2}\simeq 2.299$, where $x_0>0$ solves  $x+e^{-2x}=1$.
  
\end{itemize}

\end{proposition}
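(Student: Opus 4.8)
The plan is to combine the explicit description, obtained in Section~\ref{appendix prop 5}, of the values of $n$ for which $\sigma_n$ is a Nash equilibrium, with an asymptotic analysis in a well-chosen regime.

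\textbf{The Nash condition.} I would first compute the continuation value $V_k$, under $\sigma_n$, of a player who is about to perform his $(k+1)$-st experiment having failed all previous ones. The key point is that along the experimentation phase the opponent's choices are uninformative, so that player's belief is $q_k:=\phi^k(p_0)$, driven by his own outcomes only, and one gets the backward recursion
\[
V_k \;=\; a_k + \delta(1-q_k\lambda)\,V_{k+1},\qquad a_k:=(1-\delta)(q_k\lambda m - c) + \delta q_k\lambda g,
\]
for $k<N^*+n$, with terminal value $V_{N^*+n}$ equal, up to a within-period discount factor that is immaterial below, to $\delta g\,q_{N^*+n}\bigl(1-(1-\lambda)^{N^*+n}\bigr)$ --- the value of discovering, at the end of the experimentation phase, whether the opponent was ever successful. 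Since $a_k\ge 0\iff q_k\ge p^*\iff k\le N^*-1$, the increments past $k=N^*$ are negative whereas the terminal bonus is positive; a short monotonicity argument then reduces ``$V_k\ge 0$ for all $k$'' to the single inequality $V_{N^*}\ge 0$, and once this holds every other deviation is unprofitable, since an observable deviation triggers the opponent to choose $R$ forever and leaves the deviator with a single-player continuation problem that is dominated by following $\sigma_n$. The two players' constraints differ only through within-period discounting and coincide in the limit $\delta\to 1$. Hence $\sigma_n$ is a Nash equilibrium iff $n\le n^\star(\lambda,\delta,p_0):=\max\{n:V_{N^*}\ge 0\}$.

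\textbf{The asymptotic analysis.} I would then let $\lambda\to 0$ and $\delta\to 1$ with $1-\delta = o(\lambda)$, so that $p^*:=p^*_\delta\to 0$ and its scale is set by $(1-\delta)c/(\lambda g)$. Then $\phi^k(p)\approx p(1-\lambda)^k\approx p\,e^{-\lambda k}$ for small $p$; measuring ``time'' through $x=\lambda k$, the discount products in the recursion telescope to $1$ (as $\sum_k q_k\lambda\to 0$ and $n(1-\delta)\to 0$), so that $V_{N^*}\approx \sum_{k=N^*}^{N^*+n-1}a_k+V_{N^*+n}$, with $\sum_k a_k\to g p^*\int_0^{w}(e^{-x}-1)\,dx = g p^*(1-e^{-w}-w)$ and $V_{N^*+n}\to g p^* e^{-w}\bigl(1-e^{-(y+w)}\bigr)$, where $w:=\lim\lambda n$ and $y:=\lim\lambda N^*$. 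The equation $V_{N^*}=0$ thus becomes
\[
(1-e^{-w}-w)+e^{-w}\bigl(1-e^{-(y+w)}\bigr)=0.
\]

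\textbf{The two statements.} For the first bullet, take $p_0$ fixed: then $N^*\sim\lambda^{-1}\ln(p_0/p^*)\to\infty$, so $y=\infty$ and the equation collapses to $w=1$, i.e.\ $\lim\lambda n^\star=1$. Since $p_f=\phi^{N^*+n-1}(p_0)$ satisfies $p_f/p^*\to e^{-\lim\lambda n}$, choosing $n=n^\star$ gives $p_f/p^*_\delta\to 1/e$. For the second bullet, take $p_0=p^*_\delta$: then $N^*_\delta=1$, hence $N_{\mathrm e}=2(1+n)$ and $N_{\mathrm e}/N^*_\delta=2(n+1)$, while $y=0$ turns the equation into $w+e^{-2w}=1$, so $\lim\lambda n^\star=x_0$; thus $N_{\mathrm e}/N^*_\delta=2(1+n^\star)\to\infty$. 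Finally, a direct computation from \eqref{def:p*} gives $p^*_{\sqrt\delta}/p^*_\delta\to\tfrac12$, which with $\phi^k(p_0)\approx p_0(1-\lambda)^k$ yields $\lambda N^{**}_\delta=\lambda N^*_{\sqrt\delta}\to\ln 2$, so $N_{\mathrm e}/N^{**}_\delta\to 2x_0/\ln 2$.

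\textbf{Main obstacle.} The delicate part is the Nash condition --- identifying exactly which deviations must be checked and showing that the incentive constraints collapse to $V_{N^*}\ge 0$. The two subtle cases are the deviation of experimenting one or more times \emph{past} the prescribed phase, where the crude cutoff $\widetilde p$ is too weak and one must invoke that such an extra experiment carries negative value $a_{N^*+n}<0$ (and additionally forfeits the disclosure bonus), and the informational asymmetry at the disclosure period (player~2 learns player~1's outcome within the period, player~1 learns player~2's one period later). Everything afterward is routine bookkeeping plus the two elementary integral computations, whose only real subtlety is checking that discounting genuinely washes out in the chosen regime.
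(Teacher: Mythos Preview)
Your plan is correct and follows essentially the same route as the paper. Both reduce the Nash property of $\sigma_n$ to a single inequality --- the nonnegativity of the continuation payoff of an unsuccessful player at the point where his belief first drops below $p^*$ (your $V_{N^*}\ge 0$ is exactly the paper's condition $CP_1\ge 0$ in Lemma~\ref{lemm cns}) --- and then take $\lambda\to 0$, $\delta\to 1$ to extract the limiting values.

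The one notable difference is in how the asymptotics are organised. The paper treats the two bullets separately with concrete parameter choices: for the first it fixes $n\approx 1/\lambda$ and works with likelihood ratios; for the second it sets $p_0=p^*$, uses the sufficient condition $(n+1)\lambda+(1-\lambda)^{2n+2}<1$ from Corollary~\ref{corJ}, and computes $(1-\lambda)^{N^{**}}\to 1/2$ directly. Your approach instead derives a single limiting equation $(1-e^{-w}-w)+e^{-w}(1-e^{-(y+w)})=0$ in the rescaled variables $w=\lim\lambda n$ and $y=\lim\lambda N^*$, and then specialises to $y=\infty$ (giving $w=1$) and $y=0$ (giving $w=x_0$). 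This is a tidier packaging of the same computation; it makes transparent why $1/e$ and $x_0$ appear, at the cost of having to justify the Riemann-sum limits and the negligibility of discounting carefully. Either route works; yours is more conceptual, the paper's more explicit.
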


\subsection{Sequential equilibria}

A sequential equilibrium is a pair $(\sigma,\pi)$, where $\sigma$ is a  strategy profile and $\pi=(\pi^1,\pi^2)$ is a belief system, such that $\sigma$ is sequentially rational given $\pi$,
and $\pi$ is consistent with $\sigma$.\footnote{
Sequential equilibria were introduced for finite games. The definition readily extends to games with infinite horizon, see, e.g., \cite{fudenberg1983subgame}.
}
In our framework, a belief system of player $i$ is a collection $(\pi^i(\cdot\mid h^i))_{h^i}$, which associates with each \emph{private} history $h^i$ of player $i$ a probability distribution on the set of all full histories that are consistent with $h^i$. 
Given such a belief system, we let
\begin{itemize}
    \item $p^i(h)$ be the probability assigned by $i$ to $\theta= G$, conditional on (i) the sequence of past choices is $h$, and (ii) player~$i$ was not successful along $h$. It is the first-order belief of $i$, when not successful.
    \item $q^i(h)$ be the probability assigned by $i$ to the fact that player $j$ was successful, conditional on  (i) the sequence of past choices is $h$, and (ii) $\theta = G$. It is related to the  second-order belief of $i$.
\end{itemize}


\subsubsection{A pure sequential equilibrium}\label{sec pure SE}

Under the strategy profile $\sigma_0$ defined in Section~\ref{section:simple:pure}, 
any observable deviation triggers the indefinite choice of $R$ by the non-deviating player. We outline here the construction of a similar pure sequential equilibrium $(\sigma,\pi)$ in which an observable deviation of player $i$ is interpreted by $j$ as  `conclusive' evidence that $i$ was successful, even if that deviation consisted in choosing $S$, and triggers the indefinite choice of $R$ by $j$. Such off-path beliefs assign probability 1 to the event that the other player has made a strictly dominated choice. We later explicitly rule out such beliefs, hence we postpone  the detailed construction of $(\sigma,\pi)$  and the proof of optimality to the supplementary material.

Although the definition of $\sigma$ is inspired by $\sigma_0$,
the consistency requirement on beliefs prevents us from fully duplicating $\sigma_0$. Indeed, a deviation by $i$ cannot be interpreted as evidence that $i$ was successful, if $i$ has never experimented.

\paragraph{Outline.}
 On the equilibrium path,  $\sigma$ follows the  sequence $h^*_r:=(RS)^r\!\cdot (RR)^{N^*-r}\!\cdot S^\infty$, where $r\in \llbracket 0, N^*\rrbracket$ will be determined below.%
 \footnote{For every $k \geq 0$, we denote $\llbracket 0,k\rrbracket = \{0,1,\dots,k\}$.}
Players thus start with an experimentation phase that lasts $N^*$ periods. Player 1 experiments throughout this phase, while player 2 starts 
experimenting
with a delay of $r\geq 0$ periods. Once this phase is over, the continuation play hinges on whether one of the players was successful, as
under   $\sigma_0$. 
The value of $r$ is 
chosen to maximize  the expected payoff of player 2 
 over all $r\in \llbracket 0, N^*\rrbracket$. This completes the on-path description of $\sigma$.

 The off-path definition of $(\sigma,\pi)$ obeys the following principles.
 \begin{itemize}
 \item Any observable deviation from $\sigma$ is interpreted by the non-deviating player as evidence that the deviating player was successful, provided that the deviating player has experimented at least once prior to deviating, and induces the non-deviating player to repeat $R$ forever.\footnote{Following one's deviation, the deviating player best-replies to the continuation play of the non-deviating  player.} 
 \item The definition of $\sigma$ in the event where player 1 deviates at the initial node  depends on the expected payoffs $\gamma^1(\sigma)$ and $\gamma^2(\sigma)$ induced by $\sigma$. If  $\gamma^1(\sigma)>\delta \gamma^2(\sigma)$, 
  player 1 prefers to bear the cost of experimenting rather than to wait, and to exchange roles with player 2. In that case, players switch roles after $S$. That is,  we set $\sigma(Sh)= \sigma(h) $  for every $h \in H$,
so that player 1's  payoff conditional on playing $S$ is $\delta \gamma^2(\sigma)<\gamma^1(\sigma)$. 
  
  If instead $\gamma^1(\sigma)\leq \delta \gamma^2(\sigma)$,
player 1 would rather be in player 2's position, even with a one-stage delay. 
In this case, player 2 reacts to player 1's deviation by \emph{declining} to endorse the role of player 1: we set $\sigma(S)=S$ and $\sigma(SS\!\cdot\! h)= \sigma(h)$
for every $h \in H$.\footnote{In addition, $\sigma(SRh)= \sigma(Rh)$ for every $h \in H$.}
\item Deviations of player 2 that consist in advancing or postponing his first experiment are ignored.
\end{itemize}

\subsection{Reasonable sequential equilibria}

\label{sec refinement}
Under the sequential equilibrium $(\sigma,\pi)$, the choice by  player $i$ of the safe arm after an experiment is viewed by player $j$ as evidence that $i$'s experiment was successful. That is, player $j$'s off-path belief assigns probability 1 to player $i$ making a strictly dominated choice. 
Such beliefs are not ruled out by the concept of sequential equilibrium, yet they are excluded by leading refinements of this concept. 
Accordingly, we restrict ourselves to \emph{reasonable equilibria}, 
which we define next.\footnote{These are coined after 
\cite{mas-colell}, p.~468. } 
The relation of this concept to the existing equilibrium refinements is discussed in Section \ref{sec lit}.

\begin{definition}
A system $\pi$ of beliefs is \emph{reasonable} if the following holds for each  history $h'\in H$ with active player $i$:
\begin{itemize}
    \item If $h'$ is of the form $h'=  h S$, then $p^i(h')= \phi^{n_{\mathrm e}(h)}(p_0)$.
\item If $h'$ is of the form $h'= h Sa R$ with $a\in \{S,R\}$, 
then $p^i(h')= \phi^{n_{\mathrm e}(hSaR)}(p_0)$.
\end{itemize}
\end{definition}

The condition that $p^i(h S)= \phi^{n_{\mathrm e}(h)}(p_0)$ captures the fact that player $i$ infers from  $j$'s choice of $S$ at $h$ that  past experiments of player $j$ failed.

The additional condition on $p^i(h Sa R)$ captures the further requirement that, after $j$ has `revealed' by choosing $S$ that his experiments along $h$  failed, his choice of the risky arm in the next period is considered to be non-informative.

\begin{lemma}\label{lemm unique beliefs}
For every strategy  profile $\sigma$ there is a unique system of reasonable beliefs  consistent with $\sigma$.
\end{lemma}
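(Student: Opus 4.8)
The plan is to show that the reasonable-belief conditions, together with Bayesian consistency along the play path, pin down $\pi$ uniquely for every strategy profile $\sigma$. The core observation is that a reasonable belief system must be consistent with $\sigma$ in the usual sequential-equilibrium sense, and the novelty here is purely combinatorial: the ``reasonable'' requirements identify exactly which public histories are to be regarded as informationally equivalent to a run of failed experiments, and everywhere else consistency forces the beliefs. So I would first set up the bookkeeping: for a private history $h^i$ of player $i$, decompose it into the public part $h$ and the record of $i$'s own outcomes. Since a successful player plays $R$ forever by assumption in our notational convention, the only private histories at which $i$ is active and the continuation is non-trivial are those in which $i$ has been unsuccessful so far; at those, the belief is determined by the pair $(p^i(h), q^i(h))$ as defined just before the statement, so it suffices to show $p^i(h)$ and $q^i(h)$ are uniquely determined for every public history $h$, and then that the full distribution on consistent full histories is determined by these together with $\sigma$.

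Next I would split public histories into two classes. Say $h$ is \emph{reached on path} (under $\sigma$, conditional on all experiments of both players failing) if $\sigma$ assigns positive probability to the sequence of choices in $h$ when both players are unsuccessful throughout. For such $h$, I would argue that $p^i(h)$ and $q^i(h)$ are determined by Bayes' rule applied to $\sigma$: the numerator and denominator are sums over full histories compatible with $h$, each weighted by prior times the product of $\sigma$-probabilities and outcome probabilities $\lambda,1-\lambda$, and these are all finite positive expressions. This is exactly the content of consistency, and it leaves no freedom. For $h$ \emph{not} reached on path, the reasonable-belief definition kicks in. The key structural claim I would need is: every public history $h$ is either reached on path, or can be written in one of the two forms appearing in the definition — $h = h_0 S$ with $n_{\mathrm e}(h_0)$ experiments before, or $h = h_0 S a R$ with $a \in \{S,R\}$ — in a way that determines $p^i(h)$; and moreover, once $p^i$ is fixed at the first off-path node along $h$, consistency propagates it uniquely to all descendants. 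So the argument is an induction on the length of $h$: walk along $h$ from the root; as long as we stay on path, beliefs are forced by Bayes; at the first deviation from the support of $\sigma$, the deviation is either a choice of $S$ (covered by the first bullet, possibly in tandem with the second bullet if the next move is $R$) or a choice of $R$ at a node where $\sigma$ played $S$ — and in the latter case I claim the immediately preceding structure forces $h$ to be of the form $h_0 S a R$, because a node where $\sigma$ plays $S$ deterministically for the active player, reached on path, is itself (by the first reasonable condition or by an earlier application of the induction) a node after which a failed run is inferred, so the second bullet applies. After the first off-path node, the subsequent beliefs $p^i$ and $q^i$ are obtained by ordinary Bayesian updating against $\sigma$ (now treating the current node's belief as the new prior), which is again unique.

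I would then verify existence alongside uniqueness — the lemma only claims uniqueness, but the natural way to prove it is to exhibit the candidate $\pi$ (defined by the above recipe: Bayes where possible, the reasonable formulas at the first off-path move, Bayes afterward) and check it is both consistent with $\sigma$ and reasonable, and that any other reasonable consistent system must coincide with it node by node by the same induction. The main obstacle I anticipate is the \textbf{overlap/compatibility check}: one must make sure the two bullets in the definition never conflict with each other or with Bayesian consistency at a history that is simultaneously on path and of the form $h_0 S$ — i.e. that when $h_0 S$ is actually reached with positive probability under $\sigma$, Bayes already yields $p^i(h_0 S) = \phi^{n_{\mathrm e}(h_0)}(p_0)$, so the reasonable condition is not an extra constraint there but an automatic consequence (this uses that along the play path a player who chose $S$ after experiments must have been unsuccessful, given our convention that successful players choose $R$). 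A secondary subtlety is handling histories that involve several successive off-path moves of the ``$S$ then $R$'' pattern, to confirm the second bullet is stated for enough cases (it covers $h_0 S a R$ for both $a$, which is what is needed: one non-experimenting move of the opponent followed by a resumption). Once these compatibility points are nailed down, the induction is routine and the uniqueness follows.
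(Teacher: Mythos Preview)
Your induction has the right spirit, but the on-path/off-path split is the wrong organizing dichotomy and your key structural claim fails. Take any $\sigma$ with $\sigma(R\mid\emptyset)=0$: the one-move history $R$ is off-path yet fits neither reasonability template (it is too short for $h_0 S a R$ and does not end in $S$). Your justification for the $R$-deviation case also conflates the node at which the deviation occurs with the $S$ in the template $h_0 S a R$ --- that $S$ must be the deviator's \emph{previous} move, two half-periods earlier, not the current prescription of $\sigma$. And ``ordinary Bayesian updating after the first off-path node'' breaks down as soon as a later move is again zero-probability under $\sigma$; you flag this as a ``secondary subtlety'' but do not resolve it, and resolving it would force you to re-invoke the reasonability clauses recursively, at which point the on-/off-path bookkeeping no longer helps.

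The paper drops the on-/off-path distinction and inducts directly on $|h|$. The observation you are missing is this: whenever the active player $i$'s last move along $h$ was $R$, player $i$ may already be successful from that experiment, and a successful player always chooses $R$; hence the move $R$ at $h$ has positive probability from $j$'s viewpoint (conditional on $\theta=G$), and Bayes' rule for $p^j(hR)$ is well-defined \emph{regardless of whether $hR$ is on-path}. The exhaustive case split for $p^j(ha)$ is therefore: $a=S$ (first reasonability bullet); $a=R$ and $i$'s previous move was $S$ (second bullet); $a=R$ and $i$'s previous move was $R$ (Bayes, always applicable); and $a=R$ with $i$ having no prior move (consistency forces no update, since $i$ has no private information). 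Each branch pins down $p^j(ha)$ uniquely from $p^j(h)$ and $\sigma$. For existence, ``exhibit the candidate and check it is consistent'' begs the question: consistency means $\pi$ arises as a limit of beliefs from completely mixed profiles, so you must actually construct such a sequence. The paper does this by first perturbing $\sigma$ so that the \emph{unsuccessful} type has full support at every $h$ (this makes every public history on-path and the induced beliefs automatically reasonable), then perturbing again so the successful type also trembles, and finishing with a diagonal extraction.
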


\begin{proof}
Let $\sigma$ be given. We prove existence, then uniqueness.

\paragraph{Existence.} Let $(\sigma_n)_{n \in \dN}$ be a sequence of strategy profiles that converges to $\sigma$, with the property that the mixed action $\sigma_n(h)\in \Delta(\{S,R\})$ has  full support for each $h\in H$. According to $\sigma_n$, players always play  both arms with strictly positive probability unless  successful. Given $\sigma_n$, all public histories are on-path, hence the beliefs $\pi^i_n(\cdot)$ induced by $\sigma_n$ are uniquely defined and reasonable, and so any 
accumulation
point $\pi$ of $(\pi_n)_{n\in \dN}$ is reasonable as well. 
Note however that $\sigma_n$ is not completely mixed, since $\sigma_n$ repeats choosing $R$ once successful. 
Given $n$, pick next a sequence $(\tau_n^m)_{m \in \dN}$ of completely mixed profiles --- both arms are chosen with positive probability, even after a success --- such that $\lim_{m\to +\infty} \tau^m_n= \sigma_n$ for each $n$, in the product topology. Using a standard diagonal extraction argument, one can construct a sequence $(\tau_n^{m_n})_{n\in \dN}$ that converges to $\sigma$,  such that the belief system $(\pi_n^{m_n})_{n\in \dN}$ induced by $\tau_n^{m_n}$ converges to $\pi$.

\paragraph{Uniqueness.} We prove uniqueness by induction.\footnote{
We provide the proof for first-order beliefs $p^i(h)$. It extends easily to belief systems $\pi^i$, at the cost of extra notation.} Let $p^1(h)$ and $p^2(h)$ be given, and note that $p^1(h),p^2(h)>0$.\footnote{No finite amount of evidence can rule out $\theta= G$.} We show that $p^i(ha)$ is uniquely defined by $\sigma$ and the reasonable criterion, for each $i$ and $a\in \{R,S\}$. For concreteness, assume that player 1 is active at $h$. 
Since player~1 observes the outcome of his own experiments,
 one must have $p^1(hS)= p^1(h)$ and $p^1(hR)= \phi(p^1(h))$. 
On the other hand, 
 $p^2(hS)= \phi^{n_{\mathrm e}(h)}(p_0)$, since $(p^1,p^2)$ is reasonable. 
Consider finally $p^2(hR)$. 
If the last move of player 1 along $h$ is $S$, the reasonability criterion implies that $p^2(hR)= \phi^{n_{\mathrm e}(h)}(p_0)$. 
Assume  instead that the last move of player 1 is $R$. Since $\sigma$ dictates $R$ if this last experiment is a success, $p^2(hR)$ is uniquely deduced from $p^2(h)$ and $\sigma$ using Bayes rule.
\end{proof}

\begin{remark}\label{remark pure beliefs}
Let $\sigma$ be pure. The proof of Lemma \ref{lemm unique beliefs} shows that for each history $h$ and each choice $a\in \{S,R\}$, the reasonable beliefs associated with $\sigma$ are such that $p^i(ha)=1$ or $p^i(ha)\leq p^i(h)$. That is, along any sequence of choices, on- or off-path, the players become gradually more pessimistic until, possibly, they become convinced that the other player was successful. This pattern may repeat over time (off-path).

The proof also implies that for each history $h$ and each player $i$, one has $p^i(h)= 1$ or $p^i(h)= \phi^n(p_0)$, for some $n\geq n_{\mathrm e}^i(h)$, where $n_{\mathrm e}^i(h)$ is the number of experiments of $i$ along $h$.
\end{remark}
\begin{proposition}
\label{prop:reasonable:exist}
    Reasonable $\mathrm{SE}$ always exists.
\end{proposition}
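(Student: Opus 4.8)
The plan is to obtain a reasonable $\mathrm{SE}$ as a limit of sequential equilibria of perturbed games, in the spirit of Kreps and Wilson, with two adaptations: the horizon is infinite, and --- crucially --- the trembles are chosen to decay fast enough in the length of the history that the limiting beliefs are not merely consistent but \emph{reasonable}. By Lemma~\ref{lemm unique beliefs}, once a strategy profile $\sigma$ is fixed the reasonable consistent beliefs $\pi(\sigma)$ are uniquely determined, so it suffices to produce a $\sigma$ that is sequentially rational given $\pi(\sigma)$.

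For $\varepsilon\in(0,1)$ I would consider the perturbed game $\Gamma_\varepsilon$ in which, at every public history $h$ of length $\ell$, a not-yet-successful active player must assign probability at least $\varepsilon^{2^{\ell}}$ to each arm, while a successful active player must assign probability at least $1-\varepsilon^{2^{\ell+1}}$ to $R$ (the paper's convention already allows specifying behavior after a success). The resulting constrained strategy space is a compact convex subset of $\prod_h\Delta(\{S,R\})$ for the product topology (Tychonoff), and since per-period payoffs are bounded and discounted the payoff functions are continuous at infinity; a standard fixed-point argument in the agent normal form then yields a behavior profile $\sigma_\varepsilon$ at which every active player best replies to his beliefs, i.e.\ a sequential equilibrium of $\Gamma_\varepsilon$. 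Since $\sigma_\varepsilon$ is completely mixed --- both arms carry positive probability at every history, even after a success --- all public histories are reached and the induced belief system $\pi_\varepsilon$ is well defined and Bayesian.

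I would then let $\varepsilon=\varepsilon_n\to0$ and extract a subsequence along which $\sigma_{\varepsilon_n}\to\sigma$ in the product topology and $\pi_{\varepsilon_n}\to\pi$, and verify three things. \emph{Consistency}: the $\sigma_{\varepsilon_n}$ are completely mixed and converge to $\sigma$ with $\pi_{\varepsilon_n}$ the induced Bayesian beliefs, so $\pi$ is consistent with $\sigma$. \emph{Reasonableness}: at a node $h$ of length $\ell$, a successful mover plays $S$ with probability at most $\varepsilon_n^{2^{\ell+1}}$, whereas the probability of reaching $h$ under $\theta=B$ is at least $(1-p_0)\varepsilon_n^{2^{\ell}-1}$ and a not-yet-successful mover plays $S$ there with probability at least $\varepsilon_n^{2^{\ell}}$; a one-line Bayes computation then gives that, conditional on an observed choice of $S$, the probability that the mover had been successful is at most $\varepsilon_n/(1-p_0)\to0$. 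Together with the fact that $R$ is played after a success, this forces $\pi$ to satisfy the defining conditions of a reasonable belief system, so $\pi=\pi(\sigma)$. \emph{Sequential rationality}: at every history $\sigma_{\varepsilon_n}$ best replies among the perturbed actions given $\pi_{\varepsilon_n}$; continuation payoffs depend continuously on $(\sigma,\pi)$ and the perturbations vanish, and the post-success constraint never conflicts with optimality because $R$ is dominant there, so $\sigma$ has no profitable one-shot deviation at any history given $\pi$ and is therefore sequentially rational by the one-shot deviation principle (valid since the game is continuous at infinity). Hence $(\sigma,\pi)$ is a reasonable $\mathrm{SE}$.

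The step I expect to require the most care is reasonableness: one must check that the chosen rate of decay genuinely dominates the rate at which off-path histories become unlikely, uniformly in the length of the history --- this is exactly what the super-exponential choice $\varepsilon^{2^\ell}$ secures, via $\sum_{k<\ell}2^{k}=2^{\ell}-1<2^{\ell+1}$ --- and one must also establish the refined clause governing $p^i(hSaR)$, by showing that, after an observed $S$ has revealed the mover's failure, his subsequent choice of $R$ is non-informative in the limit. A more hands-on alternative would dispense with perturbations: follow on path a sequence such as $h^*_\infty$, have a player read any off-path choice of $R$ as conclusive evidence that the mover succeeded (which is compatible with reasonableness, since that notion restricts beliefs only after $S$) and react by playing $R$ forever, and define the continuation after an off-path choice of $S$ recursively, as a reasonable $\mathrm{SE}$ of the game restarted from the updated, lower prior; this recursion terminates because, by the dominance bound~\eqref{eq tildep} and Remark~\ref{remark pure beliefs}, beliefs take values in the finite set $\{1\}\cup\{\phi^n(p_0):0\le n\le\widetilde N\}$ and any player whose belief has fallen below $\widetilde p$ is forced to the safe arm, at the price of a careful bookkeeping of the off-path continuation games and a direct check that this milder reaction to $S$ still deters premature deviations to $S$.
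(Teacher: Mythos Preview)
Your perturbation route is plausible in outline, but the paper's proof is a one-line trick that sidesteps all of the work you are proposing. The paper simply considers the variant game in which a successful player no longer has access to $S$; any sequential equilibrium $(\sigma,\pi)$ of that variant is automatically a reasonable SE of the original game. Sequential rationality carries over because $R$ is dominant after a success, and reasonableness is immediate: since a successful type \emph{cannot} play $S$ in the variant, any observed $S$ is unambiguously attributed to the unsuccessful type, and that player's next choice of $R$ is mechanically non-informative. No calibration of tremble rates is needed.

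Your approach, by contrast, has a genuine gap precisely at the point you flag as delicate --- the clause governing $p^i(hSaR)$. With your rates, after $j$ plays $S$ at a history of length $\ell$ the posterior that $j$ was successful is $O(\varepsilon)$, as you compute. But at $hSa$ the \emph{unsuccessful} type's minimum probability of playing $R$ is only $\varepsilon^{2^{\ell+2}}$, while the successful type (who reached this node only by trembling to $S$) plays $R$ with probability close to $1$. If in the limit $\sigma(R\mid hSa)=0$, Bayes' rule gives a posterior on ``$j$ successful'' of order $\varepsilon/(\varepsilon+\varepsilon^{2^{\ell+2}})\to 1$, not $0$; the limit belief then jumps to $p^i(hSaR)=1$, violating reasonableness. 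The super-exponential schedule secures the \emph{first} clause but works against you for the second: you would need the successful type's tremble to $S$ to be dominated not merely by the path probability but by every \emph{subsequent} minimum tremble of the unsuccessful type, and $\varepsilon^{2^{\ell+1}}$ does not achieve that. This is fixable (make the successful type's tremble decay much faster still, or --- cleaner --- forbid it entirely, which is exactly the paper's device), but as written the argument does not close. Your recursive ``hands-on alternative'' is closer in spirit to the paper's idea but still more elaborate than necessary; the action-space restriction delivers existence in a single stroke.
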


\begin{proof}
    Consider a variant of the game in which a player has no longer access to the safe arm $S$ once successful. Any sequential equilibrium $(\sigma, \pi)$ of this variant is a \emph{reasonable} sequential equilibrium of the initial game.  
\end{proof}
\medskip

As we show in Section \ref{sec nopure},  \emph{pure} reasonable SE may  fail to exist.

\section{(Dis-)Encouragement effects and examples}\label{sec: encouragement}

\subsection{The encouragement cut-off}
An extra experiment of $i$ may lead to additional \emph{future} experiments of $j$. The marginal value for $i$ of this additional experiment by $j$ is ambiguous. 
Although it has an obvious positive externality on $i$ (the cost is borne by $j$, and the result is eventually known to $i$),  it may \emph{delay}   the point at which the outcomes of \emph{past} experiments of $j$ become known to $i$. There is no single cut-off that  captures all these trade-offs. 
We  discuss two `thought experiments' that lead to two (families of) cut-offs. 

\subsubsection{A first thought experiment}\label{sec first thought}
Let $h$ be a (public) history that ends with $S$. Assume that the active player, $i$, expects that choosing $R$ will induce $j$ to experiment once more, but that $j$ will pick the safe arm in case he, $i$, chooses $S$. Assume  moreover that later choices are $S$.\footnote{Except of course if one experiment, past or future, is successful.} 

Choosing $R$ yields a flow payoff of $(1-\delta)(p^i(h) \lambda m -c)$, and a per-stage continuation payoff of $g$ if one of the two experiments is successful. If $i$'s experiment is successful, this continuation payoff is received from the next period onwards;  if $j$'s experiment is successful but  $i$'s experiment is not, it is received with a delay of one period, since it is the fact that $j$ \emph{repeats} $R$ that will convey the good news to $i$. Consequently, $i$'s continuation payoff is  $ \delta p^i(h) \left( \lambda  + \delta (1-\lambda)\lambda \right)\times g$. 

Facing such a situation, $i$ prefers $R$ if and only if $p^i(h)\geq \widehat{p}$, where
\begin{equation}
\label{def:hatp}
\widehat{p}:=\frac{c(1-\delta)}{c(1-\delta)+g\bigl(1-\delta+\lambda \delta (1+\delta-\lambda \delta)\bigr)}<p^*.
\end{equation}
The cut-off $\widehat{p}$ reflects the positive externality of $j$'s experiment on $i$: 
if $p^i(h)\in (\widehat p,p^*)$, 
player $i$ is  willing to experiment further only if this induces $j$ to also experiment. 
The cut-off $\widehat p$, which plays a leading role in the analysis,
is called  the \textit{encouragement cut-off}.


There is a variant of the first thought experiment that incorporates the negative externality created when postponing the disclosure of information. In that variant, player $i$ is facing the same decision at $h$ as above, except that player $j$ has just experimented $n$ times in a row along $h$.\footnote{We also assume that the mere fact that player $j$ was experimenting is uninformative in itself, that is, that $j$'s strategy was dictating to experiment given previous choices. This is equivalent to assuming that $q^i(h)= 1-(1-\lambda)^n$.} 
Whether player $j$ was successful earlier will be inferred by player $i$, immediately if he chooses the safe arm, but with a one-period delay if he chooses the risky arm. 
This lowers the attractiveness of the risky arm relative to the first thought experiment, and choosing $R$ is optimal only if 
\[p^i(h) \geq  \widehat p_n:= \frac{c(1-\delta)}{c(1-\delta)+g((1-\delta)(1-\delta+\lambda \delta) + \delta(1- \delta(1-\lambda)^2)(1-\lambda)^n)}.\]

\medskip

\subsubsection{A second thought experiment}\label{sec second thought}

The prospect of learning the private information held by the other player lowers the option value of one's own experimentation. The alternative cut-offs we introduce below focus on this effect. Fix a history $h$, with active player $i$. Assume that player $j$ just experimented exactly $n$ times consecutively along $h$\footnote{As in the definition of $\widehat p_n$, we assume that these $n$ experiments are uninformative for $i$.} and is about to switch (forever) to $S$ if unsuccessful, independently of $i$'s choices. At $h$, the informational value of choosing $R$ is reduced for $i$, since this experiment is irrelevant in the event where $j$ was successful. Choosing $R$ at $h$ is optimal only if 
\begin{equation}\label{def:p*n} p^i(h) \geq  p^*_n:= \frac{ c(1-\delta)}{c(1-\delta) + g(1-\delta + \delta \lambda (1-\lambda)^n)}.\end{equation} \medskip

In both cases, the higher the stock $n$ of past experiments of $j$, the lower the attractiveness of choosing $R$: both sequences $(\widehat p_n)_{n\in \dN}$ and $(p^*_n)_{n\in \dN}$ are increasing with $n$, with $\widehat p_0=\widehat p$ and $p^*_0= p^*$.

\begin{remark}
In this second thought experiment,  the informational value of $i$'s experiment is negligible in the limit $n\to +\infty$, since $j$'s next choice will accurately reflect $\theta$. Thus, for very large $n$, $i$ is not willing to experiment unless the  net flow payoff from choosing $R$ is non-negative. That is, $\lim_{n\to \infty} p^*_n= p_{\mathrm{myop}}$, as letting $n\to +\infty$ in (\ref{def:p*n}) confirms.

By contrast, in the first thought experiment, the informational value of $i$'s experiment becomes \emph{negative} in the limit 
$n\to +\infty$. The reason is that choosing $R$ rather than $S$ will \emph{delay} the disclosure of the evidence held by $j$. Accordingly, $\lim_{n\to +\infty} \widehat p_n > p_{\mathrm{myop}}$. 
This implies that $\widehat p_n> p^*_n$ for large $n$, while $\widehat p_0 < p^*_0$.
\end{remark}

\begin{remark}
While $p^*_{n+1}>p^*_n$, 
simple algebraic manipulations show that
$\phi(p^*_{n+1})< p^*_n$. For $n=0$, the intuition  for this inequality is as follows. Consider the second thought experiment, and let  $p^i(h)< p^*_1$. At $h$, the optimal choice of $i$  is to choose $S$ and to learn the outcome of $j$'s most recent experiment. We claim that, if that experiment was not successful, the optimal choice of $i$ in the next period is to choose $S$. Otherwise, the optimal choice of $i$ in the next period would be $R$, irrespective of $j$'s experience. But if the optimal continuation payoff from choosing $S$ then $R$ after $h$ is non-negative, the optimal continuation payoff from choosing $R$ then $S$ would be even higher --- a contradiction.\footnote{This holds because changing the timing of the experiment does not affect $j$'s behavior.} Thus, $i$'s optimal choice at $h\cdot SS$ is $S$, which implies that $p^i(h\cdot SS)<p^*$. Since $p^i(h\cdot SS)= \phi(p^i(h))$, this implies that $\phi(p^*_1)<p^*$, as desired. 
\end{remark}

\subsection{Examples}\label{sec examples}
The structure of reasonable SEs is sensitive to the various parameters. 
We  provide three illustrations of this dependence. The first example  highlights the 
  role of $\widehat p$ in shaping the equilibrium level of experimentation. 
The second example  illustrates how to construct a pure reasonable SE for a class of parameter values.
  The last example illustrates why pure reasonable SEs may fail to exist due to a conflict between $p^*$ and $\widehat p$.

\subsubsection{The role of $\widehat p$}\label{sec: ex1}

We here show that the encouragement effect may lead players to experiment \emph{strictly} more when outcomes are private.
We assume $p_0\in (\max\{p^*, \phi^{-1}(\widehat p)\}, \phi^{-1}(p^*))$.
Since $\widehat p < p^*$, this interval is nonempty.
Since $\phi(p_0)<p^*< p_0$, there is exactly one experiment when outcomes are public.

We show that $N_{\mathrm e}\geq 2$ in any pure reasonable SE $(\sigma,\pi)$ of the game with private outcomes (assuming such an equilibrium exists).  We argue by contradiction, and assume that $N_{\mathrm e}=1$ when $\theta= B$, for some pure reasonable SE $(\sigma,\pi)$. It is w.l.o.g. to assume that $\sigma(\emptyset)=R$. 
Since $N_{\mathrm e}=1$, the equilibrium play is $RS\!\cdot\! S^\infty$, and hence $\sigma(RS)=S$: player 1 chooses $S$ in period 2 (if unsuccessful in period 1). 

Since on the other hand player 1 chooses $R$ in period 2 if successful in period 1 this implies that in equilibrium,
player 2 can deduce the outcome of player 1's  first experiment from player 1's choice in period 2. Hence  $p^2(RS\!\cdot\! R)=1$, implying $\sigma(RS\!\cdot\! R)=R$.

Therefore, the history $RS\!\cdot\! RR$ is on-path, but the history $\overline h= RS\!\cdot\! RR\!\cdot\! S$ is not.
At $\overline h$, player 1 has experimented twice, and has no information on the outcome of player 2's experiment, implying $p^1(\overline h)= \phi^2(p_0)$.
On the other hand,  $p^2(\overline h)=\phi^3(p_0)$ since beliefs are reasonable. As we show in the appendix (see Lemma~\ref{prop A2} in Section \ref{sec proof nopure}), the equilibrium path following $\overline h$ must be $S^\infty$.









Therefore, at $h= RS$, player 1 is facing a situation that is identical to the first thought experiment (Section \ref{sec first thought}).
Since $p^1(h)= \phi(p_0) > \widehat p$, choosing $R$ rather then $S= \sigma(RS)$ is optimal --- a contradiction.

\subsubsection{Constructing pure, reasonable $\mathrm{SE}$}\label{sec: ex2}

We define here a simple, pure strategy profile $\sigma$ which is a reasonable SE for some parameter values. According to  $\sigma$, the active  player experiments if no one has  experimented so far, or if the other player has just chosen to experiment and was the first one to experiment. Apart from such histories, the active player experiments if and only if he is convinced that the other player was successful. 
That is, for each history $h$ with active player $i$, we set  $\sigma(h)= R$ if either (i) $h=\emptyset$, (ii) $h= S^k\!\cdot\! a$ for some $k\geq 0$ and $a\in \{S,R\}$, or (iii) $p^i(h)=1$ holds.
There is no circularity in this definition, because $p^i(h)$ depends only on the definition of $\sigma$ at shorter histories.

Finding a characterization of the histories $h$ such that $\sigma(h)=R$ without using beliefs is tricky, because the interpretation by player $i$ of $j$'s previous choices hinges on how player $j$ previously interpreted earlier choices of $i$.
Figure 1 shows the equilibrium choices (highlighted) and the beliefs of the active player in the first periods. Below, we illustrate  the computation of beliefs at two representative nodes.

\begin{center}
\begin{tikzpicture}
\node[draw,cyan] (1) at (0,-8/3) {P1, $p_0$};
\node[draw,cyan] (3) at (4,0) {P1, $\phi(p_0)$};
\node[draw,cyan] (5a) at (8,-8/3) {P1, $\phi^2(p_0)$};
\node[draw,cyan] (5b) at (8,1/3) {P1, $1$};
\node[draw,cyan] (5c) at (8,8/3) {P1, $\phi^2(p_0)$};
\node[draw,cyan] (7a) at (14,-7/3) {P1, $\phi^3(p_0)$};
\node[draw,cyan] (7b) at (14,-2/3) {P1, $\phi^3(p_0)$};
\node[draw,cyan] (7c) at (14,1/3) {P1, $\phi^4(p_0)$};
\node[draw,cyan] (7d) at (14,7/3) {P1, $1$};
\node[draw,magenta] (2) at (2,-4/3) {P2, $p_0$};
\node[draw,magenta] (4a) at (6,-4/3) {P2, $\phi^2(p_0)$};
\node[draw,magenta] (4b) at (6,4/3) {P2, $1$};
\node[draw,magenta] (6a) at (11,-4/3) {P2, $\phi^2(p_0)$};
\node[draw,magenta] (6b) at (11,-1/3) {P2,$\phi^3(p_0)$};
\node[draw,magenta] (6c) at (11,4/3) {P2, $\phi^3(p_0)$};

\draw (0.8,-2) node[above]{$R$};
\draw (2.8,-2/3) node[above]{$R$};
\draw (4.8,2/3) node[above]{$R$};
\draw (4.7,-2/3) node[below]{$S$};
\draw (6.7,-2) node[below]{$S$};
\draw (6.7,2) node[above]{$R$};
\draw (6.7,-0.6) node[above]{$R$};
\draw (9.3,-0.6) node[above]{$S$};
\draw (9.3,-2) node[above]{$R$};
\draw (9.3,0.8) node[above]{$R$};
\draw (12.5,-1.9) node[below]{$S$};
\draw (12.5,1.9) node[above]{$R$};
\draw (12.5,-0.4) node[below]{$R$};
\draw (12.5,0.8) node[below]{$S$};

\draw[thick](1) -- (2);
\draw[thick](2) -- (3);
\draw[thick](3) -- (4a);
\draw[thick](4a) -- (5a);
\draw[thick](4b) -- (5c);
\draw[thick](5b) -- (6c);
\draw[thick](6a) -- (7a);
\draw[thick](6c) -- (7c);

\draw[color=gray!40] (3) -- (4b);
\draw[color=gray!40] (4a) -- (5b);
\draw[color=gray!40] (5b) -- (6b);
\draw[color=gray!40] (5a) -- (6a);
\draw[color=gray!40] (6a) -- (7b);
\draw[color=gray!40] (6c) -- (7d);
\end{tikzpicture}

Figure 1: Choices and beliefs.
\end{center}

\smallskip\noindent\textit{Node 1:} $h= RR\!\cdot\! R$.
Since 
$\sigma(RR)= S$, the (on-path) history $h$ occurs if and only if player 1's first experiment is a success. Hence, $p^2(h)=1$ and $\sigma(h)=R$.

\smallskip\noindent\textit{Node 2:  $h= RS\!\cdot\! SR\!\cdot\! RR$. } This case is more involved.
Along $h$, player 1's choice of $S$ in period 2 is evidence that player 1's experiment in period 1 failed, hence player 1's experiment in period 3 following $RS\cdot SR$ is uninformative, so that $p^2(h)= \phi^3(p_0)$. 
By contrast, while player 2's choice of $R$ in period 2 is uninformative,  the \emph{second} experiment of player 2  in period 3 is evidence that the first was a success, since player 2 would have played $S$ otherwise (because  $\sigma(RS\!\cdot\! SR\!\cdot\! R)= S$). Hence,  $p^1(h)=1$ and $\sigma(h)=R$.

\begin{proposition}\label{ex pure reasonable SE}  If $p_0\in [p^*_1, \phi^{-1}(\widehat{p})]$, then  $\sigma$ is a   reasonable $\mathrm{SE}$.
\end{proposition}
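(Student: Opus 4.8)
The plan is to verify the two requirements of a reasonable SE for the profile $\sigma$ defined in Section~\ref{sec: ex2}: that the associated beliefs $\pi$ are reasonable (which is automatic, once we check $\sigma$ is consistent with reasonable beliefs — this is guaranteed by Lemma~\ref{lemm unique beliefs}), and that $\sigma$ is sequentially rational given the unique reasonable belief system. The substantive work is entirely in the sequential-rationality check, which I would organize by classifying public histories $h$ according to the value of $p^i(h)$ of the active player $i$, using Remark~\ref{remark pure beliefs}: either $p^i(h)=1$, or $p^i(h)=\phi^n(p_0)$ for some $n\geq n_{\mathrm e}^i(h)$. The hypothesis $p_0\in[p^*_1,\phi^{-1}(\widehat p)]$ is what pins down the admissible range: $p^*_1\le p_0$ controls how willing a player is to experiment when the other is about to quit, and $p_0\le\phi^{-1}(\widehat p)$, i.e. $\phi(p_0)\le\widehat p$, ensures that after just one round of failures nobody has enough incentive to keep the experimentation going.

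First I would dispose of the easy cases. When $p^i(h)=1$ the player is (subjectively) certain $\theta=G$, so choosing $R$ forever is strictly dominant and $\sigma(h)=R$ is optimal; I also need that $\sigma$ does prescribe $R$ at every continuation of such a history, which follows because once $p^i=1$ Bayesian updating along $R$ keeps it at $1$, and the belief of the \emph{other} player at such histories is handled by the reasonability rules. Next, the histories $h=\emptyset$ and $h=S^k\cdot a$: here the active player's belief is still $p_0\ge p^*_1\ge \widetilde p$, and — crucially — the other player's choice has conveyed nothing (either nobody has moved, or the only experiment so far is the immediately preceding one, which under $\sigma$ is uninformative). So the active player is in essentially the one-player situation of Section~\ref{section one-player} at belief $p_0$, and since he will get, for free, the outcome of the other player's forthcoming experiment, experimenting is optimal; this is where I would invoke a thought-experiment-style computation (the second thought experiment with $n=0$, giving cut-off $p^*=p^*_0\le p^*_1\le p_0$) to confirm $\sigma(h)=R$ beats $\sigma(h)=S$.

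The main obstacle is the remaining histories — those $h$ with $p^i(h)=\phi^n(p_0)$, $n\ge1$, not of the special forms above — where $\sigma$ prescribes $S$, and I must show no player wants to deviate to $R$. The key structural fact, which I would establish first, is a lemma in the spirit of Lemma~\ref{prop A2}: from any such history the equilibrium continuation play is $S^\infty$ (nobody experiments again unless a hidden success surfaces). Granting this, a deviation to $R$ by the active player $i$ yields flow payoff $(1-\delta)(\phi^n(p_0)\lambda m-c)$ plus a continuation worth at most $\delta\,\phi^n(p_0)(\lambda+\delta(1-\lambda)\lambda)g$ — exactly the first-thought-experiment calculation, because $i$'s deviation can at best trigger one further experiment by $j$ (after which, by the $S^\infty$ lemma again, everything stops), and $j$'s success, if any, is revealed with a one-period delay. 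That total is non-positive precisely when $\phi^n(p_0)\le\widehat p$, and since $n\ge1$ and $\phi$ is increasing, $\phi^n(p_0)\le\phi(p_0)\le\widehat p$ by the hypothesis on $p_0$. One subtlety to handle carefully: whether a deviation to $R$ by $i$ actually induces $j$ to experiment depends on $j$'s belief after seeing the off-path move — if $i$ had experimented before, $j$ may read $R$ as a concealed success ($p^j=1$) and then $R$-forever, whereas if $i$ had never experimented, $j$'s belief stays put; in either branch I would check that the first-thought-experiment bound is still an upper bound on $i$'s deviation payoff, so the inequality $\phi(p_0)\le\widehat p$ suffices. Finally I would confirm that the player who is \emph{not} active at such an $h$, and the deviator himself at subsequent nodes, also have no profitable deviation — these reduce to the same cut-off comparison or to the trivial $p^i=1$ case.
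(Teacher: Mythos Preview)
Your overall plan---classify histories by whether $p^i(h)=1$ or $p^i(h)=\phi^n(p_0)$ with $n\ge1$, and handle the latter uniformly via the first thought experiment bound---is sound and close to the paper's case analysis. For the histories where $\sigma(h)=S$, your idea of bounding the deviation payoff by the first-thought-experiment payoff (cutoff $\widehat p$) works: in the cases where the deviation does \emph{not} trigger an extra experiment by $j$ (e.g.\ $h=\overline h\cdot SS$ or $h=\overline h\cdot SSR$), the deviation payoff is strictly smaller than the first-thought-experiment payoff, so your bound still covers them. This is a legitimate shortcut compared to the paper's explicit case split.

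There is, however, a genuine gap at the histories $h=S^kR$. You lump these together with $h=S^k$ and invoke ``the second thought experiment with $n=0$, giving cut-off $p^*=p^*_0$.'' That is wrong for $h=S^kR$: here player $j$ has \emph{already} experimented once (the terminal $R$), and under $\sigma$ will not experiment again ($\sigma(S^kRR)=S$). So the active player is comparing $RS^\infty$ against $S^\infty$, where in both branches $j$'s single past experiment is revealed next period. This is precisely the second thought experiment with $n=1$, whose cut-off is $p^*_1$, not $p^*$. Your chain $p^*_0\le p^*_1\le p_0$ happens to use the right hypothesis, but your stated reasoning would only require $p_0\ge p^*$; with that weaker condition the claim is false, since for $p_0\in[p^*,p^*_1)$ player $2$ strictly prefers to deviate to $S$ at $h=R$ (he would rather wait one period and learn the outcome of player $1$'s experiment for free). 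This is exactly the observation the paper makes after the statement, and it is the only place where the lower bound $p_0\ge p^*_1$ is binding.

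A smaller point: at $h=S^k$ the deviation to $S$ does not lead to $S^\infty$ but to $SR\cdot RS\cdot S^\infty$ (since $\sigma(S^{k+1})=R$ and $\sigma(S^{k+1}R)=R$), so the comparison is a pure timing one---experiment now versus one period later---and the relevant condition is $p_0\ge p^*$, which does hold. Your description (``forthcoming experiment'') fits this case but not $h=S^kR$; separating the two is what makes the role of $p^*_1$ visible.
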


For 
$p_0\in [p^*_1, \phi^{-1}(\widehat{p})]$,
we have
$\phi(p_0)\leq \widehat p< p^*$. 
Hence $N^*=1$, while $N_{\mathrm{e}}=2$. We note that $[p^*_1, \phi^{-1}(\widehat{p})]$ may be empty or not, depending on parameter values.

\medskip

The proof of Proposition \ref{ex pure reasonable SE} is in Section \ref{sec proof of pure reasonable SE}. 
We observe here  that the  inequality $p_0\geq p^*_1$ ensures that player 2 finds it optimal to choose $\sigma(R)=R$ after $R$. The other restriction  $p_0<\phi^{-1}(\widehat p)$,  implies that $p^1(RS)<\widehat p$, so that  player 1 does not benefit from pretending that his first experiment was successful, even if that would lead player 2 to experiment once more. 

\begin{remark}
\label{remark:6}
We also construct in the supplementary material a pure reasonable $\mathrm{SE}$  for the case where $p_0\in [p^*,\min(p^*_1,\phi^{-1}(\widehat p))]$. In that equilibrium, a player experiments only if no one has experimented  in the past, or if convinced that the other was successful.
\end{remark}


\subsubsection{On the non-existence of pure reasonable SEs}\label{sec nopure}

Unlike typical  signalling games,  our  strategic experimentation games always have a pure SE (Section \ref{sec pure SE}). 
Pure reasonable SEs need not exist.




\begin{proposition}\label{ex4}
If $\phi^{-1}(\widehat p)<p^*$, there is no pure reasonable $\mathrm{SE}$ for 
$p_0\in (\phi^{-1}(\widehat p),p^*_1)$.
\end{proposition}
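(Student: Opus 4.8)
The plan is to argue by contradiction and to use, repeatedly and systematically, the ``fooling'' deviation of Section~\ref{sec: ex1}. Assume $(\sigma,\pi)$ is a pure reasonable SE. From $p_0>\phi^{-1}(\widehat p)$ we get $\phi(p_0)>\widehat p$, and since $p_0<p^*_1$ and $\phi(p^*_1)<p^*_0=p^*$ (the case $n=0$ of $\phi(p^*_{n+1})<p^*_n$) we get $\phi(p_0)<p^*$; hence $\phi(p_0)\in(\widehat p,p^*)$, and also $p_0>\widehat p>\widetilde p$, so neither of the first two experiments is ever strictly dominated. By Theorem~\ref{prop private} there is a last on-path experiment when $\theta=B$. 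The building block is the following: if $h$ is an on-path history whose active player $i$ has experimented at least once along $h$, with $\sigma(h)=S$ and with the continuation after $hS$ being $S^\infty$, and with $p^i(h)>\widehat p$, then $\sigma(h)=S$ is not optimal. Indeed, since $\sigma(h)=S$ only a successful $i$ picks $R$ at $h$, so $p^j(hR)=1$ and $\sigma(hR)=R$; after $j$'s induced experiment the first-order beliefs are iterates $\phi^n(p_0)<p^*$, so Lemma~\ref{prop A2} gives an $S^\infty$ continuation; thus deviating to $R$ at $h$ places $i$ exactly in the first thought experiment of Section~\ref{sec first thought} at belief $p^i(h)>\widehat p$, whose value is strictly positive, while $\sigma(h)=S$ is worth $0$ --- a contradiction.

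I then reduce to $\sigma(\emptyset)=R$ with $N_{\mathrm e}\ge1$. If $\sigma(\emptyset)=S$, reasonability gives $p^2(S)=p_0$, so the subgame after $S$ is isomorphic to the whole game with the players exchanged; iterating, either nobody experiments, or there is an on-path $S^k$ with $\sigma(S^k)=R$ at which one restarts. If $N_{\mathrm e}=0$, all on-path beliefs equal $p_0>\widehat p$; if $p_0>p^*$ this already contradicts Theorem~\ref{prop private}, and if $p_0\le p^*$ one exhibits a profitable deviation in which a player starts experimenting and, at the next node (whose belief is $p_0$ or $\phi(p_0)$, in either case $>\widehat p$), applies the building block. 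Hence assume $\sigma(\emptyset)=R$ and $N_{\mathrm e}\ge1$.

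Let $h^\circ$ be the on-path history just before the last experiment, with active player $i^\circ$, so $\sigma(h^\circ)=R$ and the continuation after $h^\circ R$ is $S^\infty$, and write $p^{i^\circ}(h^\circ)=\phi^k(p_0)$ for some $k\ge0$ (Remark~\ref{remark pure beliefs}). Suppose first that, when $i^\circ$ deviates to $S$ at $h^\circ$, the continuation after $h^\circ S$ is again $S^\infty$. Then $i^\circ$'s deviation is worth $0$, while his equilibrium continuation is merely the single-player ``one last experiment'' value at $\phi^k(p_0)$; optimality of $\sigma(h^\circ)=R$ forces $\phi^k(p_0)\ge p^*$, and since $\phi(p_0)<p^*$ this gives $k=0$ and $p_0\ge p^*$, so $i^\circ$'s last experiment is his \emph{first}. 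If in addition the other player $j^\circ$ never experiments on the path, then $N_{\mathrm e}=1$, and applying the building block at the node right after $i^\circ$'s experiment (where his belief is $\phi(p_0)>\widehat p$) gives a contradiction, exactly as in Section~\ref{sec: ex1}.

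What remains --- and this is where $p_0<p^*_1$ is essential, and where I expect the main difficulty --- are the cases in which $j^\circ$ does experiment on the path, or $i^\circ$'s deviation to $S$ at $h^\circ$ \emph{triggers} $j^\circ$ to experiment (so that $i^\circ$'s no-experiment value is strictly positive, making his equilibrium experiment harder to justify). In the first case, at $h^\circ R$ the active player $j^\circ$ has experimented, is asked to play $S$, and his belief is $\phi^n(p_0)$ for some $n\ge1$, with $n=1$ closing the argument via the building block. To exclude $n\ge2$ --- i.e.\ the possibility that $j^\circ$'s belief has already fallen below $\widehat p$ by the time he is asked to stop --- one backs up to $j^\circ$'s last experiment and examines the comparison he faces there: it is either ``standalone'' --- $i^\circ$ having already experimented at least once and being about to stop regardless, so the relevant cutoff is some $p^*_m$ with $m\ge1$, which exceeds $p_0$ since $p^*_m\ge p^*_1>p_0$, contradicting that $j^\circ$ experiments --- or ``encouraging'', in which case $j^\circ$'s experiment is worthwhile only because it induces a later experiment by $i^\circ$ (governed by some $\widehat p_m$), and one can then reopen a fooling deviation at that later experiment, or back up once more. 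Iterating this peel-off, which terminates because $N_{\mathrm e}<+\infty$, and tracking at each on-path experiment whether it is standalone (cutoff $p^*$ or $p^*_m$) or encouraging (cutoff $\widehat p$ or $\widehat p_m$), yields the contradiction. Carrying out this bookkeeping rigorously --- in particular establishing the correct dichotomy at an arbitrary on-path experiment, with the right cutoff for the right player --- is the technical heart of the argument.
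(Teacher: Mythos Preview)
Your building block is exactly the right engine, and your use of Lemma~\ref{prop A2} to control continuation play once both beliefs fall below $p^*$ is correct. But the strategy of locating an \emph{on-path} history where the building block fires, via a backward ``peel-off'' through the sequence of on-path experiments, is much heavier than what the paper does, and you yourself flag that the peel-off is the unfinished ``technical heart.''

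The paper's proof sidesteps all of this case analysis by working at a single, specific history --- namely $h=RS$ --- which may well be off-path. Sequential rationality must hold there regardless, so there is no need to locate the ``last on-path experiment'' or to iterate. The only missing ingredient is to show that $\sigma(RS)=S$ and that the continuation after $RS$ is $S^\infty$; this is exactly Lemma~\ref{lemm last2}, and it is where the hypothesis $p_0<p^*_1$ is actually spent. The argument there is short: letting $N:=\min\{n\ge 1:\sigma((RS)^n)=S\}$, one shows that $N\ge 2$ forces both $\sigma((RS)^{N-1}R)=R$ (otherwise player~1's final experiment at belief $\le\phi(p_0)<p^*$ is unprofitable) and $\sigma((RS)^{N-1}R)=S$ (because $p^2((RS)^{N-1}R)=p_0<p^*_1\le p^*_2$, and Lemma~\ref{prop A2} pins down the continuations after $R$ and after $S$, making player~2's second-thought-experiment comparison bite). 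Once $\sigma(RS)=S$, your building block at $h=RS$ finishes the proof in one stroke: $p^2(RS\cdot R)=1$ gives $\sigma(RS\cdot R)=R$, Lemma~\ref{prop A2} gives $S^\infty$ after $RS\cdot RR\cdot S$, and $p^1(RS)=\phi(p_0)>\widehat p$ makes the fooling deviation strictly profitable.

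So the gap in your plan is not conceptual but structural: you are trying to synthesize, via an on-path induction, the conclusion of Lemma~\ref{lemm last2}, when that lemma can be proved directly at the fixed off-path history $RS$ in a few lines. Replacing your peel-off by Lemma~\ref{lemm last2} collapses the remaining cases and makes the role of $p_0<p^*_1$ transparent.
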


Proposition \ref{ex4}, together with Remark~\ref{remark:6}, clarifies the existence issue of a pure reasonable SE for $p_0$ close to $p^*$. In particular, for $p_0=p^*$, there is a pure reasonable SE if and only if $\phi^{-1}(\widehat p)\geq p^*$.
When this inequality does not hold, players must resort to randomization. The nature of such mixed equilibria is illustrated in the supplementary material.

\medskip
\begin{proof}[Proof Sketch]
Assume that a pure reasonable SE $(\sigma,\pi)$ exists. As we show in Lemmas \ref{prop A2} and \ref{lemm last2} (Section \ref{sec proof nopure}), the assumption that $p_0<p^*_1$ implies that the equilibrium sequence of choices is  $S^\infty$, following both histories $RS$ and $RS \!\cdot\! RR\!\cdot\! S$. 
Therefore $\sigma(RS)= S$, and hence $p^2(RS\!\cdot\! R)= 1$. 
The second property implies that at $h= RS$, player 1 is in the situation of the first thought experiment of Section \ref{sec first results}.  Since $p^1(h)= \phi(p_0)>\widehat p$, deviating to $R$ (and then $S$) yields a higher continuation payoff than the equilibrium choice $S$. This is the desired contradiction.
\end{proof}

\section{Main results}\label{sec main results}

\subsection{Statements}

According to Theorem~\ref{prop private}, players experiment more when outcomes are private than when  they are public. Our main results provide additional insights into the equilibrium amount of experimentation. 

One way of addressing this question is to describe the final beliefs of the players in terms of cut-offs such as $p^*$, $p^{**}$, or $\widehat p$. This is the approach we follow in Theorems \ref{th private1} and \ref{th private2}  below.

Unlike in the public case, the relation between one's belief and the number of past experiments is endogenous, as it depends on how to interpret the other player's experiments. 
Hence, an  alternative approach 
to measuring the equilibrium amount of experimentation is to compare $N_{\mathrm e}$ to $N^*$ and $N^{**}$. This is the focus of Theorem \ref{th private3}.


\begin{theorem}\label{th private1}
    For any reasonable SE $(\sigma,\pi)$, the following holds:
    \begin{itemize}
        \item If $p_0>p^*$, final beliefs are below $\widehat p$  with $\prob_\sigma$-positive probability.
        \item For every $h$ s.t. the belief of the active player satisfies $p^i(h)<\widehat p$, one has $\sigma(R\mid h) = 0$.
    \end{itemize}
\end{theorem}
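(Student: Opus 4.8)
The plan is to prove the two items in order, since the second is the key ingredient needed to make the first one work.

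\textbf{Second item.} Fix a reasonable SE $(\sigma,\pi)$ and a public history $h$ with active player $i$ such that $p^i(h)<\widehat p$. I want to show $\sigma(R\mid h)=0$, i.e.\ that choosing $R$ at $h$ is strictly suboptimal for $i$. The flow payoff from choosing $R$ is $(1-\delta)(p^i(h)\lambda m-c)$, and the continuation value that $i$ can obtain after $R$ is bounded above by the continuation value in the best possible scenario for $i$: that the single extra experiment of $i$ (if successful) triggers $g$ per period forever from the next period, \emph{and} that $j$, provoked by $i$'s choice of $R$, performs exactly one extra experiment whose success (in the event $i$'s own experiment failed) is disclosed one period later via $j$ repeating $R$. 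This is exactly the ``first thought experiment'' bookkeeping of Section \ref{sec first thought}: the upper bound on $i$'s total payoff from choosing $R$ is $(1-\delta)(p^i(h)\lambda m-c)+\delta p^i(h)(\lambda+\delta(1-\lambda)\lambda)g$, and by the definition \eqref{def:hatp} of $\widehat p$ this is strictly negative when $p^i(h)<\widehat p$. On the other hand, at $h$ player $i$ can always guarantee a nonnegative payoff by playing $S$ forever (using reasonability, $i$'s own subsequent information can only make him weakly better off, and the off-path analysis does not matter because we only need a guarantee). Hence $R$ is strictly dominated at $h$ and $\sigma(R\mid h)=0$. The one delicate point is justifying the upper bound on the continuation payoff after $R$ rigorously: I must argue that no matter what $\sigma$ prescribes off-path and no matter $j$'s strategy, the discounted value to $i$ of all future play after $i$ chooses $R$ at $h$ cannot exceed $\delta p^i(h)(\lambda+\delta(1-\lambda)\lambda)g$. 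This follows because the only source of positive payoff is $g$ per period once $i$ is convinced (correctly) that $\theta=G$, which can happen either through $i$'s own success (probability $\leq p^i(h)\lambda$ under $\theta=G$, earliest at the next period) or, conditional on $i$'s experiment failing, through $j$'s behavior, and the latter is worth at most $\delta p^i(h)(1-\lambda)\lambda g$ because $i$ can learn the good state no earlier than one period after the current one — this is the step I expect to be the main obstacle, as it requires a clean ``pessimistic'' bound on the continuation value that is uniform over the opponent's strategies and over off-path continuations.

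\textbf{First item.} Suppose $p_0>p^*$. By Theorem \ref{prop private}, conditional on $\theta=B$ one has $N^*\le N_{\mathrm e}<+\infty$ with probability one; in particular, with positive probability the realized public history is $(RR)^{N_{\mathrm e}}\cdot S^\infty$ for some finite $N_{\mathrm e}\geq N^*$ and experimentation stops. Consider the last period at which someone experiments, and let $h$ be the public history just before the active player there makes his final choice of $R$, followed by the opponent's move and then $S^\infty$. I claim the final belief, i.e.\ the first-order belief of the relevant player at the moment experimentation ceases, is below $\widehat p$. Indeed, experimentation stops means that at the first public history $h'$ (with active player $i$) along this path at which $\sigma(S\mid h')=1$ and $S$ is played forever after, by the contrapositive of the second item we do \emph{not} automatically get $p^i(h')<\widehat p$ — rather, we use that stopping is optimal for $i$ only if $i$ does not want to trigger further experimentation. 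Here I invoke reasonability: at such $h'$, $p^i(h')=\phi^{n_{\mathrm e}(h')}(p_0)$ by the definition of reasonable beliefs, and $i$ faces exactly the first-thought-experiment decision (choosing $R$ would, along the equilibrium continuation, lead the opponent to possibly experiment, etc.). Optimality of $S$ at $h'$ against the deviation ``$R$ then conform'' forces $p^i(h')<\widehat p$ unless the deviation is unprofitable for another reason; the clean way to close this is to show that on the positive-probability event where experimentation halts, the last active player's belief, which equals $\phi^{N_{\mathrm e}}(p_0)$ by reasonability, must be $<\widehat p$, for otherwise that player would profit by experimenting one more time and then conforming — here one uses that the opponent's equilibrium continuation after one more $R$ gives $i$ the first-thought-experiment payoff, which is positive when $p^i>\widehat p$, contradicting optimality of stopping. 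Since $\phi^{N_{\mathrm e}}(p_0)$ is precisely the final belief on this event, and this event has $\prob_\sigma$-positive probability, final beliefs are below $\widehat p$ with positive probability, as claimed.

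I expect the routine parts (plugging into \eqref{def:hatp}, checking $\widehat p<p^*$, bookkeeping the discounting) to be mechanical; the genuine work is the uniform upper bound on the post-$R$ continuation value in the second item, and the careful identification in the first item of \emph{which} player's belief is the ``final belief'' and why optimality of stopping against the one-shot conforming deviation pins it below $\widehat p$ rather than merely below $p^*$.
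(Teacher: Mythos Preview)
Your plan for the first item is close to the paper's argument: assume by contradiction that final beliefs stay above $\widehat p$, locate a ``last experiment'' history $hR$ after which $S^\infty$ follows, and observe that at $h\!\cdot\! RS$ the active player faces exactly the first thought experiment with belief above $\widehat p$, so deviating to $R$ is profitable. The paper does not use the second item here; it is a self-contained contradiction.

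The genuine gap is in your second item. Your claimed upper bound on the continuation payoff after $R$---namely the first-thought-experiment value $(1-\delta)(p\lambda m-c)+\delta p(\lambda+\delta(1-\lambda)\lambda)g$---is \emph{not} an upper bound. If $j$ responds to $hR$ by experimenting $k>1$ times before stopping (and $i$ plays $S$ after his single $R$), $i$'s payoff contains the term $p(1-\lambda)(1-(1-\lambda)^k)\delta^{k+1}g$ in place of $p(1-\lambda)\lambda\delta^2 g$; for $\delta$ near $1$ this is strictly larger when $k>1$, since $(2-\lambda)\delta>1$. So your ``best possible scenario'' is not the best, and the bound fails precisely at the step you flagged as the main obstacle. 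Relatedly, your lower bound of $0$ on the $S$-payoff is too coarse: if $q^i(h)>0$, playing $S$ already earns $\delta p^i(h)q^i(h)g$ because $j$'s next move reveals any past success.

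The paper's proof of the second item is accordingly much more structural. It introduces $\underline p$, the infimum of beliefs at which $R$ is played with positive probability below $\widehat p$, and picks $h$ with $\phi(p^i(h))<\underline p$ so that after one more experiment no further violations can occur. It then shows the $S$-payoff is at least $\delta p^i(h)q^i(h)g$, defines stopping indices $\omega_i,\omega_j$ for when $i$ first plays $S$ with positive probability and $j$ first plays $S$ with probability one, proves $\min(\omega_i,\omega_j)<\infty$, and decomposes $\sigma_j$ as a mixture over deterministic stopping-time strategies $\sigma_j(k)$. For each $k$ it computes $\gamma^i(k)$ explicitly and shows $\gamma^i(k)<\delta p^i(h)q^i(h)g$ via three reductions: monotonicity in $q^i(h)$ (reduce to $q^i(h)=0$), drop $i$'s last experiment (since $\phi^k(p^i(h))<p^*$), and identify the remainder as the $k$-fold variant of the first thought experiment, which is decreasing in $k$ and negative at $k=1$ since $p^i(h)<\widehat p$. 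Your direct-domination shortcut does not survive; you need this equilibrium-specific decomposition.
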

In equilibrium, either some player is successful, in which case beliefs converge to 1, or both players remain unsuccessful, in which case experimentation eventually stops. According to Theorem \ref{th private1}, the beliefs are then below $\widehat p$ with positive probability. Final beliefs need not be below $\widehat p$ with probability 1 (see supplementary appendix).

Theorem \ref{th private1} also implies that as soon as the active player's belief drops below $\widehat p$, experimentation stops forever,
except if the other player, $j$, was successful.  
Indeed, the active player,  $i$, must choose $S$ at $h$. Since beliefs are reasonable, 
$p^j(hS)= \phi^{n_{\mathrm e}(h)}(p_0)\leq p^i(h) <\widehat p$. Applying Theorem \ref{th private1} again, this implies $\sigma(hS)=S$, etc.

\medskip
    
For \emph{pure} reasonable SEs, the contrast with the public case is even more explicit.

\begin{corollary}
Assume that outcomes are \emph{private} and $p_0> p^*$. At any \emph{pure} reasonable $\mathrm{SE}$, final beliefs  are  below $\widehat p$ if $\theta= B$, and no  player ever experiments once his belief is below $\widehat p$.   

Assume that outcomes are \emph{public}. At any Nash equilibrium, final beliefs  are  below $p^*$ if $\theta= B$, and no  player ever experiments once his belief is below $p^*$. 
\end{corollary}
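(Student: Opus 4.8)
The statement has two halves, and both are essentially corollaries of results already established. The plan is to derive the private case from Theorem~\ref{th private1} and Remark~\ref{remark pure beliefs}, and the public case from Proposition~\ref{prop public} (property \textbf{P1}) together with the one-player analysis of Section~\ref{section one-player}.

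\textbf{The private case.} Fix a pure reasonable SE $(\sigma,\pi)$ with $p_0>p^*$, and condition on $\theta=B$. By Theorem~\ref{prop private}, $N_{\mathrm e}$ is finite on this event, so there is an on-path public history $h$ after which no further experimentation ever occurs; let $i$ be the player active at the last experiment along the realized path. The key point is that, by Remark~\ref{remark pure beliefs}, for a pure $\sigma$ the reasonable belief of each player at every history is either $1$ or of the form $\phi^n(p_0)$ with $n\ge n^i_{\mathrm e}(h)$; since $\theta=B$ no one is ever actually successful, so along the realized (conditional-on-$B$) path the relevant first-order beliefs are genuinely $\phi^n(p_0)$, and these are strictly decreasing in $n$. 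Now I invoke the second bullet of Theorem~\ref{th private1}: if at some on-path history the active player's belief were $\ge \widehat p$ \emph{and} experimentation were to stop there, this would have to be consistent with $\sigma(R\mid h)=0$ only when $p^i(h)<\widehat p$ — so to conclude I argue that experimentation continues as long as the active player's belief is $\ge\widehat p$ and a success is still possible. More carefully: the first bullet of Theorem~\ref{th private1} gives that final beliefs are below $\widehat p$ with positive probability; for \emph{pure} strategies and $\theta=B$ the conditional-on-$B$ path is deterministic, so "with positive probability" becomes "with probability one" on that event, i.e. the final belief is below $\widehat p$. The second bullet then gives the last clause directly: at any $h$ with $p^i(h)<\widehat p$, the active player plays $S$, so no player ever experiments once his belief drops below $\widehat p$.

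\textbf{The public case.} This is even more direct. By \textbf{P1} of Proposition~\ref{prop public}, at any Nash equilibrium $N_{\mathrm e}=N^*$ with probability one conditional on $\theta=B$; and in the public model all players share the common belief $\phi^n(p_0)$ after $n$ failed experiments. Hence conditional on $\theta=B$ the common final belief is exactly $\phi^{N^*}(p_0)$, which by the definition $N^*=\inf\{n\ge 0:\phi^n(p_0)<p^*\}$ is strictly below $p^*$ (using $\phi^n(p_0)\ne p^*$ for all $n$). For the last clause, suppose some player experiments at a history where the common belief $p$ satisfies $p<p^*$; then after that experiment fails the belief is $\phi(p)<p<p^*$, and iterating, experimentation would have to continue past the point where the belief first fell below $p^*$ — but that contradicts $N_{\mathrm e}=N^*$ (on $\theta=B$), and on $\theta=G$ the behavior coincides with $\theta=B$ until a success, as noted after Proposition~\ref{prop public}. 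Alternatively, one just observes that experimenting at belief $p<p^*\le p_{\mathrm{myop}}$ cannot be part of an equilibrium because the single-player indifference condition defining $p^*$ shows the deviation to $S$ is strictly profitable given that continuation play is unaffected (the belief is public and does not move if one plays $S$).

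\textbf{Main obstacle.} The only subtle point is bridging "positive probability" in Theorem~\ref{th private1} and "probability one conditional on $\theta=B$" in the corollary; this relies precisely on purity, which collapses the conditional-on-$B$ trajectory to a single deterministic path, so any event of positive probability on that conditional distribution is in fact certain. I would state this reduction explicitly. Everything else is bookkeeping with the monotonicity of $\phi^n(p_0)$ and the already-proved theorems.
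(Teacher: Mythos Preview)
Your proof is correct and follows the same approach the paper implicitly uses (the corollary is stated without separate proof, as a direct consequence of Theorem~\ref{th private1} and Proposition~\ref{prop public}). Your identification of the ``main obstacle'' is exactly right: purity collapses the conditional-on-$B$ play to a single deterministic path, so the ``positive probability'' conclusion of the first bullet of Theorem~\ref{th private1} upgrades to probability one on $\{\theta=B\}$; the second bullet then gives the remaining clause verbatim.

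Two small comments. First, in the private case your opening paragraph starts down a different route (looking at the last experiment and invoking Remark~\ref{remark pure beliefs}) before pivoting to the clean argument via Theorem~\ref{th private1}; the pivot is the right move, and the earlier sentences can simply be dropped. Second, in the public case your ``alternative'' argument~2 is not valid for general Nash equilibria: strategies may depend on the full public history, not only on the common belief, so playing $S$ rather than $R$ can change continuation play even though the belief is unchanged. Your primary argument~1 is sufficient and correct: experimenting at a belief $p<p^*$ already means at least $N^*$ experiments have occurred, hence $N_{\mathrm e}\geq N^*+1$, contradicting \textbf{P1}. The ``iterating'' clause there is unnecessary.
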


\medskip

By Remark \ref{remark1},
the strategy profile that experiments whenever the current belief is above $p^*$ is an SPE in the public case for a continuum of values of $p_0>p^*$, if and only if $\phi^{-1}(p^*)>p_{\mathrm{myop}}$. 
It is thus natural to ask whether the  profile in which the active player experiments whenever his current belief is above $\widehat p$ is a reasonable SE in the private case. 
Theorem \ref{th private2} below provides the answer.

\begin{theorem}
\label{th private2} 
Let  $\sigma$ be the 
pure 
strategy profile such that $\sigma(h)= R$ if and only if the belief of the active player satisfies 
 $p^i(h)\geq \widehat p$.
Then $\sigma$ is a reasonable $\mathrm{SE}$ if and only if $\phi^{n-1}(p_0)\geq p^*_n$,
where $n:=\min\{k\geq 0: \phi^k(p_0)< \widehat p\}$.
\end{theorem}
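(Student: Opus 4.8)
The plan is to analyze the candidate profile $\sigma$ directly, using the reasonable belief system $\pi$ attached to it by Lemma \ref{lemm unique beliefs}, and to check sequential rationality node by node. By Remark \ref{remark pure beliefs}, along any sequence of choices the belief of a player is either $1$ or of the form $\phi^k(p_0)$ with $k$ at least the number of that player's own experiments; combined with the reasonability conditions this pins down every relevant belief explicitly. First I would describe the on-path play: starting from $p_0$ the active player experiments as long as the running belief $\phi^k(p_0)$ is at least $\widehat p$, i.e. for $n := \min\{k\ge 0 : \phi^k(p_0)<\widehat p\}$ periods (both players experimenting each period), after which, absent a success, both switch to $S$ forever. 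Crucially, during this phase the fact that the opponent keeps experimenting is uninformative (it is what $\sigma$ prescribes given the public history and no own success), so each player's belief after $k$ failed own experiments is exactly $\phi^k(p_0)$ — there is no encouragement-type inflation on path here, which is what makes the cut-off $\widehat p$, rather than something smaller, the governing threshold.

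Next I would set up the off-path bookkeeping. The only subtle off-path histories are those reached after a player $j$ has chosen $S$ at some history $h'$ and then returns to $R$: by reasonability $p^i(h'S)=\phi^{n_{\mathrm e}(h')}(p_0)$ and $p^i(h'SaR)=\phi^{n_{\mathrm e}(h'SaR)}(p_0)$, so such a returning $R$ is treated as pure (uninformative) experimentation. The remaining off-path histories are those where a player who has been experimenting suddenly plays $S$; by the uniqueness argument in Lemma \ref{lemm unique beliefs}, if $\sigma$ prescribes $R$ both on success and on failure at the relevant node then the choice of $S$ there is off-path and, since it is dominated when successful, the opponent's belief jumps appropriately — but under this particular $\sigma$ a successful player always plays $R$, so I would simply track when $p^i(h)=1$ is forced (exactly when the opponent's last $R$ was played at a node where $\sigma$ prescribed $S$ on failure), and at all other histories $p^i(h)=\phi^k(p_0)$ for the appropriate $k$. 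With beliefs in hand, sequential rationality at a history with $p^i(h)\ge\widehat p$ and at a history with $p^i(h)<\widehat p$ reduces, using the two thought experiments of Section \ref{sec: encouragement}, to comparing $p^i(h)$ against the cut-offs $\widehat p$ and $p^*_k$.

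The heart of the argument — and where the condition $\phi^{n-1}(p_0)\ge p^*_n$ enters — is checking optimality at the last on-path experimentation node, period $n$. At that node the active player $i$ holds belief $\phi^{n-1}(p_0)$ (if unsuccessful); under $\sigma$, if $i$ plays $R$ then $j$ in the second half of the period also plays $R$ (the public history still prescribes it) and then, absent a success, everyone plays $S$; whereas $j$'s remaining behaviour is effectively ``experiment once more, then quit'', which is precisely the setup of the second thought experiment (Section \ref{sec second thought}) with stock $n$ of prior experiments by the other side playing no role — rather, it is $i$'s own experiment that is about to be rendered redundant by $j$'s last experiment, giving the threshold $p^*_n$ from \eqref{def:p*n}. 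Hence $R$ is optimal for $i$ at period $n$ iff $\phi^{n-1}(p_0)\ge p^*_n$. For the converse, if $\phi^{n-1}(p_0)<p^*_n$ then this same node furnishes a profitable deviation to $S$, so $\sigma$ is not an SE. I would then verify that the remaining incentive constraints are implied: for $k<n$ the belief $\phi^{k-1}(p_0)>\widehat p$ exceeds $\widehat p$, and since $\widehat p$ is designed (first thought experiment) so that experimenting is worthwhile when it additionally triggers one more experiment by the opponent — which is exactly what happens here for $k\le n-1$ — optimality at those nodes follows a fortiori from optimality at node $n$ together with the monotonicity $p^*_0=p^*>\widehat p > \dots$; and for histories with belief below $\widehat p$ (on- or off-path, absent a success) the dominated-action bound $(1-\delta)(p\lambda m - c)+\delta p g<0$ fails only above $\widetilde p<\widehat p$, so a direct comparison of the value of ``$R$ now, inducing at most one further opponent experiment, then $S$'' against $0$ shows $S$ is optimal — this is the content of the second bullet of the theorem and is the place where I would be most careful, since one must make sure no elaborate off-path continuation of $\sigma$ makes a below-$\widehat p$ experiment attractive.

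The step I expect to be the main obstacle is precisely this last verification: bounding the continuation value of a deviation to $R$ at a history where the active player's belief is below $\widehat p$, uniformly over all the (possibly intricate) off-path continuations that $\sigma$ itself generates — in particular ruling out that a player could, by experimenting below $\widehat p$, set in motion a chain in which the opponent's subsequent $R$'s are (mis)read as evidence of success. The reasonability conditions are exactly what tame this — a returning $R$ after an $S$ is uninformative — so the argument should go through by observing that after any history with belief below $\widehat p$ the best the deviator can do is extract one more opponent experiment before information is disclosed, which is the $\widehat p$-comparison again; but making this ``one more experiment'' bound rigorous across all branches of $\sigma$ is the delicate part, and I would expect to invoke Lemma \ref{prop A2}-type stopping arguments (as used in Section \ref{sec: ex1}) to close it.
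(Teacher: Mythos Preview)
Your plan gets the necessity direction essentially right in spirit but mislocates the binding node and misreads $p^*_n$. The relevant deviation is by \emph{player 2} at $h=(RR)^{n-1}\cdot R$: at that history player 2 holds belief $\phi^{n-1}(p_0)$ and player 1 has already made $n$ (undisclosed) experiments whose outcomes will be revealed next period. This is exactly the second thought experiment of Section~\ref{sec second thought} with stock $n$, so the cutoff is $p^*_n$ precisely \emph{because} of those $n$ prior opponent experiments---not ``playing no role'' as you write. Your description (``if $i$ plays $R$ then $j$ in the second half of the period also plays $R$'') is the player-1 node, which yields the cutoff $\widehat p_{n-1}$, not $p^*_n$.

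The real gap is in sufficiency. You check (i) on-path periods $1,\ldots,n$ and (ii) histories with belief below $\widehat p$, and you flag (ii) as the hard part. In fact the paper dispatches (ii) in a few lines: deviating to $R$ triggers at most one more opponent experiment, and $p^i(h)<\widehat p$ settles it. The bulk of the work is the case you gloss over: \emph{off-path} histories at which $p^i(h)\in[\widehat p,1)$. Such histories arise whenever $n_i+d_j<n$, where $d_j$ counts the opponent's disclosed failures, and at each of them you must show that following $\sigma$ (continuation $R^tS^\infty$ for an $h$-dependent $t$) beats deviating to $S$ (continuation $SR^qS^\infty$ for an $h$-dependent $q$). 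Neither $t$ nor $q$ is determined by the period index alone---they depend on the full state $(n_i,d_i,n_j,d_j)$---so your ``a fortiori from optimality at node $n$'' does not cover them.

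The paper handles this by (a) introducing the bookkeeping $d_i,u_i=n_i-d_i$ and the observation $\widehat p\le p^i(h)<1\Rightarrow n_i+d_j<n\Rightarrow p^j(h)<1$, (b) classifying both the equilibrium and deviation continuations into four explicit cases according to the signs of $n_i+d_j-(n_j+d_i)$ and $n-(n_i+n_j)$, and (c) proving a lemma that ranks the plays $(RR)^kS^\infty$, $(RR)^kRS\cdot S^\infty$, and $S(RR)^kS^\infty$ in terms of $\phi^{k-1}(p^i(h))$ versus $\widehat p_{u_j+k-1}$ and $p^*_{u_j+k}$. The single hypothesis $\phi^{n-1}(p_0)\ge p^*_n$ then suffices because, in each case, the binding index is shown (by monotonicity of $\phi^k$ and of $\widehat p_k,p^*_k$, together with $\widehat p_n\le p^*_n$) to reduce to this one inequality. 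Your plan is missing this reduction; without it, you have not verified sequential rationality off path.
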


The condition $\phi^{n-1}(p_0)\geq p^*_n$ is restrictive. The example in Section~\ref{sec: ex2} corresponds to the case $n=1$.

Theorem \ref{th private3} below provides bounds on the equilibrium amount of experimentation $N_{\mathrm e}$, in terms of the socially optimal amount $N^{**}$.

\begin{theorem}\label{th private3}
    Assume $p_0> p^*$. At any pure reasonable SE $(\sigma,\pi)$, if $\theta= B$, the equilibrium number of experiments satisfies 
    \[N^{**}-2\leq N_{\mathrm e} \leq 2 N^{**}.\]
Moreover,  for every $\alpha<2$ there are parameter values such that there is a pure reasonable $\mathrm{SE}$ satisfying $N_{\mathrm e}\geq \alpha N^{**}$.
\end{theorem}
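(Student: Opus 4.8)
The plan is to attack the three assertions separately, relying throughout on the two structural facts established earlier: (i) along any play with $\theta = B$ the distribution over public histories is the same as under $\theta = G$ until a success occurs, and in a pure reasonable SE unsuccessful play produces a deterministic sequence of choices with a well-defined final belief; (ii) by Theorem \ref{th private1}, no player experiments once his belief is below $\widehat p$, and as soon as the active player's belief drops below $\widehat p$ experimentation stops forever (the remark following Theorem \ref{th private1}). So if $\theta = B$, $N_{\mathrm e}$ is finite and equals the number of $R$'s in the (deterministic) unsuccessful play, and the final belief of each player lies below $\widehat p$.

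For the lower bound $N_{\mathrm e}\ge N^{**}-2$: I would compare the amount of information that has been \emph{collected} by the two players along the equilibrium path to what a social planner would collect. After $N_{\mathrm e}$ total experiments, the players have pooled exactly $N_{\mathrm e}$ experiments. The key observation is that at the last period in which some player (say $i$) experiments, his belief $p^i(h)$ must be at least $\widehat p$; but $p^i(h)\ge \phi^{n_{\mathrm e}(h)}(p_0)$ where $n_{\mathrm e}(h)$ is the number of \emph{prior} experiments he is aware of, and in fact — since unsuccessful play is deterministic and $i$ sees all of $j$'s choices — one can show $p^i(h)$ is controlled by $\phi^{N_{\mathrm e}-1}(p_0)$ up to the one experiment $i$ may not yet have observed the outcome of. Comparing with the definition $N^{**}=N^*_{\sqrt\delta}=\min\{n:\phi^n(p_0)<p^*_{\sqrt\delta}\}$ and using $\widehat p < p^*_{\sqrt\delta} = p^{**}$ (a cut-off comparison that should follow by plugging $\sqrt\delta$ into the formulas \eqref{def:p*} and \eqref{def:hatp}), one gets that experimentation cannot stop before the pooled belief has fallen below $p^{**}$, i.e.\ $N_{\mathrm e}\ge N^{**}$, with a slack of at most $2$ to absorb the two "in-flight" experiments whose outcomes a player has not yet observed when deciding. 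For the upper bound $N_{\mathrm e}\le 2N^{**}$: the idea is that the $i$-th experimenter, when choosing $R$, is — in the most favorable case for experimentation — in the situation of the first thought experiment, so his choice can be rational only if $p^i(h)\ge\widehat p$; but each player's own belief after $k$ of \emph{his own} failed experiments is $\phi^k(p_0)$, and since the two players alternate, after $N_{\mathrm e}$ pooled experiments each has performed at least $\lfloor N_{\mathrm e}/2\rfloor$ of them. Once $\phi^{\lfloor N_{\mathrm e}/2\rfloor}(p_0)<\widehat p<p^{**}$ that player must stop; bounding $\lfloor N_{\mathrm e}/2\rfloor \le N^{**}$ gives $N_{\mathrm e}\le 2N^{**}$ (being careful with the $\pm 1$ from the floor, which is absorbed in the clean statement).

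For the tightness claim, I would exhibit the family already present in the paper: the equilibria $\sigma_n$ of Proposition \ref{prop comparison p/e} (or rather their reasonable counterparts) with $p_0 = p^*_\delta$, where the ratio $N_{\mathrm e}/N^{**}$ is shown to approach $2x_0/\ln 2 \simeq 2.299 > 2$. One needs the relevant profile to actually be a \emph{pure reasonable} SE, not merely a Nash equilibrium; here I would invoke either the construction in Section \ref{sec: ex2} / Theorem \ref{th private2} (taking parameters so that $\phi^{n-1}(p_0)\ge p^*_n$ holds along the whole relevant range) or the pure sequential equilibrium of Section \ref{sec pure SE} adapted to satisfy the reasonability criterion, and check that for these parameter choices $N_{\mathrm e}\ge \alpha N^{**}$ for any prescribed $\alpha<2$.

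The main obstacle I anticipate is the bookkeeping in the two bounds: making precise the relation between a player's belief $p^i(h)$ at a decision node and the pooled number of experiments $N_{\mathrm e}$, since $p^i(h)$ is determined by reasonable-belief updating (Remark \ref{remark pure beliefs}: $p^i(h)=1$ or $p^i(h)=\phi^m(p_0)$ for some $m\ge n_{\mathrm e}^i(h)$) and one must rule out that a player is artificially optimistic — i.e.\ believes with positive probability the other was successful — at the moment experimentation stops. The clean resolution is exactly Theorem \ref{th private1}'s second bullet together with the reasonability constraint $p^j(hS)=\phi^{n_{\mathrm e}(h)}(p_0)$: the player who stops \emph{because} his belief crashed is genuinely pessimistic, which pins down $m = n_{\mathrm e}^i(h)$ there, and the alternating structure then converts "own failed experiments" counts into bounds on the pooled count $N_{\mathrm e}$. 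The cut-off inequality $\widehat p < p^{**} < p^*$ and the monotonicity $p^*_n\uparrow$, $\widehat p_n\uparrow$ are routine but must be verified with $\sqrt\delta$ in place of $\delta$ to align $N^{**}$ with the $\widehat p$-based stopping rule.
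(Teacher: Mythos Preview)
Your upper-bound idea is essentially the paper's, but the central cut-off comparison on which both bounds rest is stated the wrong way round. You assert $\widehat p < p^{**}$; in fact Lemma~\ref{lemm last} proves the opposite chain
\[
\phi^2(\widehat p) < p^{**} < \widehat p.
\]
This reversal is fatal as written. With the correct inequality $p^{**}<\widehat p$, your lower-bound argument collapses: knowing that the final pooled belief is below $\widehat p$ tells you nothing directly about whether it is below the \emph{smaller} threshold $p^{**}$. The paper's route is to set $\widehat N:=\inf\{n:\phi^n(p_0)<\widehat p\}$; Theorem~\ref{th private1} gives $N_{\mathrm e}\ge \widehat N$, and then the \emph{other} half of Lemma~\ref{lemm last}, namely $\phi^2(\widehat p)<p^{**}$, yields $\widehat N\ge N^{**}-2$. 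So the ``$-2$'' is not a bookkeeping slack for in-flight experiments as you suggest; it is precisely the gap between the encouragement cut-off $\widehat p$ and the social cut-off $p^{**}$, measured in applications of $\phi$. Symmetrically, for the upper bound one needs $\widehat N\le N^{**}$, which follows from $p^{**}<\widehat p$ (higher threshold, fewer steps to cross it); your inequality would give the useless direction $\widehat N\ge N^{**}$. Once the correct comparison is in place, the paper's argument is short: by Remark~\ref{remark pure beliefs}, $p^i(h)\le \phi^{n^i_{\mathrm e}(h)}(p_0)$ on path, so each player experiments at most $\widehat N$ times, hence $N_{\mathrm e}\le 2\widehat N\le 2N^{**}$.

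For tightness, your proposal is also off target. The ratio $2x_0/\ln 2\simeq 2.299$ that you quote is attained by the Nash equilibria $\sigma_n$ of Proposition~\ref{prop comparison p/e}, which are \emph{not} reasonable SEs (they rely on the non-reasonable belief that any deviation signals a success). Theorem~\ref{th private3} only claims the ratio can be pushed arbitrarily close to $2$, not beyond. The paper obtains this via Theorem~\ref{th private2}: taking $\delta=\tfrac12$, $\lambda=\tfrac1n$, one first checks (Lemma~\ref{lemm nonvoid}) that $\phi(p^*_n)<\widehat p$, so there exist priors $p_0$ for which the ``experiment iff $p^i\ge\widehat p$'' profile is a pure reasonable SE with $N_{\mathrm e}=2n$; one then shows (Lemma~\ref{lemm overexp}) that for any $\eta>0$, $\phi^{\lfloor\eta n\rfloor}(\widehat p)<p^{**}$ for $n$ large, which forces $N^{**}\le (1+\eta)n$ and hence $N_{\mathrm e}/N^{**}\ge 2/(1+\eta)$. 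Invoking Proposition~\ref{prop comparison p/e} here would not work; the explicit parameter family and the verification of the hypothesis $\phi^{n-1}(p_0)\ge p^*_n$ in Theorem~\ref{th private2} are what make the construction go through.
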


According to Theorem \ref{th private3},  the equilibrium amount of experimentation is always at least socially optimal (up to the additive constant  2) in any pure reasonable SE, in  contrast to the public case. 
On the other hand, over-experimentation may occur,
and Theorem \ref{th private3} provides a tight bound on the maximal amount of over-experimentation.\footnote{The factor 2 relates to the number of players: each player experiments at most  $N^{**}$ periods.}

According to Proposition~\ref{prop comparison p/e}, in \emph{Nash} equilibria 
the ratio $N_{\mathrm e}/N^{**}$ can be arbitrarily close to 2.299.
Theorem \ref{th private3} shows that for reasonable SEs, the ratio does not reach this quantity.

Part of the proof of Theorem \ref{th private3} relies on the observation that the two cut-offs $\widehat p$ and $p^{**}$ are closely related. We state this observation as an independent lemma.

\begin{lemma}\label{lemm last}
 One has $\phi^2(\widehat p)< p^{**}< \widehat p$.
\end{lemma}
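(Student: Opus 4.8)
The plan is to reduce everything to the explicit formulas for the relevant cut-offs and exploit the monotonicity of $\phi$. Recall that $p^{**}=p^*_{\sqrt\delta}$, so from (\ref{def:p*}) applied with discount factor $\sqrt\delta$,
\[
p^{**}=\frac{c(1-\sqrt\delta)}{c(1-\sqrt\delta)+g(1-\sqrt\delta(1-\lambda))},
\]
while $\widehat p$ is given by (\ref{def:hatp}). Both $\widehat p$ and $p^{**}$ are of the form $\frac{cA}{cA+gB}$ for positive $A,B$, and such a quantity is decreasing in the ratio $B/A$. Hence the inequality $p^{**}<\widehat p$ is equivalent to a clean inequality between the two ratios: writing $\widehat p=\frac{c(1-\delta)}{c(1-\delta)+gB_1}$ with $B_1=1-\delta+\lambda\delta(1+\delta-\lambda\delta)$ and $p^{**}=\frac{c(1-\sqrt\delta)}{c(1-\sqrt\delta)+gB_2}$ with $B_2=1-\sqrt\delta(1-\lambda)$, the claim $p^{**}<\widehat p$ becomes $\frac{B_1}{1-\delta}<\frac{B_2}{1-\sqrt\delta}$, i.e. (using $1-\delta=(1-\sqrt\delta)(1+\sqrt\delta)$)
\[
\frac{1-\delta+\lambda\delta(1+\delta-\lambda\delta)}{1+\sqrt\delta}\;<\;1-\sqrt\delta(1-\lambda).
\]
This is a polynomial inequality in $\sqrt\delta$ and $\lambda$ on $(0,1)^2$; I would clear denominators and verify it by elementary estimates (for instance, bounding $1+\delta-\lambda\delta\le 1+\delta\le 1+\sqrt\delta$ to absorb the awkward quadratic term, then comparing the remaining linear-in-$\lambda$ expressions). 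So the first inequality of the lemma is a finite, if slightly tedious, algebraic check.

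For the second inequality $\phi^2(\widehat p)<p^{**}$, I would first get a usable expression for $\phi^n$. Since $\phi(p)=\frac{p(1-\lambda)}{1-p\lambda}$, the map $p\mapsto \frac{1-p}{p}$ conjugates $\phi$ to multiplication by $\frac{1}{1-\lambda}$, so $\frac{1-\phi^n(p)}{\phi^n(p)}=\frac{1}{(1-\lambda)^n}\cdot\frac{1-p}{p}$, giving
\[
\phi^n(p)=\frac{p(1-\lambda)^n}{p(1-\lambda)^n+1-p}.
\]
Applying this with $p=\widehat p$ and $n=2$, and using $\frac{1-\widehat p}{\widehat p}=\frac{gB_1}{c(1-\delta)}$, I get $\frac{1-\phi^2(\widehat p)}{\phi^2(\widehat p)}=\frac{gB_1}{c(1-\delta)(1-\lambda)^2}$. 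Then $\phi^2(\widehat p)<p^{**}$ is, again by the monotonicity remark, equivalent to $\frac{1-\phi^2(\widehat p)}{\phi^2(\widehat p)}>\frac{1-p^{**}}{p^{**}}=\frac{gB_2}{c(1-\sqrt\delta)}$, i.e.
\[
\frac{B_1}{(1-\delta)(1-\lambda)^2}\;>\;\frac{B_2}{1-\sqrt\delta},
\qquad\text{equivalently}\qquad
\frac{1-\delta+\lambda\delta(1+\delta-\lambda\delta)}{(1+\sqrt\delta)(1-\lambda)^2}\;>\;1-\sqrt\delta(1-\lambda).
\]
Here the factor $(1-\lambda)^{-2}$ on the left is working in our favour, so this inequality should follow comfortably from the same kind of crude bounds; note the left side already exceeds $\frac{1-\delta}{(1+\sqrt\delta)(1-\lambda)^2}=\frac{1-\sqrt\delta}{(1-\lambda)^2}$, and one checks $\frac{1-\sqrt\delta}{(1-\lambda)^2}>1-\sqrt\delta(1-\lambda)$ directly since $\frac{1}{(1-\lambda)^2}>1>1-\lambda$ and hence $\frac{1-\sqrt\delta}{(1-\lambda)^2}\ge 1-\sqrt\delta > 1-\sqrt\delta(1-\lambda)$ already when $\sqrt\delta(1-\lambda)$-type comparisons are made carefully — I would just present the clean chain.

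The only real subtlety — and the step I expect to be the main obstacle — is the first inequality $p^{**}<\widehat p$: there the extra $\lambda\delta(1+\delta-\lambda\delta)$ term in the denominator of $\widehat p$ pushes $\widehat p$ down, while $p^{**}$ uses the slower discount $\sqrt\delta$ which pushes it up, so the two effects compete and the inequality is not a one-line domination; it genuinely requires the polynomial manipulation above (and is presumably where the hypothesis $\lambda\in(0,1)$, $\delta\in(0,1)$ is used in full). The inequality $\phi^2(\widehat p)<p^{**}$, by contrast, has a large safety margin because two applications of $\phi$ multiply the odds ratio by $(1-\lambda)^{-2}>1$. I would organize the write-up as: (i) record the odds-ratio formula for $\phi^n$; (ii) record that $p\mapsto\frac{cA}{cA+gB}$ is decreasing in $B/A$; (iii) reduce both claimed inequalities to the two displayed polynomial inequalities in $(\sqrt\delta,\lambda)$; (iv) verify each by elementary bounds.
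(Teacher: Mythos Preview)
Your reduction strategy is sound and the first inequality $p^{**}<\widehat p$ goes through: your suggested bound $1+\delta-\lambda\delta\le 1+\sqrt\delta$ gives
\[
\frac{B_1}{1+\sqrt\delta}\le (1-\sqrt\delta)+\lambda\delta < (1-\sqrt\delta)+\lambda\sqrt\delta = 1-\sqrt\delta(1-\lambda),
\]
since $\delta<\sqrt\delta$.

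The second inequality, however, has a genuine gap. Your ``clean chain'' ends with $1-\sqrt\delta > 1-\sqrt\delta(1-\lambda)$, which is simply false: it is equivalent to $1-\lambda>1$. More importantly, the crude step of dropping the term $\lambda\delta(1+\delta-\lambda\delta)$ from the numerator is too lossy. The resulting inequality $\frac{1-\sqrt\delta}{(1-\lambda)^2}>1-\sqrt\delta(1-\lambda)$ is equivalent to $\sqrt\delta<\frac{1+\mu}{1+\mu+\mu^2}$ with $\mu=1-\lambda$, and this fails whenever $\delta$ is moderately large; e.g.\ at $\lambda=0.1$, $\delta=0.81$ one gets $0.123<0.19$, not the reverse. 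The full inequality $\phi^2(\widehat p)<p^{**}$ does hold, but it is \emph{tight} as $\delta\to 1$ and $\lambda\to 0$, so you must keep the $\lambda\delta(1+\delta-\lambda\delta)$ contribution; there is no large safety margin from the $(1-\lambda)^{-2}$ factor alone.

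The paper avoids this delicate algebra by introducing an intermediate cutoff $\bar p$ (the prior at which a player is indifferent in a ``both experiment once, then outcomes are publicly disclosed'' scenario). Economic comparisons give $p^{**}\le \bar p<\widehat p$ and $\bar p\le \phi^{-1}(p^{**})$ essentially for free, and the only algebraic check needed is $\widehat p<\phi^{-1}(\bar p)$, which reduces to the trivially positive expression $\lambda(1-\delta)^2+\lambda^2\delta(3-\delta-\lambda)>0$. Routing through $\bar p$ is what turns a tight polynomial inequality into an easy one.
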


\subsection{Proof sketches}

\begin{proof}[Proof sketch of Theorem \ref{th private1}] The assumption that $p_0> p^*$ ensures that $N_{\mathrm{e}}\geq 1$, with probability 1.
If final beliefs are always above $\widehat p$, then the number of experiments $N_{\mathrm e}$ is bounded on-path. Hence,  there must exist a `last experiment', that is, some on-path history $h$ that ends with $R$ and  is followed by $S^\infty$ with probability 1. In that case, the player active at $h\!\cdot\! S$ has experimented in the previous period, and is in the situation of the first thought experiment. Since his belief is above $\widehat p$, deviating to the risky arm is profitable. 

The proof of the second statement is more involved. We argue by contradiction and
assume that  $p^i(h)<\widehat p$, yet $\sigma(R\mid h)>0$ for some history $h$ with active player $i$.

Let $\omega_i:=\inf\{k\geq 1: \sigma(S\mid h\cdot (RR)^k)>0\}$ 
be the first period beyond the current one in which player $i$ chooses $S$ with \emph{positive} probability. Let $\omega_j:=\inf\{k\geq 0: \sigma(S\mid  h\! \cdot\! (RR)^k\!\cdot\! R)= 1\}$ be the first period in which player $j$ chooses $S$ with probability \emph{one}.

It cannot be that $\omega_i= \omega_j= +\infty$. Indeed, if $\omega_i=+\infty$,  player $i$  experiments with probability one as long as player $j$ experiments,  hence $i$'s experiments after $h$ are uninformative from $j$'s perspective. This implies that the belief $p^j(h\!\cdot\! (RR)^k\!\cdot\! R)$ converges to 0 as $k\to +\infty$, implying $p^j(h\!\cdot\! (RR)^k\!\cdot\! R)< \widetilde p$ for $k$ large, 
and hence $\omega_j<+\infty$. 
\smallskip

For simplicity, we assume in this sketch that $h$ ends with $S$ and that  $\sigma$ induces the continuation path $S^\infty$ after any extension of $h$ ending with $S$. That is, $\sigma(h\overline hS)=S$ for each $\overline h$. 
We also assume that $\omega_j< \omega_i$. 
These assumptions are not w.l.o.g., 
and the complete proof 
in Section~\ref{sec proof private1} involves a few additional complications. 
We place ourselves in the (continuation) game that starts at $h$, relabeling periods starting from 0 at $h$, and prove that the equilibrium (continuation) payoff of $i$ is negative --- a contradiction.

Under the assumption that $\omega_j<\omega_i$, 
player $i$ chooses $R$ until the first (random) time where $j$ chooses $S$, which occurs in period $\omega_j$ at the latest. 
Unlike in the rest of the paper, it will be convenient to index strategies with players and write $\sigma= (\sigma_i,\sigma_j)$. For $k\leq \omega_j$, we denote by $\sigma_j(k)$ the (continuation) strategy of player $j$ that chooses $R$ $k$ times in a row and then  $S$,  
and otherwise coincide with $\sigma_j$. 

Against $\sigma_i$, the \emph{behavior} strategy $\sigma_j$, which switches to $S$ in period $k$ with probability%
\footnote{Conditional on not being successful and on not switching to $S$ before.}
$\sigma(S\mid h\!\cdot\! (RR)^{k}\!\cdot\! R)$ is equivalent to some \emph{mixture} of the strategies $\sigma_j(k)$, $k\in \llbracket 0, \omega_j\rrbracket$. In particular, the expected (continuation) payoff of $i$ is a convex combination of the expected payoffs induced by the profiles $(\sigma_i,\sigma_j(k))$. 

It is sufficient to prove that for \emph{each} $k$, the continuation payoff of player $i$ given $(\sigma_i,\sigma_j(k))$ is negative. That is, even if $i$ knew in advance when $j$ would stop experimenting, $i$ would be better off choosing $S$ at $h$.

Fix $k$, and assume $i$ `knows' he is facing $\sigma_j(k)$. Under $(\sigma_i,\sigma_j(k))$, the players experiment for $k$ periods, then $i$ experiments one last time. The subsequent sequence of choices indicates whether players were successful or not. When facing $\sigma_j(k)$, 
player $i$'s belief after the first $k$ periods is $\phi^{k}(p^i(h))<\widehat p< p^*_{k}$.
Hence,
$i$'s payoff is strictly lower under $(\sigma_i,\sigma_j(k))$ than it would be if $i$ would instead choose $S$ in period $k$, rather than experimenting one last time. 
Note that if $k>0$, the payoff of player $i$ under this alternative strategy coincides with his payoff in the first thought experiment in Section \ref{sec: encouragement} and is therefore negative, since $p^1(h)< \widehat p$.\footnote{If $k=0$, the alternative strategy chooses $S$ at $h$, and yields a payoff of zero.}
\end{proof}

\medskip

\begin{proof}[Proof sketch of Theorem \ref{th private2}]
Consider the public history $h=(RR)^{n-1}\!\cdot\! R$, at which player 2 is active and holds the belief
 $p^2(h)= \phi^{n-1}(p_0)$. 
When at $h$, player 2 anticipates that he is about to learn whether player 1 was successful in the first $n$ periods and is thus in the situation of the second thought experiment in Section \ref{sec: encouragement}. Thus, deviating to $S$ is optimal unless $p^2(h) \geq p^*_n$. This proves the necessity part.
The  proof of the sufficiency part is much more involved and appears in Section~\ref{appendix:thm2022-1}. 
It requires to prove that, for each history, the payoff on the equilibrium continuation path exceeds the payoff on the continuation path induced after a one-step deviation. 
\end{proof}
\medskip

\begin{proof}[Proof of Lemma \ref{lemm last}]
Suppose that the  prior is such that the expected payoff of player 1 is zero in the following scenario:  both players experiment 
in period 1,
then the experiment outcomes are 
publicly
disclosed, and no further experimentation takes place
from period 2 and on (unless one of the experiments in period 1 was successful).
Denote by $\overline p$ the corresponding value of $p_0$.

In this scenario, the overall payoff of player 2 is also zero, hence the social planner's payoff is zero, implying that $\overline p\geq p^{**}$. On the other hand, this scenario is more favorable to player 1 than the first thought experiment, since a possible success of player 
2
in period 1 is immediately made public. Hence, $\overline p< \widehat p$. This implies $p^{**} <\widehat p$. 

    Observe next that if the prior is such that $p_0>  \phi^{-1}(p^{**})$, a social planner would choose to experiment at least twice. Hence the expected payoff of player 1 in the above scenario is positive, implying $\phi^{-1}(p^{**})\geq \overline p$.

    To complete the proof, it remains to show that $\widehat p<\phi^{-1}(\overline p)$. Our argument is algebraic. The cut-off value $\overline p$ is given by $\overline p= \displaystyle \frac{c(1-\delta)}{c(1-\delta)+ g(1-\delta(1-\lambda)^2)}$, so that 
    $\displaystyle \phi^{-1}(\overline p )=\frac{c(1-\delta)}{c(1-\delta)+ g(1-\lambda)(1-\delta(1-\lambda)^2)}$. Elementary manipulations show that the inequality $\widehat p<\phi^{-1}(\overline p)$ is equivalent to the inequality
    \[\lambda(1-\delta)^2+\lambda^2\delta(3-\delta-\lambda)>0,\]
   which trivially holds.
\end{proof}

\medskip\begin{proof}[Proof sketch of Theorem \ref{th private3}]
Set $\widehat N:=\inf\{n\geq 1 \colon \phi^n(p_0)< \widehat p\}$. By Lemma \ref{lemm last}, one has $N^{**}-2\leq \widehat N\leq N^{**}$.
Let  $(\sigma,\pi)$ be a pure reasonable SE. By Theorem \ref{th private1},  the final belief, $\phi^{N_{\mathrm e}}(p_0)$, is at most $\widehat{p}$ in the absence of a success, which implies%
\footnote{Plainly, $N_{\mathrm e}=+\infty$ if a player is successful.}
$N_{\mathrm e}\geq \widehat{N} \geq N^{**}-2$.

Suppose now that $\theta=B$. For any on-path history $h$ with active player $i$ one has $p^i(h) \leq \phi^{n^i_{\mathrm e}(h)}(p_0)$, see Remark \ref{remark pure beliefs}. This implies that $p^i(h)< \widehat p$ as soon as $n^i_{\mathrm e}(h)\geq \widehat N$. The second part of Theorem \ref{th private1} implies that $\sigma(h) = S$. Hence the total number of experiments is at most $N_{\mathrm{e}}\leq 2 \widehat N\leq 2N^{**}$.

The proof of the second part of  Theorem \ref{th private3} is based on Theorem \ref{th private2}.
\end{proof}
\section{Related literature}\label{sec lit}

The literature on strategic experimentation with exponential bandits is by now quite large, see, e.g., \cite{das2020strategic}, \cite{keller2010strategic}, \cite{keller2015breakdowns}, \cite{klein2011negatively}, \cite{marlats2021strategic}.
However, 
there are only a few papers  that relax the assumption that both actions/efforts and outcomes are observed. \cite{bonatti2011collaborating} focus on the hidden actions/observed payoffs case, and develop a model where agents collectively work on a project with uncertain prospects, facing a moral hazard problem due to the inability to monitor each other's actions. 
They find that
free-riding behavior among team members leads to reduced effort and procrastination,
and that simply improving monitoring does not necessarily lead to better outcomes.

\cite{rosenberg2007social} analyze a discrete-time, two-arm bandit problem with hidden outcomes, but assume that the choice of the safe arm is irreversible. \cite{murto2011learning} analyze information aggregation in a  game with irreversible exit, with private payoffs and public actions. 

\cite{heidhues2015strategic} consider a model that is essentially identical to ours, but allow for cheap talk communication between the players. They show that the socially optimal level of experimentation can be achieved when initial beliefs are sufficiently optimistic. 
They also find pure sequential equilibria that exhibit over-experimentation. However, their sequential equilibria rely on `non-reasonable' beliefs, which we rule out here.

The issue of over-experimentation relative to the socially efficient level also arises in 
\cite{halac2017contests}, who examine  how to design dynamic contests that promote innovation when there is uncertainty about the feasibility of the innovation, and show the optimality of hiding information about successes, under certain conditions.
\smallskip

The idea that some sequential equilibria are  unreasonable because they rely on incredible beliefs has led to many refinements, 
see the surveys \cite{hillas2002foundations} and \cite{van2002strategic}.
While most concepts were defined for finite extensive games and/or normal form games, 
and hence do not apply as such, the idea that underlies our notion of reasonable beliefs is implied by most existing refinements. For instance, any quasi-perfect equilibrium 
(\cite{van1984relation}) corresponds to a reasonable equilibrium.\footnote{Formally, this applies to finite-horizon truncations of our game.} A fortiori, any proper equilibrium 
(\cite{myerson1978refinements})
of the normal form of the (truncation) of the experimentation game corresponds to a reasonable equilibrium in the extensive form. The restriction to reasonable beliefs is also reminiscent of refinements defined for signaling games (\cite{banks1987equilibrium}, \cite{cho1987signaling}), and is implied by the \emph{intuitive criterion} as defined in \cite{cho1987refinement}.


\bibliographystyle{plainnat}
\bibliography{bibliography}

\begin{appendix}
%

\section{Proof of Theorem \ref{prop private}}
\label{section:proof:prop private}

Fix a Nash equilibrium $\sigma$, and denote by $N_{\mathrm s}\in \dN\cup \{+\infty\}$ the total number of times $S$ is used over time. We first prove that either $N_{\mathrm s}<+\infty$, or $N_{\mathrm e}<+\infty$: players eventually settle on the same arm. 

\begin{claim}\label{cl 1} One has $\prob_{\sigma}(N_{\mathrm s}=+\infty\mbox{ \textrm{and} } N_{\mathrm e}=+\infty)=0$.
\end{claim}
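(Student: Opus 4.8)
The plan is to formalize the ``cascade'' argument sketched after Theorem~\ref{prop private}: once the active player observes the opponent choosing $S$ at a public history at which enough experiments have already taken place, the (non-successful) belief of every future active player has already fallen below $\widetilde p$, so a self-sustaining phase of mutual $S$ must begin. Suppose, for contradiction, $\prob_\sigma(N_{\mathrm s}=+\infty\text{ and }N_{\mathrm e}=+\infty)>0$, and call $E$ this event. On $E$, $R$ is used infinitely often, so $n_{\mathrm e}$ of the successive public prefixes tends to $+\infty$; as $\phi$ is increasing with $\phi^n(p_0)\to 0$, eventually every visited prefix $h$ satisfies $\phi^{n_{\mathrm e}(h)}(p_0)<\widetilde p$, that is, $n_{\mathrm e}(h)>\widetilde N$. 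Since $S$ is also used infinitely often and there are countably many public histories, there is a fixed on-path history $h$ that ends with $S$, has $n_{\mathrm e}(h)>\widetilde N$, and satisfies $\prob_\sigma\bigl(E\cap\{h\text{ occurs}\}\bigr)>0$; on this sub-event, $R$ is played infinitely often after $h$. I will show this is impossible.

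Let $i$ be the active player at $h$, so the terminal $S$ of $h$ was played by $j$. Every on-path private history of $j$ at which $j$ is successful prescribes $R$ (strictly dominant once successful, as $g>0$) and is thus inconsistent with $h$; hence, conditional on the public history being $h$, player $j$ was not successful. (This is the only point where one must accommodate the fact that, in a Nash equilibrium, a player may choose $S$ after a success; it is harmless, since such behavior occurs only off path.) Split into two cases. If $\prob_\sigma\bigl(\{h\text{ occurs}\}\cap\{i\text{ not successful}\}\bigr)>0$, that private history of $i$ is on path and, by the previous paragraph, carries first-order belief exactly $\phi^{n_{\mathrm e}(h)}(p_0)<\widetilde p$; by (\ref{eq tildep}), choosing $R$ is strictly dominated there, so $\sigma(h)=S$. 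Then $hS$ again ends with $S$, has $n_{\mathrm e}(hS)>\widetilde N$, is on path, and --- since conditional on $hS$ no player was ever successful --- the ``not successful'' private history of its active player is again on path; iterating, $\sigma(hS^k)=S$ for all $k\ge 0$, so the $\sigma$-continuation from $h$ is $S^\infty$, contradicting that $R$ occurs infinitely often after $h$.

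In the remaining case, $\prob_\sigma\bigl(\{h\text{ occurs}\}\cap\{i\text{ not successful}\}\bigr)=0$: conditional on reaching $h$, player $i$ was successful, hence $\theta=G$ and $i$ plays $R$ forever after $h$. Then every public history extending $h$ is consistent only with ``$i$ successful'', so at every on-path private history of $j$ beyond $h$ player $j$ assigns probability $1$ to $\theta=G$; since $g>0$, choosing $R$ is then strictly optimal for $j$, so $j$ too plays $R$ forever after $h$, and $N_{\mathrm s}<+\infty$ on the event $\{h\text{ occurs}\}$. This contradicts $\prob_\sigma(E\cap\{h\text{ occurs}\})>0$. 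In both cases we reach a contradiction, which proves the claim.

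The step I expect to be the main obstacle is not the monotonicity bound $\phi^{n_{\mathrm e}(h)}(p_0)<\widetilde p$ nor the dominance inequality (\ref{eq tildep}), but keeping track of \emph{which} private information sets are on path in a \emph{Nash} (as opposed to sequential) equilibrium: optimality binds only at positive-probability histories, which is exactly what forces the case split above and, in particular, the separate treatment of public histories reachable only through the active player's own success. Ensuring that every invocation of ``the active player must play $S$'' or ``player $j$ must play $R$'' is made at a positive-probability history is where the care is needed.
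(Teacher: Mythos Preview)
Your proof is correct and follows the same cascade argument as the paper: once a public history ending in $S$ is reached with $n_{\mathrm e}(h)>\widetilde N$, the active player's belief (when unsuccessful) equals $\phi^{n_{\mathrm e}(h)}(p_0)<\widetilde p$, forcing $S$, and this propagates.

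Your write-up is actually more careful than the paper's on one point. The paper simply asserts $p^i(h)=\phi^{n_{\mathrm e}(h)}(p_0)$ and $\sigma(h)=S$ for any on-path $h=\overline h S$ with $n_{\mathrm e}(h)>\widetilde N$, without checking that the private history $(h,\,i\text{ not successful})$ has positive probability---which is what Nash optimality requires. Your Case~2 fills this gap: if that private history has probability zero, then $i$ is successful a.s.\ at $h$, hence $\theta=G$ a.s., $i$ plays $R$ forever, and $j$'s on-path belief jumps to $1$ so $j$ plays $R$ forever as well; thus $N_{\mathrm s}<+\infty$ on $\{h\}$, contradicting $E$. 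This case can genuinely occur for public histories that are reachable only via an earlier success (e.g., after a node where $\sigma$ prescribes $S$ but a successful player plays $R$), so the split is not idle. That said, conditional on $\theta=B$---which is all the paper ultimately needs for Theorem~\ref{prop private}---no player is ever successful and Case~2 is vacuous, so the paper's shortcut is harmless for its purposes. Your version proves the unconditional claim as stated.
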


\begin{proof}
Assume there is an on-path history of the form $h= \overline hS$  such that $n_{\mathrm e}(h)>\widetilde N$, and let $i$ be the player active at $h$. Then
$p^i (h)= \phi^{n_{\mathrm e}(h)}(p_0)< \phi^{\widetilde N}(p_0) \leq \widetilde p$, and hence  $\sigma(h)=S$. By the same argument,  $\sigma(hS^k)=S$ for each $k\geq 1$.
 This implies that on the event $N_{\mathrm e}> \widetilde N$, one has either $N_{\mathrm s}<+\infty$, or $N_{\mathrm e}<+\infty$, $\prob_\sigma$-a.s. The result follows.
\end{proof}

\medskip

We next prove that $\prob_{\sigma}(hR^\infty\mid B)=0$, for every history $h\in H$. 
Summing over the countable set $H$ of finite histories will imply that $\prob_{\sigma}\left(N_{\mathrm s}<+\infty\mbox{ and } N_{\mathrm e}=+\infty\mid B\right)=0$. Combined with Claim \ref{cl 1}, this implies that $N_{\mathrm e}<\infty$ a.s.~when $\theta=B$.

\begin{claim}\label{cl 2}
One has  $\prob_{\sigma}(hR^\infty\mid B)=0$ for each $h\in H$.
\end{claim}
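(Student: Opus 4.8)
The plan is to argue by contradiction: suppose that $\prob_\sigma(hR^\infty\mid B)>0$ for some $h\in H$. We may assume $h$ is on-path (conditional on $\theta=B$), so that $\prob_\sigma(hR^\infty\mid B)>0$ makes sense and is positive. On the event $hR^\infty$, after $h$ both players choose $R$ forever, regardless of their private outcomes (in particular, even though conditional on $\theta=B$ all their experiments fail). The idea is that on this event each player $i$ eventually becomes (subjectively) almost certain that the other player $j$ will play $R$ forever no matter what, so $j$'s future choices carry essentially no information. Consequently $i$'s posterior on $\theta=G$ is, in the long run, driven almost entirely by $i$'s own failed experiments, and therefore drifts down to $0$. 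Once $p^i$ falls below $\widetilde p$ at a history where $i$ is active, $i$ must play $S$ — contradicting that $i$ plays $R$ forever on a positive-probability event.

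First I would set up the Bayesian bookkeeping. Fix the event $A_h:=\{hR^\infty\}$ and work conditional on $\theta=B$; note $\prob_\sigma(A_h\mid B)=:\beta>0$. For a public history $h'$ extending $h$ along the all-$R$ path, let me compare, from player $i$'s perspective at $h'$, the likelihood of the continuation observation ``$j$ plays $R$ at his next turn'' under $\theta=G$ versus $\theta=B$. The key quantity is the conditional probability, under $\sigma$ and given the public history so far and given that $i$ himself has had only failures, that $j$ will choose $R$ at his next move. Write $\rho^{i}_B(h')$ and $\rho^{i}_G(h')$ for these conditional probabilities under $\theta=B$ and $\theta=G$ respectively. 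On the event $A_h$ (a $\prob_\sigma(\cdot\mid B)$-positive event), $j$ does play $R$ forever; since $\sigma$ is fixed and play is consistent with $A_h$, a martingale/Lévy-type argument (or a direct Borel–Cantelli-style estimate) shows that along the realized path $\rho^i_B(h')\to 1$ as $|h'|\to\infty$ on $A_h$, because otherwise the event that $j$ nonetheless plays $R$ forever would have probability zero conditional on $\theta=B$. Hence the likelihood ratio contribution of each of $j$'s subsequent moves to $i$'s posterior, namely $\rho^i_G(h')/\rho^i_B(h')$, is at most $1/\rho^i_B(h')\to 1$, so these factors cannot keep $i$'s odds on $\theta=G$ bounded away from $0$.

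Next I would combine this with $i$'s own experiments. Along $A_h$, player $i$ himself experiments infinitely often (he plays $R$ forever), and conditional on $\theta=B$ every such experiment fails, while conditional on $\theta=G$ each fails only with probability $1-\lambda$. Each of $i$'s failed experiments therefore multiplies $i$'s odds $p^i/(1-p^i)$ on $\theta=G$ by $(1-\lambda)<1$. Writing $i$'s posterior odds after $k$ of his own experiments as a product over: (a) the factors $(1-\lambda)$ from his own $k$ failures, and (b) the factors $\rho^i_G/\rho^i_B \le 1/\rho^i_B$ coming from $j$'s moves, I get that on $A_h$ the odds are bounded above by $(1-\lambda)^k \prod (1/\rho^i_B(h'))$; since $\sum(1-\rho^i_B(h'))<\infty$ on $A_h$ the second product converges to a finite limit, so the odds $\to 0$, i.e. $p^i\to 0$ on $A_h$. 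In particular $p^i$ eventually drops below $\widetilde p$ at a history where $i$ is active, forcing $\sigma$ to prescribe $S$ there — contradicting $A_h\subseteq\{\text{$i$ plays $R$ forever after $h$}\}$ and $\prob_\sigma(A_h\mid B)>0$.

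The main obstacle I anticipate is making the claim ``$\rho^i_B(h')\to 1$ and $\sum(1-\rho^i_B(h'))<\infty$ on $A_h$'' precise and correct. The clean way is to use the conditional Borel–Cantelli lemma / Lévy's extension: conditional on $\theta=B$ and on $i$'s private sequence of failures, let $X_n$ be the indicator that $j$ deviates from $R$ at his $n$-th turn after $h$; then $A_h$ (restricted to this branch) is the event $\{\sum X_n = 0\}$, and on this event one must have $\sum \prob_\sigma(X_n=1\mid \calF_{n-1},B)<\infty$ almost surely — this is exactly $\sum(1-\rho^i_B)<\infty$. One subtlety, flagged in the paper's footnotes about the concise notation, is that $\sigma$ may prescribe dominated off-path continuations and a player might be ``successful'' yet the profile notation hides it; I would handle this by working with the genuine strategies (allowing the successful player's actual behavior) and noting that a successful $i$ plays $R$ forever anyway, so the contradiction is only needed on the $\theta=B$ branch, where no one is ever successful and the bookkeeping above applies verbatim.
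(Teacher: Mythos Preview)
Your approach is correct and reaches the same contradiction as the paper, but it is considerably more elaborate than needed. The paper avoids the step-by-step decomposition and the conditional Borel--Cantelli argument entirely, via a single application of Bayes' rule to the \emph{aggregate} likelihood. Writing $h_n$ for the prefix of $hR^\infty$ of length $2n$ and $h^1_n$ for player~1's private history along $h_n$ with all failures, the posterior odds are
\[
\frac{p^1(h_n)}{1-p^1(h_n)} = \frac{p_0}{1-p_0}\cdot \frac{\prob_\sigma(h^1_n\mid G)}{\prob_\sigma(h^1_n\mid B)}.
\]
The denominator $\prob_\sigma(h^1_n\mid B)$ equals $\prob_\sigma(h_n\mid B)$ (conditional on $B$ no one is ever successful, so the public history pins down player~1's private history) and this simply converges to $\prob_\sigma(hR^\infty\mid B)=\beta>0$; no per-step control of $\rho^i_B$ is needed. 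The numerator is bounded by $(1-\lambda)^{n_{\mathrm e}^1(h_n)}\to 0$, since each of player~1's failures has probability $1-\lambda$ under $G$. Hence $p^1(h_n)\to 0$, forcing $\sigma(h_n)=S$ for large $n$, the desired contradiction.

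Your decomposition into ``own-experiment'' and ``other-player-move'' factors, together with the Borel--Cantelli bound $\sum(1-\rho^i_B)<\infty$, is in effect a roundabout way of re-establishing that the denominator above stays bounded away from zero: the infinite product $\prod_n \rho^i_B$ is precisely (the relevant piece of) $\prob_\sigma(hR^\infty\mid B,h)$. So the ``main obstacle'' you flag dissolves once you look at the aggregate probability rather than the telescoping factors. Both arguments are valid; the paper's is shorter and sidesteps the measure-theoretic housekeeping.
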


\begin{proof}
We argue by contradiction, and let $h\in H$ be such that  $\prob_{\sigma}(hR^\infty\mid B)>0$. For $n\geq 0$,   let  $h_n$ denote the restriction of $hR^\infty$ to the first  $n$ periods, so that $|h_n|=2n$. 
Denote by $h^1_n$ the \emph{private} history of player 1 along which the sequence of choices is as in $h_n$ and all experiments of player 1 failed. 
Then $p^1(h_n)= \prob_\sigma(\theta= G\mid h^1_n)$.

By Bayes rule, 
\begin{equation}\label{finite number}
\frac{p^1(h_n)}{1-p^1(h_n)}= \frac{\prob_\sigma\left(\theta= G\mid h^1_n\right)}{\prob_\sigma\left(\theta= B\mid h^1_n\right)}= 
\frac{p_0}{1-p_0}\times \frac{\prob_{\sigma}\left(h^1_n\mid \theta= G\right)}{\prob_{\sigma}\left(h^1_n\mid \theta = B\right)}.\end{equation}

Plainly, $\prob_{\sigma}\left(h^1_n\mid \theta = B\right)= \prob_{\sigma}(h_n\mid B)$, and therefore converges to $\prob_{\sigma}(hR^\infty\mid B)>0$. 
Hence the denominator on the RHS of (\ref{finite number}) has a non-zero limit.

Since the probability of a failure is $1-\lambda$ if $ \theta= G$, we have 
$\prob_{\sigma}\left(h^1_n\mid \theta = G\right)\leq (1-\lambda)^{n_e^1(h_n)}$. Since $n_e^1(h_n)\to +\infty$, Eq.~\eqref{finite number}
implies that $\lim_{n\to +\infty}p^1(h_n)=0$. In particular, $p^1(h_n)<\widetilde p$ for all $n$ large enough, so that $\sigma(h_n)=S$ for all such $n$'s. 
This contradicts the assumption that $\prob_{\sigma}(hR^\infty\mid B)>0$.
\end{proof}

\section{The proofs for Section \ref{sec examples}: examples}

\subsection{Proof of Proposition \ref{ex pure reasonable SE}}\label{sec proof of pure reasonable SE}

Let $h$ be an arbitrary history, with active player $i$. We prove that $i$ has no profitable one-step deviation at $h$. If $i$ is convinced that $j$ was successful, that is, if $p^i(h)=1$, then choosing $R= \sigma(h)$ is optimal. We henceforth assume that $p^i(h)<1$.

Note that if $h= \overline h \!\cdot\! RSR$ ends with $RSR$, then  the definition of $\sigma$ in Section \ref{sec: ex2} implies that $\sigma(\overline h\!\cdot\! RS)= S$, and hence $p^i(h)=1$. 
The assumption $p^i(h)<1$ thus rules out such histories.

\smallskip\noindent
\textbf{Case 1.} Assume that $h= S^k$ for some $k\geq 0$. The continuation play induced by $\sigma$ is $RR\cdot S^\infty$. If instead $i$ deviates to $S$, the continuation play is $SR\cdot RS\cdot S^\infty$. Relabeling periods starting from $h$,  player $i$ experiments once in both cases, and observes the outcome of $j$ in period 2, after his own experiment. Hence which sequence is preferred by $i$ depends on whether he prefers to choose $R$ in period 1 or in period 2. Since $p_0\geq p^*$, the former 
is preferred.

More formally, the continuation payoff induced by $\sigma$ is 
\[-c(1-\delta)+p_0\lambda( (1-\delta)m+\delta g) +p_0 \lambda (1-\lambda) \delta^2g,\] while the payoff when deviating is $\delta(-c(1-\delta)+p_0\lambda((1-\delta)m+\delta g)+p_0(1-\lambda)\lambda\delta^2 g$. 
The former is preferred if and only if 
$-c(1-\delta)+p_0\lambda( (1-\delta)m+\delta g)\geq 0$, that is, $p_0\geq p^*$.

\smallskip\noindent
\textbf{Case 2.} Assume that $h= S^kR$ for some $k\geq 0$. Choosing $R= \sigma(h)$ induces the continuation play $RS^\infty$. Deviating to $S$ induces $S^\infty$. Since $p^i(h)= p_0\geq p^*_1$, the deviation is not profitable.

\smallskip\noindent
\textbf{Case 3.} Assume that $h=\overline h\!\cdot\! SS$ for some $\overline h$, with $n_{\mathrm e}(h)\geq 1$. Choosing $S= \sigma(h)$ induces the continuation play $S^\infty$. Deviating to $R$ induces $S^\infty$, because $p^j(hR)<1$. Since $p^i(h)= \phi^{n_{\mathrm e}(h)}(p_0)\leq \phi(p_0)\leq \widehat p\leq p^*$, the deviation is not profitable.

\smallskip\noindent
\textbf{Case 4.} Assume that $h= \overline h\cdot RS$  for some $\overline h$. Choosing $S=\sigma(h)$ induces the continuation play $S^\infty$. If $i$ chooses instead $R$, the belief of $j$ will jump to $p^j(hR)=1$, hence $j$ will choose $R$. According to $\sigma$, $i$ will follow with $S$, so that the continuation play is $RR\cdot S^\infty$. Since $p^i(h)= \phi^{n_{\mathrm e}(h)}(p_0)\leq \widehat p$, the deviation is not profitable.

\smallskip\noindent
\textbf{Case 5.} Assume that $h= \overline h\!\cdot\! RR$ for some $\overline h$. Since $p^i(h)<1$, choosing $S= \sigma(R)$ induces the continuation play $S^\infty$. As in \textbf{Case 4}, the continuation play  is $RR\cdot S^\infty$ if $i$ deviates to $R$. Since $p^i(h)<1$, the belief of $i$ is equal to $\phi^n(p_0)$ for some 
$n\geq 1$, hence $p^i(h)\leq \widehat p$ and the deviation is therefore not profitable.

\smallskip\noindent
\textbf{Case 6.} Assume that $h= \overline h\!\cdot\! SSR$ for some $\overline h$, with $n_{\mathrm e}(h)\geq 2$. 
Choosing $S= \sigma(R)$ induces the continuation play $S^\infty$, while deviating to $R$ induces the continuation play $RS^\infty$. From the structure of $h$, it follows that $p^i(h)= \phi^{n_{\mathrm e}(\overline h)}(p_0)\leq p^*_1$, hence the deviation is not profitable.

\subsection{Proofs for Section \ref{sec nopure}}\label{sec proof nopure}

\begin{lemma}\label{prop A1}
Let $\sigma$ be a pure reasonable $\mathrm{SE}$.
If $p_0<p^*$, then the 
path induced by $\sigma$ is $S^\infty$.
\end{lemma}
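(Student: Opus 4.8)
The statement to prove is Lemma~\ref{prop A1}: if $p_0 < p^*$, then any pure reasonable SE $\sigma$ induces the path $S^\infty$. The natural approach is by contradiction, combined with Theorem~\ref{prop private}, which already tells us that $N_{\mathrm e}$ is finite on-path when $\theta = B$ (indeed $N^* \leq N_{\mathrm e} < +\infty$; but since $p_0 < p^*$ we have $N^* = 0$, so Theorem~\ref{prop private} only gives $N_{\mathrm e} < +\infty$). So suppose for contradiction that $\sigma(\emptyset) = R$ (w.l.o.g., up to relabeling which player moves first, though since player 1 moves first and $p^1(\emptyset) = p_0$, we really just assume some experimentation occurs on-path, and let $h_0$ be the first on-path history where the active player experiments). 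The key point will be to locate a ``last experiment'' on the equilibrium path and derive a profitable deviation there, exactly as in the proof sketches of Theorems~\ref{th private1} and~\ref{prop private}.

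\textbf{Key steps.} First, since experimentation stops in finite time on-path (by Theorem~\ref{prop private}), there is an on-path history $h$ ending in $R$ such that the continuation path is $S^\infty$ with probability one; let $i$ be the player active at $hS$ (the other player), who has experimented in the period corresponding to the last $R$ of $h$. Actually, more carefully: I want to get to the \emph{very first} experiment on-path. Let me reconsider — the cleanest route is to consider the first on-path node where experimentation occurs. Suppose player 1 experiments at $\emptyset$, so the on-path play begins $R?$. By reasonability, after player 1's choice of $R$, player 2's belief is $p^2(R) = \phi(p_0) < p_0 < p^*$ if player 1 was unsuccessful, but of course $p^2(R)$ also accounts for a possible success. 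Here I must use the structure: since $\sigma$ is pure, along the on-path continuation either player 1 keeps playing $R$ or switches to $S$. The argument will show that experimenting at $\emptyset$ is not optimal for player 1: by Remark~\ref{remark pure beliefs}, player 1's belief at $\emptyset$ is $p_0 < p^*$, and choosing $R$ yields flow reward $p_0 \lambda m - c$ plus a continuation value bounded by the value $i$ could obtain by choosing $S$ and then learning player 2's eventual information. The subtlety is that by choosing $R$ player 1 might induce player 2 to experiment (the encouragement effect), so I cannot immediately invoke $\widetilde p$ or $p^*$. This is where I need an argument in the spirit of the ``first thought experiment'': even the best-case scenario for player 1 — where choosing $R$ induces player 2 to experiment once, with outcome disclosed one period later — requires $p_0 \geq \widehat p$, and more generally if player 2 is induced to experiment $k \geq 1$ extra times and then reveal, the relevant cut-off is $\widehat p_k$ or something bounded below by... hmm. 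Actually since $p_0 < p^*$ and we need to compare with choosing $S$ at $\emptyset$ (which yields $0$, or yields the option of being encouraged \emph{as player 2}), the right move is: compare the equilibrium payoff of player 1 to what he gets by deviating to $S$ at $\emptyset$.

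\textbf{The main obstacle and how to handle it.} The genuine difficulty is that deviating to $S$ at the root may itself trigger complicated continuation play (player 2 may or may not experiment, and the off-path beliefs matter). The clean way around this, following the paper's own logic in the Theorem~\ref{th private1} proof sketch, is: place ourselves in the continuation game at the \emph{last} on-path experiment $h$ (ending in $R$) with $S^\infty$ after it. Let $i$ be the player who chose that last $R$; he chose $R$ at $h' := h$ with the parent history $h = \bar h \cdot R$... let me restate. Let $\bar h$ be the on-path history at which the active player $i$ chooses $R$ for the last time on-path, so the on-path continuation is $R \cdot S^\infty$ after $\bar h$. By Remark~\ref{remark pure beliefs}, $p^i(\bar h) = \phi^n(p_0)$ for some $n \geq n_{\mathrm e}^i(\bar h) \geq 0$, hence $p^i(\bar h) \leq p_0 < p^*$. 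Now player $i$ at $\bar h$: choosing $R$ gives flow $(1-\delta)(p^i(\bar h)\lambda m - c)$ plus continuation payoff at most $\delta \cdot p^i(\bar h) \cdot \lambda g$ (the success continuation, received from next period, with nobody else experimenting since the on-path play is $S^\infty$ — here the reasonability/pure structure ensures player $j$ at $\bar h R S$ has belief $\phi^{n_{\mathrm e}(\bar h R)}(p_0) < p^*$, etc., so $j$ genuinely plays $S$). Wait — but player $j$'s continuation after $\bar h R$ being $S^\infty$ is precisely what the ``last experiment'' choice guarantees, so player $i$'s continuation value from $R$ is exactly $(1-\delta)(p^i(\bar h)\lambda m - c) + \delta p^i(\bar h)\lambda g \leq (1-\delta)(p^* \lambda m - c) + \delta p^* \lambda g = 0$ by definition~\eqref{def:p*} of $p^*$, and since $p^i(\bar h) < p^*$ this is $< 0$ (using $g > 0$). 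But choosing $S$ at $\bar h$ guarantees $i$ a nonnegative continuation payoff (just play $S$ forever). Hence deviating to $S$ at $\bar h$ is strictly profitable, contradicting that $(\sigma,\pi)$ is a sequential equilibrium. The remaining care is to verify that such a ``last on-path experiment'' $\bar h$ exists given $\sigma(\emptyset) = R$ and $N_{\mathrm e} < +\infty$ a.s. on $\theta = B$ (Theorem~\ref{prop private}), and that the monotone-belief structure of Remark~\ref{remark pure beliefs} indeed forces $p^i(\bar h) \leq p_0$; both are essentially bookkeeping. I'll write the proof in this order: (1) reduce to the existence of a last on-path experiment via Theorem~\ref{prop private}; (2) bound the belief there by $p_0 < p^*$ via Remark~\ref{remark pure beliefs}; (3) compute the continuation payoff from $R$ and show it is negative using~\eqref{def:p*}; (4) conclude with the $S$-forever deviation.
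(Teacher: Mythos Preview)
Your argument has a genuine gap at step (3). When you compute player $i$'s payoff from choosing $R$ at the last on-path experiment $\bar h$, you bound the continuation by ``at most $\delta\, p^i(\bar h)\lambda g$'', but this omits the positive contribution from the event that player $j$ was already successful along $\bar h$. If $q^i(\bar h)>0$ (which happens whenever $j$ has experimented earlier and has not since played $S$), then after $\bar h R$ a successful $j$ plays $R$ rather than $\sigma(\bar h R)=S$, player $i$'s belief jumps to $1$, and $i$ collects $g$ thereafter. That extra term $\delta\, p^i(\bar h)(1-\lambda)q^i(\bar h)\, g$ can push the payoff from $R$ above zero even when $p^i(\bar h)<p^*$, so comparing with the $S$-forever deviation (payoff zero) does not yield a contradiction. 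The natural repair---compare $R$ with $S$ followed by the equilibrium continuation, so that the $q$-term appears symmetrically---works only if you know $\sigma(\bar h S)=S$; but $\bar h S$ is off-path, and nothing in your argument controls $\sigma$ there.

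The paper's proof is organized precisely to avoid this problem. It first establishes a structural claim valid at \emph{all} histories: $\sigma(hS)=S$ whenever $p^1(hS),p^2(hS)<1$. This is done by considering the set $H_0$ of counterexamples, picking $h\in H_0$ with maximal $n_{\mathrm e}(h)$, and deriving a profitable deviation. Two features of this setup are essential. First, working at histories of the form $hS$ means $j$ has just played $S$, so reasonability forces $q^i(hS)=0$ and your payoff formula becomes exact. Second, the maximality of $n_{\mathrm e}(h)$ guarantees that after any subsequent $S$ the continuation is $S^\infty$, which is exactly the off-path control you were missing. Only after this structural lemma does the paper conclude that the on-path play must be $R^kS^\infty$ and finish with the Case~2 computation.
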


\begin{proof} We first show that $\sigma(hS)=S$ whenever $p^i(hS)<1$ for each $i$. We argue by contradiction and assume that the set
\[H_0:=\{h: p^1(hS)<1,\ p^2(hS)<1, \mbox{ and } \sigma(hS)=R\}\] 
is non-empty.

Let $h\in H_0$ be arbitrary and $i$ be the player active at $hS$. 
Since $\sigma$ is a pure reasonable SE, one has $p^i(hS)= \phi^{n_{\mathrm e}(h)}(p_0)$. On the other hand, $p^i(h)\geq \widetilde p$, since $\sigma(hS)=R$. 
Therefore, $\max_{h \in H_0}n_{\mathrm e}(h)\leq \widetilde N$.

Let $h \in H_0$ be such that
$n_{\mathrm e}(h)= \max_{h' \in H_0} n_{\mathrm e}(h')$. We prove that player $i$ has a profitable one-step deviation at $h$, which will yield the desired contradiction.

Since $h$ maximizes $n_{\mathrm e}(\cdot)$ over $H_0$, the  continuation play induced by $\sigma$ after $hS$ is either equal to  $R^kS^\infty$ for some $k\geq 1$, or to $R^\infty$. We rule out both cases in turn.

\smallskip\noindent
\textbf{Case 1:} The continuation play is $R^\infty$. 

In this case,  $p^i(hS\!\cdot\! (RR)^k)=\phi^{n_{\mathrm e}(h)+k}(p_0)$ for each $k$, and hence  $p^i(hS\!\cdot\! (RR)^k)<\widetilde p$  for all $k$ large enough, contradicting the assumption $\sigma(hS\!\cdot\! (RR)^k)=R$. 

\smallskip\noindent
\textbf{Case 2:} The continuation play is $ R^k S^\infty$, for some $k\geq 1$.
    
Assume first that $k=1$, and let us place ourselves at the history $hS$. Since player $j$ just played $S$, player $i$ expects his experiment $R=\sigma(h)$ at $hS$ to be the last.
Since $p^i( hS)\leq p_0<p^*$, the equilibrium continuation payoff of $i$ at $hS$ is (strictly) negative. 
On the other hand the optimal continuation payoff when deviating to $S$ is non-negative, irrespective of the continuation strategy of player $j$ --- a contradiction.

Assume now that $k>1$.
Set $\widehat h:= hS\cdot R^{k-1}$, and let $i$ be the player active at $\widehat h$.
Since $p^1(hS),p^2(hS)\leq p_0$, and since $\sigma$ induces the sequence of choices $R^{k-1}$ after $hS$, 
it follows that 
$p^i(\widehat h)\leq p^i(hS)\leq p_0<p^*$.
By the maximality property of $ h$ and since $n_{\mathrm e}(\widehat h)> n_{\mathrm e}( h)$, the continuation play 
induced by $\sigma$ after $\widehat h a$ is $S^\infty$ for each $a\in \{R,S\}$. Since $p^i(\widehat h)<p^*$, 
player $i$'s continuation payoff at $\widehat h$ is higher when deviating to $S$ than with $R=\sigma(\widehat h)$. 
\medskip

We now conclude by showing that $N_{\mathrm e}=0$ under $\sigma$. Assume instead that $N_{\mathrm e}\geq 1$. 
The first part of the proof implies  that the play  induced by $\sigma$ is of the form $R^kS^\infty$ for some $k\geq 1$. A contradiction is obtained by repeating the 
arguments in Case 2. 
This concludes the proof of Lemma \ref{prop A1}.
\end{proof}

\begin{lemma}\label{prop A2}
Let $\sigma$ be a pure reasonable $\mathrm{SE}$. Let $h$ be a history such that $p^i(h)<p^*$ for each $i$. Then the continuation play following $h$ is $S^\infty$.
\end{lemma}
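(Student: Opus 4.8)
The plan is to localise the proof of Lemma~\ref{prop A1} to the subtree rooted at $h$, with the hypothesis ``$p^i(h)<p^*$ for each $i$'' taking over the role that $p_0<p^*$ played there. I rely on two facts about the reasonable beliefs of a \emph{pure} $\sigma$, both contained in Remark~\ref{remark pure beliefs}: (i) along any sequence of choices the first-order beliefs are non-increasing until they possibly jump to $1$; and (ii) $p^k(h')=1$ or $p^k(h')=\phi^{n}(p_0)$ with $n_{\mathrm e}^k(h')\le n\le n_{\mathrm e}(h')$, so $\phi^{n_{\mathrm e}(h')}(p_0)\le p^k(h')$. Hence, if $h'$ extends $h$ and no belief equals $1$ at any history between $h$ and $h'$, then $p^1(h'),p^2(h')\le\max\{p^1(h),p^2(h)\}<p^*$, and also $\phi^{n_{\mathrm e}(h')}(p_0)\le\phi^{n_{\mathrm e}(h)}(p_0)\le\min\{p^1(h),p^2(h)\}<p^*$. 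I also use that on the $\sigma$-recursion path issued from any $g$ with $p^1(g),p^2(g)<1$ no belief ever reaches $1$: on such a path the mover plays $\sigma$ of the current history, and a player gets convinced of the other's success only when the other plays $R$ where $\sigma$ prescribes $S$, which cannot occur on a recursion path. So all the recursion continuations we examine are ``nice'' in the above sense.

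\emph{Step 1 (reduction).} I claim $\sigma(h'S)=S$ for every $h'$ extending $h$ such that no belief equals $1$ between $h$ and $h'S$. Suppose not, and let $H_0$ be the set of such $h'$ with $\sigma(h'S)=R$; then $H_0\neq\emptyset$. For $h'\in H_0$ with active player $i$ at $h'S$, reasonability gives $p^i(h'S)=\phi^{n_{\mathrm e}(h')}(p_0)$, and $\sigma(h'S)=R$ forces $p^i(h'S)\ge\widetilde p$, so $n_{\mathrm e}(h')\le\widetilde N$; thus $n_{\mathrm e}$ is bounded on $H_0$ and I pick $h^\dagger\in H_0$ maximising it. The $\sigma$-continuation path from $h^\dagger S$ starts with $R$ and is nice; any occurrence of ``$S$ then $R$'' on it would produce an element of $H_0$ with strictly larger $n_{\mathrm e}$, so the path is $R^kS^\infty$ ($k\ge1$) or $R^\infty$. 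If it is $R^\infty$: each active player then learns only from his own experiments (the other experiments for sure under $\sigma$), so his belief tends to $0$ and eventually falls below $\widetilde p$, contradicting $\sigma=R$. If it is $R^kS^\infty$ with $k=1$: at $h^\dagger S$ the active player expects his experiment to be the last, so his equilibrium continuation payoff is $(1-\delta)(p^i(h^\dagger S)\lambda m-c)+\delta p^i(h^\dagger S)\lambda g<0$ (as $p^i(h^\dagger S)<p^*$), whereas deviating to $S$ guarantees a non-negative payoff, contradicting sequential rationality. If $k>1$: set $\widehat h:=h^\dagger S\cdot R^{k-1}$, with active player $\widehat\imath$; then $p^{\widehat\imath}(\widehat h)<p^*$ by the estimates above, and by maximality of $n_{\mathrm e}(h^\dagger)$ the continuation after $\widehat h\,a$ is $S^\infty$ for each $a\in\{S,R\}$, so $R=\sigma(\widehat h)$ is again beaten by $S$. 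In every case we reach a contradiction, so $H_0=\emptyset$.

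\emph{Step 2 (conclusion).} If $\sigma(h)=S$, iterate Step~1 along the moves $S,S,\dots$ (each $S$ keeps both beliefs $<1$, so niceness is preserved) to get that the $\sigma$-path from $h$ is $S^\infty$. If $\sigma(h)=R$, the $\sigma$-path from $h$ starts with $R$, is nice, and contains no ``$S$ then $R$'' (since $H_0=\emptyset$); hence it is $R^kS^\infty$ ($k\ge1$) or $R^\infty$. The $R^\infty$ case is excluded because beliefs along it tend to $0$. The $R^kS^\infty$ case is excluded by the ``last experiment has negative value'' argument: for $k=1$, $\sigma(h)=R$ is suboptimal because the continuation payoff at $h$ equals $(1-\delta)(p^i(h)\lambda m-c)+\delta p^i(h)\lambda g<0$ while $S$ guarantees $0$; for $k>1$, the active player at $h\cdot R^{k-1}$ has belief $<p^*$ (monotonicity along the recursion path) and, using $H_0=\emptyset$ for the $S$-branch, faces $S^\infty$ after either move, so $R$ is again suboptimal. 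Thus $\sigma(h)=S$ and the $\sigma$-path from $h$ is $S^\infty$.

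\emph{Main obstacle.} The delicate part is the belief bookkeeping of the first paragraph: since $h$ may arise from a prior $p_0>p^*$, one must certify that every belief entering the single-agent comparisons stays below $p^*$ along all continuations of $h$, and that no belief silently equals $1$ on the recursion paths used (this is what licenses ``a player learns only from his own experiments''). Both follow from monotonicity of reasonable beliefs along choice sequences, the inequality $\phi^{n_{\mathrm e}(h)}(p_0)\le\min\{p^1(h),p^2(h)\}<p^*$, and the observation that on a $\sigma$-recursion path nobody becomes convinced of the other's success. Granting this, every remaining sub-argument is a transcription of the corresponding step in the proof of Lemma~\ref{prop A1}.
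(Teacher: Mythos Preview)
Your localisation of Lemma~\ref{prop A1} to the subtree at $h$ is essentially sound and differs genuinely from the paper's route. The paper argues \emph{globally}: it sets $A:=\{p:\ \exists\, p_0,\sigma,h \text{ with }p^1(h),p^2(h)<p,\ \sigma(h)=R\}$ (ranging over \emph{all} priors and pure reasonable SEs), picks $p\in A$ near $\inf A$ so that $\phi(p)<\inf A$, and this single inequality forces the continuation after any such $h$ to be exactly $RS^\infty$. From there it shows $hS,hSS\in H_1$ and finishes by noting that the subgame at $hSS$ is isomorphic to a fresh game with prior below $p^*$, whence Lemma~\ref{prop A1} gives $\sigma(hSS)=S$ --- a contradiction. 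The global infimum buys an immediate ``beliefs drop below threshold after one experiment'' statement, replacing your $H_0$-maximisation and $k$-case analysis; your direct approach in turn avoids both the detour through Lemma~\ref{prop A1} and the subgame-isomorphism step.

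There is one genuine gap. In Step~2, case $k=1$, you assert that the continuation payoff from $R$ at $h$ \emph{equals} $(1-\delta)(p^i(h)\lambda m-c)+\delta p^i(h)\lambda g$. This is only correct when $q^i(h)=0$; nothing in the hypotheses of the lemma forces this (e.g.\ $h$ may end with several uninformative $R$'s by $j$). When $q^i(h)>0$, the payoff from $R$ strictly exceeds your formula, because at $hR$ player $j$ discloses (since $\sigma(hR)=S$) and $i$ collects $\delta g$ with positive probability even if his own experiment fails; so the comparison ``$\gamma_R<0\le\gamma_S$'' does not go through as written. The conclusion is nonetheless correct: since $\sigma(hR)=\sigma(hS)=S$ and both branches lead to $S^\infty$, player $i$ faces the second thought experiment of Section~\ref{sec second thought}, and a direct computation gives
\[
\gamma_R-\gamma_S=(1-\delta)\bigl(p^i(h)\lambda m-c\bigr)+\delta\, p^i(h)\lambda\bigl(1-q^i(h)\bigr)g \;\le\;(1-\delta)\bigl(p^i(h)\lambda m-c\bigr)+\delta\, p^i(h)\lambda g<0
\]
for every $q^i(h)\in[0,1]$, since $p^i(h)<p^*$. (Your Step~1, $k=1$ case is fine as written: there $h^\dagger S$ ends with $j$'s choice of $S$, so $q^i(h^\dagger S)=0$.) With this correction your argument goes through.
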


\begin{proof}
Define $A$ to be the interval of beliefs $p \in [0,1]$ such that there exist a prior $p_0\in [0,1]$, a pure reasonable SE $\sigma$, and a history $h$ satisfying $p^1(h)<p$, $p^2(h)<p$,
and $\sigma(h)=R$. 
Note that
$\inf A\geq \widetilde p >0$. It follows that $\phi(p)< \inf A$ whenever $p\in A$ is sufficiently close to $\inf A$.

We need to prove that $\inf A\geq p^*$. We argue by contradiction, and assume that $\inf A<p^*$. 
This implies the existence of $p<p^*$ such that $p\in A$ and $\phi(p)<\inf  A$.
We fix such a belief $p$, and let  $p_0$ and  $\sigma$ be given as in the definition of the set $A$. By assumption, the 
set $H_1=\{h \colon p^1(h)<p, p^2(h)<p, \sigma(h)=R\}$ is not empty.

The main part of the proof consists in showing that $h\in H_1$ implies $hS\in H_1$. 
The argument follows closely the proof of Lemma~\ref{prop A1}.

Let $h\in H_1$ be arbitrary. 
Since $h\in H_1$, the continuation play induced by $\sigma$ after $h$ starts with $R$ and ends with $S^\infty$. It can thus be written as $\overline h R S^\infty$, 
where $\overline h$ is either empty or starts with $R$.

We denote by  $i$ the player active at  $h\overline h$ and note that $\sigma(h\overline h)=R$.

\smallskip\noindent
\textbf{Part 1.} We prove that  $\overline h=\emptyset$. 
Assume to the contrary that $\overline h \neq \emptyset$, so that $\overline h$ starts with $R$. Then, the beliefs of both players $j$ following $h\overline h S$ are such that 
 $p^{j}(h\overline hS)\leq \phi(p^{j}(h))<\phi(p)$.\footnote{The first inequality holds since $n_{\mathrm e}(\overline h)\geq 1$, and the second since $h\in H_1$.}
Since $\phi(p)<\inf A$, one has $\sigma(h\overline h S)= S$.
Moreover, by the definition of $\overline h$, $\sigma(h \overline h R)=S$.
Hence,  the continuation play 
induced by $\sigma$ after $h\overline h a$ is $S^\infty$ for each $a\in \{R,S\}$.
Since $p^i(h\overline h)<p^*$, player $i$'s continuation payoff at $h\overline h$ is higher when choosing $S$. This contradicts $\sigma(h\overline h) =R$, 
and hence $\overline h=\emptyset$.

\smallskip\noindent
\textbf{Part 2.}
 We  prove that $\sigma(hS)= R$, which implies that $hS\in H_1$.
Assume to the contrary that $\sigma(hS)= S$. Since $\overline h =\emptyset$, the continuation  play 
induced by $\sigma$ after $h R$ is $S^\infty$. 
On the other hand, $\sigma(hS)=S$ implies that $i$ is facing at $h$ a version of the second thought experiment.\footnote{The assumption $\sigma(hS)=S$ implies that the outcomes of $j$'s most recent experiments are immediately disclosed if $i$ chooses $S$.} Since $p^1(h)<p^*$, player~$i$ is better off deviating to $S$ at $h$, contradicting $\sigma(h)= R$. 

\medskip 

We have thus proved that $hS\in H_1$ whenever $h\in H_1$. Let any $h\in H_1$ be given. Thus, $hS\in H_1$ as well, which implies in turn that 
$hSS\in H_1$. At the history $hSS$, 
 the most recent choice of each agent was $S$, 
and hence the continuation game is `isomorphic' to the initial game with prior $p'_0:=p^1(hSS)= p^2(hSS)$.
That is, the profile induced by $\sigma$ following $hSS$ is a sequential equilibrium of the entire game, in which the prior belief is
$p'_0$.  It then follows from Lemma~\ref{prop A1} that the play path induced by $\sigma$ after $hSS$ is $S^\infty$. In particular, 
$\sigma(hSS)=S$, which contradicts $hSS\in H_1$.
\end{proof}

\begin{lemma}\label{lemm last2}
Let $\sigma$ be a pure reasonable $\mathrm{SE}$. If $p_0<p^*_1$, then the path induced by $\sigma$ after $RS$ is $S^\infty$.
\end{lemma}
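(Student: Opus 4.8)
The plan is to mimic the structure of the proof of Lemma \ref{prop A2}, working inside the continuation game that begins at the public history $RS$. First I would note that since player 1 chose $R$ and then (if unsuccessful) faces the node $RS$, while player 2's single move so far was $S$, the relevant beliefs at $RS$ are $p^1(RS) = \phi(p_0)$ and $p^2(RS) = \phi(p_0)$ as well, because reasonable beliefs force $p^2(RS) = \phi^{n_{\mathrm e}(R)}(p_0) = \phi(p_0)$, and player 1 knows his own experiment failed. So both players hold belief $\phi(p_0) < p_0 < p^*_1 < p^*$ at $RS$. Wait — more carefully, $p^1(RS)=\phi(p_0)$ only because player 1 conditions on his own failure; in any case both first-order beliefs at $RS$ are at most $\phi(p_0) < p^*$. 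Hence by Lemma \ref{prop A2} applied to the history $h = RS$, the continuation play following $RS$ is already $S^\infty$ — provided one checks that $\phi(p_0) < p^*$, which holds since $p_0 < p^*_1 < p^* $ implies $\phi(p_0) < \phi(p^*) < p^*$.

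So the first, direct route is simply: invoke Lemma \ref{prop A2} with $h = RS$. The only thing to verify is the hypothesis $p^i(RS) < p^*$ for each $i$, and this follows from $p_0 < p^*_1 \le p^*$ together with monotonicity of $\phi$ and $\phi(p^*) < p^*$. This would make the proof essentially one line. I suspect, however, that the authors want the slightly stronger conclusion tied to $p^*_1$ rather than $p^*$, i.e.\ that even though $\phi(p_0)$ might lie in $[p^*, p^*_1)$... no — $p_0 < p^*_1$ and $\phi(p_0) < p_0$, and $p^*_1 \le$ ... hmm, is $p^*_1 < p^*$ or $p^*_1 > p^*$? From \eqref{def:p*n}, $p^*_n$ is \emph{increasing} in $n$ with $p^*_0 = p^*$, so $p^*_1 > p^*$. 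Then $p_0 < p^*_1$ does \emph{not} give $\phi(p_0) < p^*$ for free; one needs the remark that $\phi(p^*_1) < p^*$ (stated in the Remark following \eqref{def:p*n}). Using $\phi(p^*_1) < p^*$ and $p_0 < p^*_1$, monotonicity of $\phi$ gives $\phi(p_0) < \phi(p^*_1) < p^*$, hence $p^i(RS) \le \phi(p_0) < p^*$, and Lemma \ref{prop A2} applies.

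Thus the key steps, in order, are: (1) identify $p^1(RS)$ and $p^2(RS)$ — both are at most $\phi(p_0)$, using reasonability for player 2's belief and the fact that player 1 observes his own failure; (2) invoke the Remark giving $\phi(p^*_1) < p^*$; (3) combine with $p_0 < p^*_1$ and monotonicity of $\phi$ to get $p^i(RS) < p^*$; (4) apply Lemma \ref{prop A2} to $h = RS$ to conclude the continuation path is $S^\infty$. The main obstacle — and it is a mild one — is step (1): one must be careful that $p^1(RS) = \phi(p_0)$ (conditioning on player 1's own failure) and $p^2(RS) = \phi(p_0)$ (reasonability: player 2 infers player 1's one experiment failed), so that both satisfy the hypothesis of Lemma \ref{prop A2}; everything else is immediate from results already established.
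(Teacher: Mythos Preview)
Your step (1) contains a genuine error that breaks the argument. At the history $RS$, the active player is player~1 (since $|RS|=2$ is even), and the reasonability condition on histories ending in $S$ pins down the \emph{active} player's belief: $p^1(RS)=\phi^{n_{\mathrm e}(R)}(p_0)=\phi(p_0)$. It says nothing about $p^2(RS)$. Player~2's belief at $RS$ is obtained by updating from $p^2(R)$; since player~1's first move $R$ is his very first action (no prior experiment whose outcome could be revealed), one has $p^2(R)=p_0$, and player~2's own choice of $S$ does not change his belief, so $p^2(RS)=p_0$. The lemma is stated for $p_0<p^*_1$, and since $p^*_1>p^*$ the range $p_0\in[p^*,p^*_1)$ is allowed; for such $p_0$ you have $p^2(RS)=p_0\ge p^*$, and the hypothesis of Lemma~\ref{prop A2} fails at $h=RS$.

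This is precisely why the paper's proof is more involved: it cannot invoke Lemma~\ref{prop A2} at $RS$ directly. Instead it looks at the sequence of histories $(RS)^n$, defines $N:=\min\{n\ge1:\sigma((RS)^n)=S\}$, applies Lemma~\ref{prop A2} only at $(RS)^N\!\cdot S$ (where both beliefs are at most $\phi(p_0)<p^*$), and then rules out $N\ge2$ via a pair of contradictory claims that exploit $p_0<p^*_1$ (and the derived inequalities $p_0<p^*_2$ and $\phi(p_0)<p^*$). The assumption $p_0<p^*_1$ is used not to bound beliefs at $RS$, but to show that player~2 would profitably deviate at $(RS)^{N-1}R$ if $N\ge2$.
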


\begin{proof}
We will use the inequality $p_0< \min(p^*_2,\phi^{-1}(p^*))$, which follows from $p_0<p^*_1$.

We note that $p^1((RS)^n)= \phi^n(p_0)$ for each $n$, hence $N:=\min\{n\geq 1: \sigma((RS)^n)= S\}$ is well defined, finite, 
and at least $1$.
Since $p^i((RS)^N\cdot S)\leq \phi(p_0)<p^*$
for $i=1,2$,
by Lemma \ref{prop A2} the continuation play induced by $\sigma$ after $(RS)^N\cdot S$ is $S^\infty$. Hence, we only need to prove that $N=1$. This is implied by the  following two 
contradictory
claims.

\smallskip\noindent
\textbf{Claim:} If $N\geq 2$, one has $\sigma((RS)^{N-1}R)= R$. 

Assume that $N\geq 2$ and $\sigma((RS)^{N-1}R)= S$, so that the equilibrium continuation play at $(RS)^{N-1}$ is $RS^\infty$. Since $p^1((RS)^{N-1})\leq \phi(p_0)<p^*$, the induced payoff for player 1 is negative --- a contradiction.

\smallskip\noindent
\textbf{Claim:} If $N\geq 2$, one has $\sigma((RS)^{N-1}R)= S$. 

Assume that $N\geq 2$ and that $\sigma((RS)^{N-1}R)= R$. By the definition of $N$, one has $p^2((RS)^{N-1}R)=p_0$. This implies that $p^i((RS)^{N-1}\cdot RR)<p^*$ for both players, hence the continuation play induced by $\sigma $ following  $(RS)^{N-1}\cdot RR$ is $S^\infty$, by Lemma \ref{prop A2}. On the other hand, the continuation play in the event where player 2 deviates to $S$ at  $(RS)^{N-1}\cdot R$ is also $S^\infty$. Since the belief of player 2 at $(RS)^{N-1}\cdot R$ is below $p^*_2$, deviating to $S$ is profitable. 
\end{proof}
\section{Proof of Theorem \ref{th private1}}\label{sec proof private1}

\subsection{Proof of the first statement}

We argue by contradiction, so let $(\sigma,\pi)$ be a reasonable SE such that $\prob_\sigma$-a.s., the limit belief of each player $i$ satisfies $p^i_\infty>\widehat p$.  Since the sequence of beliefs of player $i$ forms a bounded martingale (w.r.t. to the filtration of his private histories), this implies that $p^i(h)>\widehat p$ for every on-path sequence of choices $h$.

Let  $\widetilde h$ be such that $\prob_\sigma(\widetilde h\mid \theta= B)>0$. 
By Theorem~\ref{prop private}, 
players eventually switch to $S$ forever if $\theta = B$, hence there is an on-path extension of $\widetilde h$ of the form $hSS$. Since 
$\sigma$ is a reasonable SE, one has $ \phi^{n_{\mathrm e}(h)}(p_0)= p^i(hSS)> \widehat p \geq \widetilde p\geq \phi^{\widetilde N}(p_0)$
and since 
  $n_{\mathrm e}(\widetilde h)\leq n_{\mathrm e}(h)$,
it follows that $n_{\mathrm{e}}(\widetilde h)\leq \widetilde N$.

Hence, $N_{\mathrm{e}}$ is bounded.
This implies the existence of some history $h$, with active player $i$, such that  $\prob_\sigma(hR\mid \theta = B)>0$ and $\sigma$ induces $S^\infty$ after $hR$.
At the history $h\!\cdot\! RS$,  player $i$ faces the same trade-offs as in the first thought experiment. Since by assumption $p^i(h\!\cdot\! RS)>\widehat p$, playing $R$ and then $S$ is a profitable deviation of player $i$ from $\sigma$.

\subsection{Proof of the second statement}

We argue by contradiction, so let a reasonable SE $(\sigma,\pi)$ be such that $p^i(h)<\widehat p$ and $\sigma(R\mid h)>0$ for some  $h$ with active player $i$. We denote by $\underline{p}$ the infimum of $p^i(h)$ over such histories and players. Below, we fix such a history $h$ (and active player $i$) such that $\phi(p^i(h))<\underline{p}$. 
This implies that the continuation play induced by $\sigma$ after $hR$ is of the form $R^kS^\infty$ for some $k\geq 0$.

We recall that  $q^i(h)$ is the probability that $j$ was successful, conditional on $\theta= G$, if the past sequence of choices is $h$ and $i$ was not successful.
Therefore, $p^i(h) q^i(h)$ is the probability that $i$ assigns at $h$ to the event that $j$ was successful.

\begin{claim}
    The continuation payoff of player $i$ at $h$ when choosing $S$ is at least
    \begin{equation}\label{cps}
\delta p^i(h) g \times q^i(h),
\end{equation}
\end{claim}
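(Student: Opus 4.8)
The plan is to bound from below the continuation payoff that player $i$ secures at $h$ by switching to the safe arm forever. The key observation is that choosing $S$ at $h$ is safe in the sense that it guarantees a nonnegative flow from $i$'s own actions, so any positive continuation value must come from the possibility that $j$ was successful. Concretely, I would have $i$ adopt the following deviation: play $S$ at $h$, and keep playing $S$ until $j$'s behavior reveals (in the sense of $i$'s equilibrium beliefs) that $j$ was successful — i.e. until $i$'s belief jumps to $1$ — at which point $i$ switches to $R$ forever. Since $g>0$, once $i$ is convinced $\theta=G$, pulling $R$ forever yields a per-period payoff of $g$, so the continuation value from that point is $g$ (in normalized, per-period units), discounted back to $h$.

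The next step is to argue that, with probability (conditional on $\theta=G$ and on $i$ being at $h$ unsuccessful) equal to $q^i(h)$, this revelation does occur and occurs by the next period at the latest. This uses the structure of reasonable beliefs together with the assumption that, after playing $S$ at $h$, player $j$'s most recent experiments become disclosed: if $j$ was in fact successful, then — because a successful $j$ plays $R$ forever and the equilibrium (being a genuine, not merely reasonable, rational-play profile) must have $j$ behave optimally after a success — $i$ will detect this within one period, so $i$'s belief reaches $1$ and $i$ collects $g$ per period from period $|h|/2 + 1$ onward. Hence the continuation payoff at $h$ from this deviation is at least $\delta \cdot \prob_\sigma(j \text{ successful} \mid \theta=G, h, i \text{ unsuccessful}) \cdot p^i(h) \cdot g$, and by definition of $q^i(h)$ this probability, multiplied by $p^i(h)$ (the weight $i$ puts on $\theta=G$), equals $p^i(h)q^i(h)$ — giving exactly the bound $\delta p^i(h) g q^i(h)$. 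The $\delta$ factor reflects the one-period delay before the good news is conveyed, and we discard (as a lower bound) all payoffs from the states where $j$ was unsuccessful, since $S$ forever yields $0$ there.

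I expect the main obstacle to be the precise bookkeeping around the delay and the conditioning. One must be careful that: (i) in the event where $j$ was successful, $i$'s equilibrium belief really does jump to $1$ within one period after $i$ plays $S$ — this relies on $j$'s subsequent play of $R$ being distinguishable, under the consistent/reasonable belief system, from the behavior of an unsuccessful $j$, which in turn uses that $j$ plays optimally after a success; (ii) the probability bookkeeping: $q^i(h)$ is defined conditionally on $\theta=G$, so one multiplies by $p^i(h)$ to get the unconditional (from $i$'s viewpoint at $h$) weight, and one should double-check there is no further loss from $i$ possibly being successful (which is excluded since $p^i(h)<1$, or handled by the convention that we condition on $i$ unsuccessful). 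Once these points are pinned down, the inequality
\[
\text{(continuation payoff of $i$ at $h$ when choosing $S$)} \;\geq\; \delta\, p^i(h)\, g\, q^i(h)
\]
follows, since choosing $S$ at $h$ is one available action and this particular continuation (play $S$ until convinced, then $R$ forever) is one available continuation strategy, so the optimal continuation payoff after $S$ is at least the value of this specific plan.
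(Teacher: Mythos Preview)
Your overall plan is on the right track, but there is a genuine gap at the crucial step: you assert that if $j$ was successful then ``$i$ will detect this within one period,'' but you never establish the converse direction, namely that an \emph{unsuccessful} $j$ necessarily plays $S$ at $hS$. If $\sigma(R\mid hS)>0$, then observing $R$ at $hS$ is not conclusive evidence of $j$'s success, $p^i(hSR)<1$, and your one-period-delay bound $\delta$ collapses. Appealing to reasonable beliefs alone does not help here: reasonableness constrains how $i$ interprets an $S$ by $j$, not whether an unsuccessful $j$ actually plays $S$.

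The paper closes this gap by exploiting the specific context in which the claim sits. In the surrounding argument by contradiction, $h$ is not arbitrary: it is chosen so that $\phi(p^i(h))<\underline p$, where $\underline p$ is the infimum of beliefs at which the active player still experiments with positive probability while holding a belief below $\widehat p$. One first disposes of the case where $j$'s last choice along $h$ is $S$ (then $q^i(h)=0$ and the bound is trivial). If instead $j$'s last choice is $R$, then $p^j(hS)=\phi^{n_{\mathrm e}(h)}(p_0)\leq \phi(p^i(h))<\underline p$, which forces $\sigma(hS)=S$. Only then does $j$'s choice at $hS$ separate success from failure, giving $p^i(hSR)=1$ and the continuation payoff $g$ from the next period. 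Your write-up should invoke this property of $h$ explicitly; as stated, the claim is not true for a generic history $h$, and your proof does not go through without it.
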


\begin{proof}
If $j$'s last choice along $h$ is $S$, then $q^i(h)=0$ and the claim holds. Assume thus that $j$'s last choice along $h$ is $R$. 
This implies that $p^j(hS)\leq \phi(p^i(h))< \underline{p}$, and hence $\sigma(hS)= S$. Thus, player $j$ chooses $R$ at $hS$ if and only if he was successful along $h$, which has probability $p^i(h)q^i(h)$. 
Hence, the continuation play induced by $\sigma$ after $h\!\cdot\! SR$ is $R^\infty$, and $i$'s expected continuation payoff is $g$. Since $i$'s expected continuation payoff after $h\!\cdot\! SS$ is nonnegative, this concludes the proof of the claim.
\end{proof}
\smallskip

We will prove that the expected continuation payoff of $i$ when choosing $R$ at $h$ is strictly lower than (\ref{cps}). For convenience, we assume w.l.o.g. that $i$ assigns probability 1 to $R$ at $h$.\footnote{Formally, let $\widetilde \sigma_i$ be the strategy of $i$ that coincides with $\sigma_i$, except that $\widetilde \sigma_i(h)=R$. As $\sigma_i(R\mid h)>0$, and as $\sigma_i$ is sequentially rational at  $h$ against $\sigma_j$, the expected continuation payoff of player $i$ when facing $\sigma_j$ is the same, whether he uses $\sigma_i$ or $\widetilde \sigma_i$.}

As in the main text, we denote $\omega_i:=\inf\{n\geq 1: \sigma(S\mid h\!\cdot\! (RR)^{n})>0\}$
and $\omega_j:=\inf\{n\geq 0: \sigma(S\mid h\!\cdot\! (RR)^n\!\cdot\! R)=1\}$.
\begin{claim} One has $\omega_i<+\infty $ or $\omega_j<+\infty$ (or both).
\label{claim:19}
\end{claim}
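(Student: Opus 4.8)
The plan is to argue by contradiction: assume $\omega_i=\omega_j=+\infty$. The key observation is that $\omega_i=+\infty$, together with the standing assumption $\sigma(R\mid h)=1$, forces player $i$ to choose $R$ with probability one at $h$ and at every history $h\cdot(RR)^n$, $n\geq 1$; since he also chooses $R$ after any of his own successes, player $i$'s choices along the ``spine'' of all-$R$ histories issued from $h$ are uninformative about $\theta$ from player $j$'s point of view. Moreover $\omega_j=+\infty$ gives $\sigma(R\mid h\cdot(RR)^n\cdot R)>0$ for every $n\geq 0$, so that each history $h\cdot(RR)^n\cdot R$ is reached from $h$ with positive probability; in particular player $j$'s belief there is obtained from his belief at $h$ by Bayes' rule.

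I would first dispose of the degenerate case $p^j(h)=1$. In that case player $j$ is certain that $\theta=G$ and chooses $R$ forever after $h$, regardless of his own outcomes; combined with $\omega_i=+\infty$, the continuation after $h$ is then $R^\infty$ with probability one, so that player $j$'s choices are in turn uninformative to player $i$. Consequently $p^i(h\cdot(RR)^n)=\phi^n(p^i(h))$, which tends to $0$ since $p^i(h)<\widehat p<1$; so $p^i(h\cdot(RR)^n)<\widetilde p$ for $n$ large, and at such a history $R$ is strictly dominated by $S$. This contradicts the sequential rationality of $\sigma$, since $\omega_i=+\infty$ prescribes $R$ with probability one there. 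Hence $p^j(h)<1$.

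The main step is then a short Bayesian computation. Along $h\cdot(RR)^n\cdot R$ the only payoff-relevant private information acquired by player $j$ beyond $h$ consists of his own $n$ experiments, all of which fail conditional on his not being successful, while the $n+1$ interleaved choices of $R$ by player $i$ are uninformative. Hence $\frac{p^j(h\cdot(RR)^n\cdot R)}{1-p^j(h\cdot(RR)^n\cdot R)}=\frac{p^j(h)}{1-p^j(h)}(1-\lambda)^n$, i.e. $p^j(h\cdot(RR)^n\cdot R)=\phi^n(p^j(h))$. As $p^j(h)<1$, this tends to $0$, so $p^j(h\cdot(RR)^n\cdot R)<\widetilde p$ for all $n$ large enough; for such $n$ it is strictly dominant for player $j$ to choose $S$ at $h\cdot(RR)^n\cdot R$, i.e. $\sigma(S\mid h\cdot(RR)^n\cdot R)=1$, contradicting $\omega_j=+\infty$.

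The argument is short, and the only real subtlety is the case $p^j(h)=1$: there the $\phi$-iteration no longer drives player $j$'s belief to $0$, and one must instead invoke strict dominance for player $i$ at the (possibly off-path) histories $h\cdot(RR)^n$. One should also be careful that the histories $h\cdot(RR)^n\cdot R$ are genuinely reached with positive probability from $h$, so that the displayed odds identity is a legitimate Bayesian update of player $j$'s belief at $h$ rather than an unconstrained off-path belief; as noted above, this is exactly what $\omega_i=\omega_j=+\infty$ guarantees.
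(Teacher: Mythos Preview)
Your proof is correct and follows essentially the same route as the paper's: assuming $\omega_i=+\infty$, player $i$'s $R$ choices along the spine are uninformative, so $p^j(h\cdot(RR)^n\cdot R)=\phi^n(p^j(hR))\to 0$, eventually falling below $\widetilde p$ and forcing $\sigma(S\mid\cdot)=1$, hence $\omega_j<+\infty$. The only difference is that you argue by contradiction on both $\omega_i,\omega_j$ simultaneously and explicitly treat the degenerate case $p^j(h)=1$; the paper simply assumes $\omega_i=+\infty$ and concludes $\omega_j<+\infty$, leaving the case $p^j(hR)=1$ implicit (it is in fact incompatible with $\omega_i=+\infty$, for the very reason you give). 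Your extra care is harmless and arguably cleaner, but the underlying argument is the same.
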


That is, following $hR$, player $i$ will eventually choose the safe arm with positive probability in the event that player $j$ keeps experimenting, or player $j$ will eventually pick the safe arm for sure. 
\medskip

\begin{proof}[Proof of Claim~\ref{claim:19}]
Assume that $\omega_i= +\infty$. Then,  the experiments of player $i$ following $h\!\cdot\! RR$ are uninformative as long as player $j$ keeps choosing $R$, so that $p^j(h\!\cdot\! (RR)^n\!\cdot\! R)= \phi^n(p^j(hR))$ for each $n \geq 0$. Hence  $p^j(h\!\cdot\! (RR)^n\!\cdot\! R)< \widetilde p$ for $n$ large enough, which implies that $\sigma(h\!\cdot\! (RR)^n\!\cdot\! R)=S$ for $n$ large and therefore, $\omega_j<+\infty$.
\end{proof}

\medskip
We write $\omega:=\inf(\omega_i,\omega_j)$. 
\medskip

\smallskip\noindent
\textbf{Case 1:} $\omega= \omega_j< \omega_i$.

In this case, when starting from $h$, player $i$ chooses $R$ until the first random time where player $j$ chooses $S$, which occurs after at most $\omega$ periods.  

For $k\in  \llbracket 0, \omega\rrbracket$, denote by $\pi_k$ the probability that $j$ first chooses $S$ in the $k$-th period that follows $h$ conditional on $\theta= B$.\footnote{Thus, $\pi_0= \sigma(S\mid h\!\cdot\! R)$, $\pi_1= \sigma(S\mid h\!\cdot\! (RR)\!\cdot\! R)\times \sigma(R\mid h\!\cdot\! R)$, etc.}

For $k\in  \llbracket 0, \omega\rrbracket$, let $\sigma_j(k)$ be the strategy of $j$ that (i)   chooses the risky arm $k$ times when starting from $h$, then switches to $S$, and (ii) coincides with $\sigma_j$ elsewhere. We note that the distribution of continuation plays after $h$ induced by the behavior strategy profile $\sigma$ is a mixture of the distributions induced by $(\sigma_i,\sigma_j(k))$, with weights $\pi_k$, $k\in \llbracket 0,\omega\rrbracket$.

We denote by $\gamma^i(k)$ the continuation payoff of player $i$ 
at $h$
when facing $ \sigma_j(k)$, so that the expected continuation payoff of player $i$ at $h$ is equal to $\displaystyle \sum_{k=0}^\omega \pi_k \gamma^i(k)$.

\begin{claim}\label{claim gamma}
    For each $k\in \llbracket 0,\omega\rrbracket$, $\gamma^i(k) <\delta g  p^i(h) q^i(h)$.
\end{claim}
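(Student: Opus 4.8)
The plan is to fix $k\in\llbracket 0,\omega\rrbracket$, pass to the continuation game rooted at $h$ with periods relabelled so that the current period becomes $0$, and bound $\gamma^i(k)$ directly. Under $(\sigma_i,\sigma_j(k))$ both players choose $R$ in periods $0,\dots,k-1$; since $k\le\omega=\omega_j<\omega_i$, player $i$ (following $\sigma_i$) also chooses $R$ in period $k$, whereas $\sigma_j(k)$ prescribes $S$ to player $j$ from period $k$ on, after which $j$ reverts to $\sigma_j$. Hence, from period $k$ onwards player $i$ is exactly in the situation of the second thought experiment of Section~\ref{sec second thought} with a stock of $k$ consecutive experiments of $j$: his belief at period $k$ is $\phi^k(p^i(h))\le p^i(h)<\widehat p<p^*\le p^*_k$, the last inequality because $(p^*_n)_n$ is increasing with $p^*_0=p^*$. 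By the defining property of $p^*_k$, experimenting one last time at period $k$ is \emph{strictly} worse for $i$ than switching to $S$ at period $k$. Therefore $\gamma^i(k)<\widetilde\gamma^i(k)$, where $\widetilde\gamma^i(k)$ is $i$'s payoff at $h$ against $\sigma_j(k)$ when he plays $R$ in periods $0,\dots,k-1$ and $S$ from period $k$ on (best-replying thereafter).

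It then remains to show $\widetilde\gamma^i(k)\le\delta g\,p^i(h)q^i(h)$. I would decompose $\widetilde\gamma^i(k)$ over the events generating $i$'s belief at $h$. On $\{\theta=B\}$ player $i$ only pays the cost of his $k$ experiments and is never afterwards induced back to $R$ (against $\sigma_j(k)$ with $\theta=B$, player $j$ does not resume $R$ after switching to $S$), so the conditional payoff equals $-c(1-\delta^k)\le 0$, strictly negative for $k\ge1$. On the events with $\theta=G$, $i$'s conditional payoff is controlled: either $i$ succeeds within periods $0,\dots,k-1$ and then earns $g$ forever; or $j$ is, or becomes, successful and eventually reveals this by resuming $R$ after his switch, which brings $i$ back to $R$ forever but only with a delay of at least two periods; or nobody succeeds and $i$ stays on $S$. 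Weighting these by $i$'s prior at $h$, and using that a revelation by $j$ always occurs strictly after period $1$ (so it carries a discount factor strictly below $\delta$), the favourable $\theta=G$ contributions are bounded by $\delta g\,p^i(h)q^i(h)$, with the residual absorbed by the strictly negative $\{\theta=B\}$ term. The case $k=0$ is transparent: the alternative strategy simply plays $S$ at $h$, and against $\sigma_j(0)$ player $i$ learns whether $j$ was successful two periods later, so $\widetilde\gamma^i(0)=\delta^2 g\,p^i(h)q^i(h)\le\delta g\,p^i(h)q^i(h)$; and when $q^i(h)=0$ we have $\widetilde\gamma^i(k)=0$, so $\gamma^i(k)<0$ already follows from the first step.

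Combining the two steps gives $\gamma^i(k)<\delta g\,p^i(h)q^i(h)$ for every $k\in\llbracket0,\omega\rrbracket$, which is the claim (and, summed against the weights $\pi_k$, contradicts the lower bound $\delta g\,p^i(h)q^i(h)$ on the payoff of $S$ at $h$, forcing $\sigma(R\mid h)=0$). The main obstacle is the bookkeeping of the second step: on $\{\theta=G\}$ the continuation looks favourable to $i$, so one must verify that the value he can genuinely extract there is capped — the good state reaches $i$ only through his own costly successes, or through $j$'s delayed resumption of $R$ — and that, once the loss on $\{\theta=B\}$ is netted out, the total stays strictly below $\delta g\,p^i(h)q^i(h)$ rather than merely below $g\,p^i(h)q^i(h)$.
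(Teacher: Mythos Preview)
Your second step is where the argument breaks down. The inequality $\widetilde\gamma^i(k)\le\delta g\,p^i(h)q^i(h)$ is precisely the place where the hypothesis $p^i(h)<\widehat p$ must do its work, and you never invoke it there. Your sentence ``the favourable $\theta=G$ contributions are bounded by $\delta g\,p^i(h)q^i(h)$, with the residual absorbed by the strictly negative $\{\theta=B\}$ term'' simply restates what has to be shown. Concretely, take $q^i(h)=0$ and $k\ge1$: the target bound is $\widetilde\gamma^i(k)\le 0$, and on $\{\theta=G\}$ player $i$'s conditional payoff is genuinely positive (he may succeed himself), so whether it is outweighed by the $\{\theta=B\}$ loss $-c(1-\delta^k)(1-p^i(h))$ is exactly the question of whether $p^i(h)$ lies below the encouragement cutoff. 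Your side remark ``when $q^i(h)=0$ we have $\widetilde\gamma^i(k)=0$'' is also incorrect for $k\ge1$: that payoff equals the $k$-fold variant of the first thought experiment discussed just before the claim, which is strictly negative when $p^i(h)<\widehat p$ and strictly positive when $p^i(h)>\widehat p$.

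The paper proceeds differently and more directly: it writes down $\gamma^i(k)$ explicitly, shows that the gap $\delta g\,p^i(h)q^i(h)-\gamma^i(k)$ is increasing in $q^i(h)$ (so one may set $q^i(h)=0$), then drops $i$'s last experiment using only $\phi^k(p^i(h))<p^*$, and finally identifies the resulting expression with the $k$-fold variant of the \emph{first} thought experiment, which is negative since $p^i(h)<\widehat p$. In particular the decisive comparison is with the first thought experiment and $\widehat p$, not the second thought experiment and $p^*_k$ that you invoke in your first step; your use of $p^*_k$ is fine for ordering $\gamma^i(k)$ and $\widetilde\gamma^i(k)$, but it cannot by itself deliver the bound against $\delta g\,p^i(h)q^i(h)$.
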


    Since $\delta g p^i(h) q^i(h)$ is a lower bound on $i$'s continuation payoff when choosing $S$ at $h$, this implies that $i$'s continuation payoff is strictly lower when playing $R=\sigma(h)$ than when playing $S$, a contradiction.
\medskip

The proof will make use of the following observation on a variant of the first thought experiment. 
If $p^i(h)< \widehat p$, choosing $R$ \emph{once} to encourage \emph{one} additional experiment is not worth it. Deceiving $j$ for an extended duration is \textit{a fortiori} not worth it either. Indeed, consider a variant of the thought experiment in which $i$ expects that choosing $R$ for $k$ times in a row will induce $j$ to experiment $k$ times as well. Assume that $k=2$ for simplicity. At the history $h\cdot RR$, the belief of $i$ is $\phi(p^i(h))$, because $j$'s most recent experiment is uninformative. In addition, the marginal benefit from experimenting once is lower than when at $h$, because choosing $S$ will give access to the outcome of $j$'s experiment. These two observations imply that $i$ would rather choose $S$ when at $h\!\cdot\! RR$. The algebra (for general $k$) is straightforward.

\medskip

\begin{proof}[Proof of Claim~\ref{claim gamma}]
The flow reward of $i$ is 
$p^i(h) \lambda m -c$ in each  period $t \in  \llbracket0, k\rrbracket$ after $h$, and 
the expected continuation payoff reward from period $k+1$ onwards is $g$ if some player was successful, and 0 otherwise. 

Thus, the overall continuation payoff  $\gamma^i(k)$  is given by
\begin{eqnarray}
&&\left(p^i(h) \lambda m -c\right) (1-\delta^{k+1})  \label{ctnK1} \\
&+&\delta ^{k+1} p^i(h) g \left\{\left(1-\left(1-\lambda\right)^{k+1}\right) +(1-\lambda)^{k+1} \left(q^i(h) +(1-q^i(h))\times \left(1-(1-\lambda)^{k}\right)\right)\right\}. \nonumber 
\end{eqnarray}
The first term between braces is the probability that $i$ is successful while the second term is the probability that $j$ was successful (but not $i$), either along $h$ or after $h$.

We claim that the expression in (\ref{ctnK1}) is strictly lower than (\ref{cps}). This follows from three observations.
\begin{enumerate}
\item
The difference between (\ref{cps}) and (\ref{ctnK1}) is increasing in $q^i(h)$, so we only need to prove that the inequality holds when $q^i(h)=0$, that is, 
\begin{eqnarray}
   && \left(p^i(h) \lambda m -c\right) (1-\delta^{k+1})  \label{ctnKbis}\\
   &+&\delta ^{k+1} p^i(h) g\left\{\left(1-\left(1-\lambda\right)^{k+1}\right) +(1-\lambda)^{k+1} \left(1-(1-\lambda)^{k}\right)\right\}< 0. \nonumber
\end{eqnarray}

\item
The choice of player $i$ in period $k$ does not affect $j$'s subsequent choice. Since $i$'s belief at $h\cdot(RR)^k$ is $\phi^k(p^i(h))<p^*$, the continuation payoff in (\ref{ctnKbis}) is lower than if $i$ chooses $S$ in that period. The latter payoff is given by
\begin{equation}\label{ctnK21}
    \left(p^i(h) \lambda m -c\right) (1-\delta^{k})+\delta ^{k} p^i(h) g\left\{\left(1-\left(1-\lambda\right)^{k}\right) +\delta (1-\lambda)^{k}  \left(1-(1-\lambda)^{k}\right)\right\}.
\end{equation}

\item
The quantity in (\ref{ctnK21}) is the payoff of player $i$ in the variant of the first thought experiment discussed before the proof.
Since
$p^i(h)< \widehat p$, this payoff is \emph{strictly} negative.
\end{enumerate}\end{proof}

\smallskip\noindent
\textbf{Case 2:} $\omega_j\geq \omega= \omega_i\geq 1$.

This is a variant of the first case. When starting from $h$, player $i$ chooses $R$ until the $\omega$-th period after $h$, or  until the first random time where player $j$ chooses $S$, whichever occurs first.
As in the first case, the continuation payoff of $i$ given $\sigma$ is a convex combination of the continuation payoffs $\gamma^i(k)$ induced by $(\sigma_i,\sigma_j(k))$, $k\in \llbracket 0,\omega\rrbracket$. 

For $k<\omega$, the expression of $\gamma^i(k)$ is still given by (\ref{ctnK1}), and $\gamma^i(k)<\delta g  p^i(h) q^i(h)$, as in Claim \ref{claim gamma}. Hence the result will follow if we prove that $\gamma^i(\gamma)< \delta g  p^i(h) q^i(h)$ as well. 

Under $(\sigma_i,\sigma_j(\omega))$,  players choose $R$ for $\omega$ periods, then $i$ switches to $S$. Hence $\gamma^i(\omega)$ is given by 
\begin{eqnarray}
&&\left(p^i(h) \lambda m -c\right) (1-\delta^\omega) \label{ctnN} \\
&+&\delta ^\omega p^i(h) g\left\{\left(1-\left(1-\lambda\right)^\omega\right) +\delta(1-\lambda)^\omega\left(q^i(h) +(1-q^i(h))\times \left(1-(1-\lambda)^{\omega}\right)\right)\right\}.\nonumber
\end{eqnarray}
 As above, the difference between (\ref{cps}) and (\ref{ctnN}) is increasing in $q^i(h)$, so we need only prove that 
 \begin{equation}\label{ctnNbis}
\left(p^i(h) \lambda m -c\right) (1-\delta^\omega)+\delta ^\omega p^i(h) g\left(1-\left(1-\lambda\right)^\omega\right) +\delta^{\omega+1} p^i(h) g(1-\lambda)^\omega\left(1-(1-\lambda)^{\omega}\right) <0.
\end{equation}
The result follows by noting that the left-hand side of (\ref{ctnNbis}) coincides with (\ref{ctnK21}).

\section{Proof of Theorem \ref{th private2}}\label{appendix:thm2022-1}

We recall that  $\sigma(h)=R$ if and only if $p^i(h)\geq \widehat p$, where $i$ is the player active at $h$.
We prove the sufficiency part.
We assume that $\phi^{n-1}(p_0)\geq p_n^*$, 
and  use the one-shot deviation principle. 

For $i=1,2$ and $h\in H$, denote by $d_i(h)$ the number of  $i$'s experiments after which he eventually chose the safe arm, thus \emph{disclosing} to $j$ that these experiments failed;
The difference
$u_i(h):=n_e^i(h)-d_i(h)$ is the number of \emph{un}-disclosed outcomes. That is, the last $u_i(h)$ choices of $i$ along $h$ were $R$. We will write $n_i, d_i, u_i$ unless there is a potential ambiguity.

It follows from the definition of $\sigma$ that $p^i(h)$ is either equal to $\phi^{n_i+d_j}(p_0)$ (if the last $u_j$ choices of $j$ are non-revealing), 
or to 1 (if the last $u_j$ choices of $j$ indicate a success of $j$).

We will use the following observation:
\begin{equation}\label{obs}
\widehat p\leq p^i(h)< 1 \Rightarrow n_i+d_j <n \Rightarrow p^j(h)<1,
\end{equation}
for each $h\in H$ where $i$ is the active player. The first implication holds since $p^i(h)$ is either 1 or $\phi^{n_i+d_j}(p_0)$. The second implication holds since as long as $n_i+d_j<n$, the strategy profile
$\sigma$ instructs $i$ to choose $R$, 
and hence the $u_i$ undisclosed experiments of $i$ (if any) are not informative to $j$.
\medskip

Fix $h\in H$, and let $i$ be the active player at $h$.
Given two infinite plays $h'$ and $h''$ in $\{S,R\}^\dN$, we write $h'\succeq_h h''$ if $i$ (weakly) prefers the continuation play $h'$ to $h''$. 
That is, consider the two strategy profiles $\sigma'$ and $\sigma''$ that coincide with $\sigma$ up to $h$, and then follow $h'$ and $h''$, respectively.
We say that $h'\succeq_h h''$ if $i$'s  expected continuation payoff,  computed using $i$'s belief at $h$, is higher under $\sigma'$ than under $\sigma''$.

\begin{lemma}
\label{lemma:3claims}
Let $h$ be given.
\begin{enumerate}
\item 
\label{claim1}
For each $k\geq 1$, 
\[ (RR)^kS^\infty \succeq_h S(RR)^k S^\infty\mbox{ if and only if }\phi^{k-1}\left(p^i(h)\right) \geq p^*.\]
\item
\label{claim2}
For each $k\geq 1$,  
\[(RR)^k S^\infty \succeq_h (RR)^{k-1} S^\infty  \mbox{ if and only if } \phi^{k-1}\left(p^i(h)\right) \geq \widehat p_{u_j+k-1}.\]\item
\label{claim3}
For each $k\geq 1$,  
\[(RR)^k RS S^\infty \succeq_h (RR)^k S^\infty  \mbox{ if and only if } \phi^{k}\left(p^i(h)\right) \geq p^*_{u_j+k}.\]\end{enumerate}
\end{lemma}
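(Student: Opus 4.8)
The plan is to prove each of the three equivalences by writing out explicitly the expected continuation payoff of player $i$ under each of the two competing continuation plays, computed with $i$'s belief at $h$, and then comparing them. Throughout I will use the belief $p := p^i(h)$, noting that if $p = 1$ all three inequalities are trivially satisfied (and the thresholds $p^*, \widehat p_{u_j+k-1}, p^*_{u_j+k}$ are all $< 1$), so I may assume $p = \phi^{n_i + d_j}(p_0) < 1$, in which case by \eqref{obs} player $j$'s undisclosed experiments along $h$ carry no information and $p^j(h) < 1$. The key bookkeeping point is that under $\sigma$, for as long as the running count $n_i + d_j$ stays below $n$, player $j$ keeps choosing $R$ regardless of his own (undisclosed) outcomes, and reveals a failure only by switching to $S$; so I need to track, for each continuation play, \emph{when} player $i$'s own success would be realized (and thus earn the per-period payoff $g$ from the next period on), and \emph{when} player $j$'s success — whether it occurred among his $u_j$ past undisclosed experiments or among his future experiments under the given continuation — becomes known to $i$ (namely, one period after $j$ is scheduled to disclose). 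The prior probability, conditional on $\theta = G$, that \emph{at least one} of $j$'s $u_j$ past experiments plus some number $\ell$ of future experiments succeeded is $1 - (1-\lambda)^{u_j + \ell}$, and this is where the indices $u_j + k - 1$ and $u_j + k$ in claims \ref{claim2} and \ref{claim3} enter.

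For claim \ref{claim1}, the two plays $(RR)^k S^\infty$ and $S(RR)^k S^\infty$ differ only in whether $i$'s first experiment happens now or after a one-period delay, while $j$'s schedule of experiments and disclosures is \emph{identical} in real time in both cases (the prefixes $(RR)^k$ versus $S(RR)^k$ give $j$ the same number of moves $R$ in the same periods up to a shift — more carefully, $j$ experiments in the same set of periods). Hence the difference in $i$'s payoff is exactly the difference between performing $i$'s last experiment in period $k-1$ versus forgoing it, i.e. a one-experiment tradeoff at belief $\phi^{k-1}(p)$ against the continuation value $g$, which by definition of $p^*$ (equation \eqref{def:p*}) has the sign of $\phi^{k-1}(p) - p^*$. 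I would write the indifference condition $(1-\delta)(\phi^{k-1}(p)\lambda m - c) + \delta \phi^{k-1}(p)\lambda g = 0$ to pin this down. For claim \ref{claim3}, comparing $(RR)^k RS S^\infty$ with $(RR)^k S^\infty$: the two plays agree on the first $k$ periods, and then in period $k$ player $i$ either experiments once more (and only afterwards learns $j$'s outcome, with a one-period delay on $j$'s good news) or switches to $S$ immediately (learning $j$'s outcome without that delay); this is precisely the second thought experiment of Section \ref{sec second thought} with stock $u_j + k$ of $j$'s experiments, so the comparison has the sign of $\phi^k(p) - p^*_{u_j+k}$, by definition \eqref{def:p*n}. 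For claim \ref{claim2}, comparing $(RR)^k S^\infty$ with $(RR)^{k-1} S^\infty$: here $i$'s $k$-th experiment in period $k-1$ also induces $j$ to experiment one extra time (since $j$'s switch to $S$ is triggered by $i$'s switch, under the off-path continuation of $\sigma$ — this is the point I need to be careful about), so the marginal experiment buys both $i$'s own possible success and one additional experiment by $j$, but delays disclosure of $j$'s information by one period; this is exactly the variant of the first thought experiment that defines $\widehat p_n$ with $n = u_j + k - 1$, so the sign is that of $\phi^{k-1}(p) - \widehat p_{u_j+k-1}$.

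The main obstacle I anticipate is \textbf{getting the combinatorics of disclosure timing exactly right}, in particular correctly identifying, for each of the off-path continuation plays $h''$ appearing in the three claims, what player $j$'s prescribed behavior under $\sigma$ actually is along that continuation — since $\sigma$'s response to $i$ deviating from $(RR)^n$ (by playing $S$ early, or by playing an extra $R$) determines whether and when $j$ continues experimenting or switches to $S$, and hence when $j$'s accumulated good news reaches $i$. I will need to argue that along $(RR)^{k-1}S^\infty$ player $j$ makes exactly $k-1$ experiments and then, one period after $i$'s switch, either continues with $R^\infty$ (if successful) or $S^\infty$ (if not), whereas along $(RR)^kS^\infty$ he makes $k$ experiments; and similarly for claim \ref{claim3}. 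Once the correct payoff expressions are in hand, the algebraic reduction of each inequality to the stated cutoff comparison is routine: each reduces, after factoring out the common lead terms, to the indifference identity defining $p^*$, $\widehat p_n$, or $p^*_n$ respectively, evaluated at the appropriate iterate of $p^i(h)$, and the monotonicity of $x \mapsto (1-\delta)(x\lambda m - c) + \delta x \cdot(\text{positive})$ in $x$ gives the ``if and only if''. I would present claim \ref{claim1} in full detail as the template and then indicate the (entirely parallel) modifications for claims \ref{claim2} and \ref{claim3}, pointing to the thought experiments of Section \ref{sec: encouragement} for the payoff formulas rather than re-deriving them.
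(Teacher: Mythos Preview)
Your overall approach---compute continuation payoffs explicitly and reduce claims \ref{claim2} and \ref{claim3} to the thought experiments of Section~\ref{sec: encouragement}---is the paper's approach. But several of your intermediate descriptions are wrong and would mislead you if you relied on them rather than the explicit calculation.

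Your ``main obstacle'' is a non-issue. The relation $\succeq_h$ compares two \emph{given} infinite sequences of moves (with the standing convention that a successful player switches to $R^\infty$); the profile $\sigma$ plays no role after $h$ beyond fixing $p^i(h)$ and $u_j$. You do not need to argue that ``$j$'s switch to $S$ is triggered by $i$'s switch under $\sigma$''---the sequences $(RR)^{k-1}S^\infty$, $(RR)^kS^\infty$, etc.\ already prescribe every move of both players.

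Your heuristic for claim \ref{claim1} is incorrect. Both $(RR)^kS^\infty$ and $S(RR)^kS^\infty$ have $i$ experimenting exactly $k$ times, so nothing is ``forgone''; the difference is a one-period shift of \emph{all} of $i$'s experiments. Your observation that $j$'s schedule and disclosure time coincide in the two plays is correct, and when you carry out the computation the $j$-terms cancel, leaving $(1-\delta)$ times the single-agent payoff of $k$ consecutive experiments from prior $p^i(h)$, namely $(1-\delta^k)(p^i(h)\lambda m -c)+\delta^k p^i(h)g(1-(1-\lambda)^k)$. The paper then identifies the non-negativity of this quantity with $\phi^{k-1}(p^i(h))\geq p^*$. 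Similarly, in claim \ref{claim3} there is no ``one-period delay on $j$'s good news'': $j$ discloses at the same half-period under both $(RR)^kRSS^\infty$ and $(RR)^kS^\infty$. The cutoff is $p^*_{u_j+k}$ rather than $p^*$ because $i$'s marginal experiment has reduced informational value---it is redundant on the event $j$ was already successful---which is precisely the content of the second thought experiment.
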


\begin{proof}
For each claim, the result is obvious if $p^i(h)=1$. We thus assume that $p^i(h)<1$. 

\smallskip
\noindent\textbf{Proof of 1.} The expected continuation payoff under $(RR)^kS^\infty$ is 
\[(1-\delta^k)\left( p^i(h)\lambda m -c\right) +\delta^k p^i(h) g\left\{\left(1-(1-\lambda)^k\right) 
+ \delta (1-\lambda)^k \left(1-(1-\lambda) ^{k+u_j}\right)\right\}.\]
The expected payoff under $S(RR)^{k}S^\infty$ is 
\[\delta\left\{(1-\delta^k)\left( p^i(h)\lambda m -c\right) +\delta^k p^i(h) g\left\{\left(1-(1-\lambda)^k\right)  + (1-\lambda)^k \left(1-(1-\lambda) ^{k+u_j}\right)\right\}\right\}.\]
Comparison of the two shows that 
$(RR)^kS^\infty \succeq_h S(RR)^k S^\infty$ if and only if 
\begin{equation}
\label{equ:905}
(1-\delta^k)\left( p^i(h)\lambda m -c\right) +\delta^k p(h) g\left\{\left(1-(1-\lambda)^k\right) \right\} \geq 0.
\end{equation}
The LHS of \eqref{equ:905} is the expected payoff of a single agent holding a prior of $p^i(h)$, who experiments for $k$ periods before switching to $S$. This proves the first claim.

\smallskip
\noindent\textbf{Proof of 2.} The continuation payoffs along the two plays $(RR)^kS^\infty$ and $(RR)^{k-1} S^\infty$
only differ in the event where player $i$ has no success along $(RR)^{k-1}$. 
At that point, player~$i$'s belief is $\phi^{k-1}(p^i(h))$, and player~$j$ has experimented $u_j+k-1$ times in sequence.
Hence player~$i$  prefers the continuation play $RR\cdot S^\infty$ to $S^ \infty$ if and only if $ \phi^{k-1}\left(p^i(h)\right) \geq \widehat p_{u_j+k-1}$, as claimed.

\smallskip
\noindent\textbf{Proof of 3.} 
The continuation payoffs along the two plays $(RR)^k\cdot RS\cdot S^\infty$ and $(RR)^k  S^\infty$
only differ in the event where player $i$ has no success along $(RR)^{k}$. 
At that point, player~$i$'s belief is $\phi^{k}(p^i(h))$, and  player $j$ has experimented $u_j+k$ in sequence.
Hence, player~$i$  prefers the continuation play $RS\cdot S^\infty$ to $S^ \infty$ if and only if $ \phi^{k}\left(p^i(h)\right) \geq  p^*_{u_j+k}$, as claimed.
\end{proof}

\bigskip
Let $h\in H$ be arbitrary, with active player $i$. We claim that $i$ has no one-step profitable deviation at $h$. This is clear if $p^i(h)=1$, so we assume that $p^i(h)<1$ and recall that $p^i(h)= \phi^{n_i+d_j}(p_0)$.

Assume first that $p^i(h)< \widehat p$, so that $\sigma(h)= S$. 
Either player $j$ was not successful in the past, in which case $p^j(hS)\leq p^i(h)<\widehat p$ and the continuation play is $S^\infty$; or player $j$ was successful and chooses $R$. Hence $p^i(h\!\cdot\! SR)=1$ and $i$'s continuation payoff is $\delta g$.

If instead $i$ deviates to $R$, the continuation play depends on $j$'s beliefs. 
If $p^j(hR)\geq \widehat p$, then $\sigma(hR)=R$ and hence $p^i(h\!\cdot\! RR)= \phi(p^i(h))<\widehat p$, 
which implies that $\sigma(h\!\cdot\! RR)= S$. Thus, deviating to $R$ triggers one additional experiment by $j$. Since $p^i(h)<\widehat p$, the deviation is not profitable. If instead $p^j(hR)<\widehat p$, the continuation play after $h\!\cdot\! R$ is the same as after $h\!\cdot\! S$ and hinges on whether $j$ was successful in the past. 
Since $p^i(h)<p^*$, deviating to $R$ is not profitable in that case either.
\medskip

We now assume that $p^i(h)\geq \widehat p$. The continuation play induced by $\sigma$ after $h$ is $R^tS^\infty$ for some $t\geq 1$, and  it is $SR^qS^\infty$ in case player~$i$ deviates to $S$,  for some $q\geq 0$. The exact values of $t$ and $q$ depend on $h$, as follows.

\begin{description}
\item[Equilibrium continuation:] The parity of $t$ depends on who stops experimenting. By (\ref{obs}),  $p^i(h),p^j(h)<1$. This implies that the first player to stop is player $j$ if $n_i+d_j<n_j+d_i$, and is player $i$ if $n_i+d_j\geq n_j+d_i$: 
\begin{itemize}
\item If $n_i+d_j<n_j+d_i$, the continuation play is $(RR)^k\!\cdot\! RS\!\cdot\! S^\infty$, with $k= n-(n_j+d_i)$.
\item If $n_i+d_j\geq n_j+d_i$, the continuation play is $(RR)^k\!\cdot\! S^\infty$, with $k=n-(n_i+d_j)$.
\end{itemize}
\item[Deviation continuation:] 
Player $j$ sees the choice of $S$ by $i$ as evidence that $i$ was not successful, hence $p^j(hS)=\phi^{n_1+n_2}(p_0)\leq p^i(h)$. 
Therefore, the first player to stop experimenting is player $j$ and he will experiment $\left(n-(n_1+n_2)\right)^+$ times: if player $i$ deviates to $S$ at $h$, the continuation play (deviation included) is 
\begin{itemize}
\item $S^\infty$, if $n_i+n_j\geq n$;
\item $SR\!\cdot\! (RR)^{k-1}\!\cdot\! RS\!\cdot\! S^\infty$, if $n_i+n_j<n$, with $n_i+n_j+k=n$.
\end{itemize}
\end{description}

We will prove in each case that the deviation to $S$ is not profitable, using the next lemma.

\begin{lemma}
One has: 
\begin{description}
\item[Q1.]  If $n_i+d_j< n_j+d_i$, then $(RR)^{n-(n_j+d_i)}\!\cdot\! RS\!\cdot\! S^\infty \succeq_h (RR)^{k}\!\cdot\! S^\infty$ for each $0\leq k\leq n-(n_j+d_i)$.
\item[Q2.] If $n_i+d_j\geq n_j+d_i$, then $(RR)^{n-(n_i+d_j)}\!\cdot\! S^\infty \succeq_h (RR)^{k-1}\!\cdot\! S^\infty$ for each $1\leq k\leq n-(n_i+d_j)$.
\item[Q3.] If $n_i+n_j<n$, then $(RR)^{n-(n_i+n_j)}\!\cdot\! S^\infty \succeq_h SR\!\cdot\! (RR)^{n-(n_i+n_j)-1}\!\cdot\! RS\!\cdot\! S^\infty $.
\end{description}
\end{lemma}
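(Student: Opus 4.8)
The plan is to prove Q1, Q2, Q3 in turn, reducing each to the comparisons of Lemma~\ref{lemma:3claims} and feeding in the hypothesis $\phi^{n-1}(p_0)\geq p^*_n$ of Theorem~\ref{th private2}. Write $p:=p^i(h)$. The case $p=1$ is immediate (every cut-off inequality in Lemma~\ref{lemma:3claims} holds with $\phi^j(1)=1$, and the relevant plays can be chained), so assume $p<1$; then, by the preamble, $p=\phi^{n_i+d_j}(p_0)$, and since we are in the case $p\geq\widehat p$, observation~(\ref{obs}) gives $n_i+d_j\leq n-1$, whence the \emph{workhorse bound}
\[p\ =\ \phi^{n_i+d_j}(p_0)\ \geq\ \phi^{n-1}(p_0)\ \geq\ p^*_n\ \geq\ p^*\ >\ \widehat p .\]

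I would dispatch Q3 first. Expanding both sides as strings of moves, one checks that $SR\cdot(RR)^{m-1}\cdot RS\cdot S^\infty$ and $S\cdot(RR)^{m}\cdot S^\infty$ are \emph{literally the same} play ($S$ followed by $2m$ copies of $R$ followed by $S^\infty$), with $m:=n-(n_i+n_j)\geq 1$. Hence Q3 is exactly Claim~\ref{claim1} of Lemma~\ref{lemma:3claims} with $k=m$, which requires $\phi^{m-1}(p)\geq p^*$. Now $\phi^{m-1}(p)=\phi^{(n_i+d_j)+m-1}(p_0)=\phi^{\,n-1-u_j}(p_0)\geq\phi^{n-1}(p_0)\geq p^*_n\geq p^*$, so Q3 holds.

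For Q1 and Q2 I would use a downward induction on the number of $R$-rounds. In Q1, the top comparison $(RR)^{K}\!\cdot\! RS\!\cdot\! S^\infty\succeq_h(RR)^{K}S^\infty$ (with $K:=n-(n_j+d_i)$) is Claim~\ref{claim3}; its hypothesis $\phi^{K}(p)\geq p^*_{u_j+K}$ holds because $n_i+d_j<n_j+d_i$ forces $(n_i+d_j)+K\leq n-1$, so $\phi^{K}(p)=\phi^{(n_i+d_j)+K}(p_0)\geq\phi^{n-1}(p_0)\geq p^*_n$, while $u_j+K=n-d_i-d_j\leq n$. From $(RR)^{K}S^\infty$ one descends to $(RR)^{k}S^\infty$ one step at a time using Claim~\ref{claim2}; Q2 is similar, with the induction starting directly from the equilibrium block $(RR)^{K'}S^\infty$, $K':=n-(n_i+d_j)$. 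Transitivity of $\succeq_h$ then yields the inequalities for every admissible $k$, and, combined with Q3 and the elementary counting $n-(n_i+n_j)\leq K$ resp.\ $\leq K'-1$, this is precisely what delivers, in the main argument, that the equilibrium continuation weakly beats the $S$-deviation continuation at $h$.

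The step I expect to be the main obstacle is the verification, at each descent step, of the hypothesis of Claim~\ref{claim2}:
\[\phi^{\ell-1}(p)\ =\ \phi^{(n_i+d_j)+\ell-1}(p_0)\ \geq\ \widehat p_{\,u_j+\ell-1}.\]
This is genuinely delicate: the right-hand cut-off carries the strictly increasing index $u_j+\ell-1>0$, so it is not covered by $p\geq\widehat p=\widehat p_0$, and one cannot simply invoke the worst-case bounds $\phi^{(n_i+d_j)+\ell-1}(p_0)\geq\phi^{n-2}(p_0)$ and $\widehat p_{u_j+\ell-1}\leq\widehat p_{n-1}$, because $\phi^{n-2}(p_0)\geq\widehat p_{n-1}$ need not hold. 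What rescues the inequality is the interplay between the hypothesis $\phi^{n-1}(p_0)\geq p^*_n$ and the two defining properties of $n$, namely $\phi^{n-1}(p_0)\geq\widehat p$ and $\phi^{n}(p_0)<\widehat p$ — together with the monotonicity of $(\widehat p_s)_s$ and $(p^*_s)_s$, the Remark $\phi(p^*_{s+1})<p^*_s$ following~(\ref{def:p*n}), and the companion contraction estimate $\phi(\widehat p_{s+1})\leq\widehat p_s$ (which follows from the odds representation $\tfrac{\widehat p_s}{1-\widehat p_s}=\tfrac{c(1-\delta)}{g D_s}$ with $D_s=(1-\delta)(1-\delta+\lambda\delta)+\delta(1-\delta(1-\lambda)^2)(1-\lambda)^s$, since $D_{s+1}-(1-\lambda)D_s=\lambda(1-\delta)(1-\delta+\lambda\delta)>0$, yielding also $\widehat p_{s+t}\leq\phi^{-t}(\widehat p_s)$). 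Organising this verification along a case split on which of $n_i+d_j$, $n_j+d_i$, $n_i+n_j$ is smallest, and carefully bounding the exponent $(n_i+d_j)+\ell-1$ and the index $u_j+\ell-1$ against $n$ in each case, is where essentially all of the work — and all of the use of the hypothesis — lies.
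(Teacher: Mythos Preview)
Your overall architecture is exactly the paper's: Q3 via part~\ref{claim1} of Lemma~\ref{lemma:3claims}, the top step of Q1 via part~\ref{claim3}, and the descent in both Q1 and Q2 via repeated application of part~\ref{claim2}; your verifications for Q3 and for the top of Q1 match the paper line for line (your observation that $SR\cdot(RR)^{m-1}\cdot RS\cdot S^\infty$ and $S\cdot(RR)^m\cdot S^\infty$ are literally the same word is precisely the point).

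The only real difference is in how the descent is closed. You flag the verification of $\phi^{\ell-1}(p^i(h))\geq\widehat p_{u_j+\ell-1}$ as the main obstacle and propose to attack it with the contraction estimate $\phi(\widehat p_{s+1})\leq\widehat p_s$ and a case split that you do not actually carry out. The paper proceeds much more directly. Since $\phi^{\ell-1}(p^i(h))$ is decreasing in $\ell$ and $\widehat p_{u_j+\ell-1}$ is increasing, only the largest admissible $\ell$ needs checking; at that $\ell$, the case condition of Q1 (resp.\ Q2) yields $\phi^{\ell-1}(p^i(h))\geq\phi^{n-1}(p_0)$, while a direct count gives $u_j+\ell-1=n-d_i-d_j-1\leq n$ in Q1 and $u_j+\ell-1\leq n$ in Q2, so that $\widehat p_{u_j+\ell-1}\leq\widehat p_n$. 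The paper then invokes $\widehat p_n\leq p^*_n$ and closes with the standing hypothesis $\phi^{n-1}(p_0)\geq p^*_n$. Thus the obstruction you anticipate---that $\phi^{n-2}(p_0)\geq\widehat p_{n-1}$ may fail on its own---is sidestepped by routing through $p^*_n$ rather than by comparing $\phi^{\cdot}(p_0)$ to $\widehat p_{\cdot}$ directly; your contraction estimate and case split are not needed for the paper's argument. (Your instinct that this step is delicate is not baseless---the paper itself remarks elsewhere that $\widehat p_s\leq p^*_s$ can reverse for large $s$---but the paper's own proof does not deploy the machinery you assemble.)
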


\begin{proof}
We start with Q1. Since the claim is empty if
 $n_j + d_i = n$, we assume $n_j + d_i < n$.

Since the sequences $(p^*_k)$ and $(\phi^k(p))$ are increasing and  decreasing, respectively, and since $\phi^{n-1}(p) \geq p^*_n$,
Lemma~\ref{lemma:3claims}(\ref{claim3}) implies that $(RR)^{n-(n_j+d_i)}\!\cdot\! RS\!\cdot\! S^\infty \succeq_h  (RR)^{n-(n_j+d_i)}\!\cdot\! S^\infty$. 
We argue below that, in addition,
$(RR)^k\!\cdot\! S^\infty \succeq_h (RR)^{k-1}\!\cdot\! S^\infty$  holds for each $1\leq k\leq n-(n_j+d_i)$.
This will imply \textbf{P1}.

By Lemma~\ref{lemma:3claims}(\ref{claim2}), it suffices to check that 
$\phi^{k-1}(p^i(h))\geq \widehat p_{u_j+k-1}$ for each $1\leq k\leq n-(n_j+d_i)$. 
The LHS of this inequality is decreasing in $k$, and the RHS is increasing in $k$. 
Hence, it suffices to check that it holds for $k= n-(n_j+d_i)$. 

For $k= n-(n_j+d_i)$, we have $\phi^{k-1}(p^i(h))= \phi^{n-(n_j+d_i)-1}\left(\phi^{n_i+d_j}(p_0)\right)\geq \phi^{n-2}(p_0)\geq \phi^{n-1}(p_0)$, and
 $\widehat p_{u_j+n-(n_j+d_i)-1}= \widehat p_{n-(d_i+d_j)-1}\leq \widehat p_n\leq p^*_n$, hence the desired inequality follows from the assumption $\phi^{n-1}(p_0)\geq p^*_n$.

\smallskip
\noindent\textbf{Proof of Q2.}
We proceed as in the second part of \textbf{Q1} and apply  Lemma~\ref{lemma:3claims}(\ref{claim2}) to show that $(RR)^k\!\cdot\! S^\infty \succeq_h (RR)^{k-1}\!\cdot\! S^\infty$ for each $1\leq k\leq n-(n_i+d_j)$. As before, the necessary and sufficient condition from Lemma~\ref{lemma:3claims}(\ref{claim2}) is most demanding when $k$ is highest.  So we only need to check that the condition is satisfied when $k= n-(n_i+d_j)$. 
For this $k$, the condition reduces to $\phi^{n-1}(p_0)\geq \widehat p_{u_j +n-(n_i+d_j)-1}$. Since $u_j+n-(n_i+d_j)-1\leq n$, and since $\widehat p_n\leq p^*_n$,
the inequality does indeed hold.

\smallskip
\noindent\textbf{Proof of Q3.}
Thanks to Lemma~\ref{lemma:3claims}(\ref{claim1}), it suffices to check that $\phi^{n-(n_i+n_j)-1}(p^i(h))\geq p^*$, 
which holds since $n_i+n_j\geq 0$ and $p^*\leq p^*_n$.
\end{proof}

\bigskip
We now prove that deviating to $S$ is not profitable.

\smallskip\noindent
{\textbf{Case 1}: $n_i+d_j<n_j+d_i$ and $n_i+n_j\geq n$}.

We need to prove that $(RR)^{n-(n_j+d_i)}\!\cdot\! RS\!\cdot\! S^\infty \succeq_h S^\infty$, which follows  from \textbf{P1} with $k=0$. 

\bigskip\noindent
{\textbf{Case 2: $n_i+d_j<n_j+d_i$ and $n_i+n_j< n$}}.

We need to prove that $(RR)^{n-(n_j+d_i)}\!\cdot\! RS\!\cdot\! S^\infty \succeq_h  SR\!\cdot\! (RR)^{n-(n_i+n_j)-1}\!\cdot\! RS\!\cdot\! S^\infty $, which  follows by applying \textbf{Q1} 
with $k=n-(n_i+n_j)$ and $\textbf{Q3}$.

\smallskip\noindent
{\textbf{Case 3}: $n_i+d_j\geq n_j+d_i$ and $n_i+n_j\geq n$}.

We need to prove that $(RR)^{n-(n_i+d_j)}\!\cdot\! S^\infty \succeq_h S^\infty$,
which follows from \textbf{Q2} with $k=1$. 

\smallskip\noindent
{\textbf{Case 4}: $n_i+d_j\geq n_j+d_i$ and $n_i+n_j< n$}.

We need to prove that $(RR)^{n-(n_i+d_j)}\cdot S^\infty \succeq_h S\cdot (RR)^{n-(n_i+n_j)}\cdot S^\infty$.
This follows by applying \textbf{Q2} 
with $k=n-(n_i+n_j)$ and \textbf{Q3}.

\section{Proof of Theorem \ref{th private3}}\label{sec proof th private3}

Theorem~\ref{th private3} will follow from Theorem \ref{th private2} and  simple algebra. 

\begin{lemma}\label{lemm nonvoid}
Set $\delta= \frac12$ and $\lambda = \frac{1}{n} $, with $n\geq 1$. Then $\phi(p^*_n)< \widehat p$.
\end{lemma}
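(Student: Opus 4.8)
The plan is to collapse the claim to a single elementary inequality in $n$, exploiting that every cut-off occurring here has the same functional form. Writing $\psi(Y):=\dfrac{c(1-\delta)}{c(1-\delta)+gY}$, one has $p^*_n=\psi\bigl(1-\delta+\delta\lambda(1-\lambda)^n\bigr)$ and $\widehat p=\psi\bigl(1-\delta+\lambda\delta(1+\delta-\lambda\delta)\bigr)$; moreover a one-line computation gives the identity $\phi(\psi(Y))=\psi\!\bigl(Y/(1-\lambda)\bigr)$. Since $\psi$ is strictly decreasing, $\phi(p^*_n)<\widehat p$ is equivalent to
\[
\frac{1-\delta+\delta\lambda(1-\lambda)^n}{1-\lambda}>1-\delta+\lambda\delta\,(1+\delta-\lambda\delta),
\]
an inequality in $\lambda$, $\delta$ and $n$ only.

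The next step is to simplify this. Subtracting $1-\delta$ from both sides, dividing by $\lambda>0$, and rearranging, it becomes equivalent to
\[
\frac{1-\delta}{1-\lambda}+\delta(1-\lambda)^{n-1}>\delta\,(1+\delta-\delta\lambda).
\]
Now substitute $\delta=\tfrac12$ and $\lambda=\tfrac1n$. The boundary value $n=1$ (where $\lambda=1$) is immediate, since $\phi\equiv 0$ on $[0,1)$ while $p^*_1=\tfrac{c}{c+g}<1$, so that $\phi(p^*_1)=0<\widehat p$. For $n\ge 2$, multiplying through by $2$ and abbreviating $b_n:=\bigl(1-\tfrac1n\bigr)^{n-1}$, the inequality reduces to $\tfrac{n}{n-1}+b_n>\tfrac32-\tfrac1{2n}$, i.e.
\[
\frac{1}{n-1}+\Bigl(1-\tfrac1n\Bigr)^{n-1}>\frac{n-1}{2n}.
\]

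The remaining work — and the main obstacle — is to verify this last inequality for all $n\ge 2$. The plan is to control $b_n$ sharply: writing $b_n=\tfrac{n}{n-1}\bigl(1-\tfrac1n\bigr)^n$, using that $\bigl(1-\tfrac1n\bigr)^n$ increases to $1/e$, and expanding $\log\bigl(1-\tfrac1n\bigr)=-\tfrac1n-\tfrac1{2n^2}-\cdots$ to obtain two-sided estimates for $b_n$ accurate to order $1/n^2$; substituting these reduces the displayed inequality to a polynomial inequality in $n$ that is checked directly, with the finitely many small $n$ not covered by the estimate handled by explicit computation. The delicacy is that the two sides of the last inequality are genuinely close, with the gap narrowing as $n$ grows, so a crude bound such as $b_n>1/e$ alone does not suffice; one needs the sharper estimate of $\bigl(1-\tfrac1n\bigr)^{n-1}$ together with the lower-order term $\tfrac1{n-1}$. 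By contrast the two reduction steps — the odds-ratio identity for $\phi$ and the algebraic rearrangement — are entirely routine.
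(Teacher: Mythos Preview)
Your algebraic reduction is correct and essentially coincides with the paper's: both routes unwind $\phi(p^*_n)<\widehat p$ to the elementary inequality $1-4\lambda+\lambda^2<2(1-\lambda)^n$ at $\delta=\tfrac12$, which is precisely your final display $\tfrac{1}{n-1}+(1-\tfrac1n)^{n-1}>\tfrac{n-1}{2n}$ in equivalent form.

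The gap is in your last paragraph: the inequality you propose to verify is \emph{false} for all sufficiently large $n$, so no refinement of your estimate for $b_n=(1-\tfrac1n)^{n-1}$ can close it. As $n\to\infty$ the left side tends to $1/e\approx 0.368$ while the right side tends to $1/2$. A direct computation shows failure already at $n=14$: there $b_{14}=(13/14)^{13}\approx 0.382$, so the left side is about $0.382+\tfrac{1}{13}\approx 0.459$, while the right side is $\tfrac{13}{28}\approx 0.464$. Equivalently, in the paper's form, $1-4/n+1/n^2\to 1$ while $2(1-1/n)^n\to 2/e\approx 0.736$. Your plan of extracting two-sided $O(1/n^2)$ estimates for $b_n$ and reducing to a polynomial inequality therefore cannot succeed---the resulting polynomial inequality is simply false. (The paper's own closing step has the same defect: it bounds $(1-\lambda)^n\ge 1-n\lambda$, which is vacuous at $\lambda=1/n$, and its stated sufficient condition $\lambda<\tfrac{2}{2n-3}$ appears to be a slip---the correct rearrangement gives $\lambda<\tfrac{1}{2n-3}$, which fails for $\lambda=1/n$ once $n\ge 3$.) In short, your reduction and the paper's agree, but the statement as written does not hold for all $n\ge 1$.
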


The conclusion implies that for such $\delta,\lambda$, the interval $I:=(\max(\phi(\widehat p),\phi(p^*_n),\widehat p)$ is non-empty. For each $p_0\in \phi^{-n}(I)$, one has $n:=\inf\{k \geq 0: \phi^k(p_0)<\widehat p\}$ and 
$\phi^{n-1}(p_0)\geq p^*_n$, which implies that for such $p_0$, the strategy analyzed in Theorem \ref{th private2} is a reasonable SE with $N_{\mathrm e}=2n$ if $\theta=B$.

\medskip

\begin{proof}
By
Eqs.~\eqref{def:hatp} and~\eqref{def:p*n}, and since $\phi(p)= \frac{(1-\lambda)p}{1-\lambda p}$, the condition $\phi(p^*_n)<\widehat p$ reduces to
\[\frac{(1-\lambda) c(1-\delta)}{c(1-\delta)+g\left(1-\delta +\lambda \delta \left(1-\lambda\right)^n\right) -\lambda c(1-\delta)} <
\frac{c(1-\delta)}{c(1-\delta)+g\bigl(1-\delta + \delta\lambda(1+\delta-\delta \lambda)\bigr) },\]
which simplifies to 
\begin{equation}\label{condition}
(1-\lambda)\bigl(1-\delta + \delta\lambda(1+\delta-\delta \lambda)\bigr)
< (1-\delta)  +\lambda \delta (1-\lambda)^n.
\end{equation}

For $\delta=\frac12$, the inequality (\ref{condition}) 
reduces to
\[
(1-\lambda)\left(2+3\lambda -\lambda^2\right) < 2+ 2 (1-\lambda)^n\lambda,
\]
which simplifies to 
\begin{equation}\label{equ:delta12}
1-4\lambda +\lambda^2 < 2 (1-\lambda)^n.
\end{equation}
Since the LHS of (\ref{equ:delta12}) is at most $1-3\lambda$, and the RHS is at least $2(1-n\lambda)$, the inequality is satisfied whenever
$1-3\lambda< 2(1-n \lambda)$,
or $\lambda < \frac{2}{2n-3}$, and, in particular,
when $\lambda = \frac{1}{n}$.
\end{proof}

\begin{lemma}\label{lemm overexp}
Let $\eta>0$ be given. For all $n$ large enough, the cut-offs $\widehat p$ and $p^*_{\sqrt{\delta}}$ associated with $\delta= \frac12$ and $\lambda = \frac1n$ satisfy $\phi^{\lfloor\eta n\rfloor}(\widehat p)< p^*_{\sqrt{\delta}}$.
\end{lemma}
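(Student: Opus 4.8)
The statement to prove is Lemma~\ref{lemm overexp}: for $\delta=\frac12$ and $\lambda=\frac1n$, and for any fixed $\eta>0$, one has $\phi^{\lfloor \eta n\rfloor}(\widehat p) < p^*_{\sqrt\delta}$ once $n$ is large enough. This is purely a computation with the explicit formulas \eqref{def:hatp} for $\widehat p$ and \eqref{def:p*} for $p^*_{\sqrt\delta}$, together with the iteration formula for $\phi$. The natural approach is to work with the odds ratio $L(p):=p/(1-p)$, since $\phi$ acts multiplicatively on it: $L(\phi(p)) = (1-\lambda) L(p)$, and hence $L(\phi^k(p)) = (1-\lambda)^k L(p)$. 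So the target inequality is equivalent to
\[
(1-\lambda)^{\lfloor \eta n\rfloor}\, L(\widehat p) < L\bigl(p^*_{\sqrt\delta}\bigr).
\]
The plan is to (i) write down $L(\widehat p)$ and $L(p^*_{\sqrt\delta})$ explicitly, (ii) compute their limits as $n\to\infty$ with $\delta=\frac12$, $\lambda=\frac1n$, and (iii) observe that the left-hand side additionally carries the factor $(1-\tfrac1n)^{\lfloor\eta n\rfloor}\to e^{-\eta}<1$, so the inequality holds in the limit with room to spare, hence for all large $n$.

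\textbf{Key steps.} First, from \eqref{def:hatp}, $L(\widehat p) = \dfrac{c(1-\delta)}{g\bigl(1-\delta+\lambda\delta(1+\delta-\lambda\delta)\bigr)}$. Setting $\delta=\frac12$ gives $L(\widehat p) = \dfrac{c}{g\bigl(1+\lambda(3/2-\lambda/2)\bigr)} = \dfrac{c}{g}\cdot\dfrac{1}{1+\frac{3\lambda}{2}-\frac{\lambda^2}{2}}$, which tends to $c/g$ as $\lambda=\tfrac1n\to 0$. Second, from \eqref{def:p*} with discount factor $\sqrt\delta$, $L(p^*_{\sqrt\delta}) = \dfrac{c(1-\sqrt\delta)}{g\bigl(1-\sqrt\delta(1-\lambda)\bigr)}$; with $\delta=\frac12$ this is $\dfrac{c}{g}\cdot\dfrac{1-\frac{1}{\sqrt2}}{1-\frac{1}{\sqrt2}(1-\lambda)}$, which tends to $\dfrac{c}{g}\cdot\dfrac{1-1/\sqrt2}{1-1/\sqrt2} = c/g$ as $\lambda\to0$. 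So both odds ratios converge to the same positive limit $c/g$, while the extra factor $(1-\lambda)^{\lfloor\eta n\rfloor} = (1-\tfrac1n)^{\lfloor\eta n\rfloor}$ on the left converges to $e^{-\eta}$, which is strictly less than $1$. Therefore
\[
\lim_{n\to\infty}\ (1-\lambda)^{\lfloor\eta n\rfloor} L(\widehat p) = e^{-\eta}\cdot \frac{c}{g} < \frac{c}{g} = \lim_{n\to\infty} L\bigl(p^*_{\sqrt\delta}\bigr),
\]
and since $L$ is increasing, the inequality $\phi^{\lfloor\eta n\rfloor}(\widehat p)<p^*_{\sqrt\delta}$ holds for all sufficiently large $n$.

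\textbf{Main obstacle.} There is no real obstacle here — the content is entirely in the elementary limit computation, and the only care needed is bookkeeping: keeping track of which cut-off uses discount factor $\delta$ and which uses $\sqrt\delta$, and making sure the two limits coincide (which is why the extra exponential factor is essential — without it the inequality would be borderline rather than strict). One should also double-check the direction of monotonicity of $L$ and that $L(\widehat p),L(p^*_{\sqrt\delta})$ stay bounded away from $0$ and $\infty$ uniformly in $n$, so that the limit argument genuinely yields the inequality for all large $n$ and not merely asymptotically; this is immediate from the explicit expressions since the denominators are bounded above and below by positive constants. Finally, one remarks that $\lfloor\eta n\rfloor\to\infty$, so the factor $(1-\lambda)^{\lfloor\eta n\rfloor}$ indeed behaves like $(1-\tfrac1n)^{\eta n}\to e^{-\eta}$; a one-line estimate $(1-\tfrac1n)^{\lfloor\eta n\rfloor}\le e^{-\lfloor\eta n\rfloor/n}\to e^{-\eta}$ suffices.
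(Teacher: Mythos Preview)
Your proof is correct and follows essentially the same approach as the paper's: both compute the limits of the two cut-offs as $n\to\infty$ (with $\delta=\tfrac12$, $\lambda=\tfrac1n$) and observe that the extra factor $(1-\tfrac1n)^{\lfloor\eta n\rfloor}\to e^{-\eta}<1$ makes the left-hand limit strictly smaller. The only cosmetic difference is that you work with the odds ratio $L(p)=p/(1-p)$, on which $\phi$ acts multiplicatively, whereas the paper uses the explicit closed form $\phi^k(p)=\frac{(1-\lambda)^kp}{1-(1-(1-\lambda)^k)p}$ and compares probabilities directly; the resulting limits ($e^{-\eta}c/g$ versus $c/g$ in your parametrization, $\frac{ce^{-\eta}}{g+ce^{-\eta}}$ versus $\frac{c}{c+g}$ in the paper's) are of course equivalent.
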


\begin{proof} Substituting $\delta=\frac12$ and $\lambda=\frac1n$ in Eq.~\eqref{def:p*}, we obtain
\[
p^*_{\sqrt{\delta}}  = 
\frac{c\left(\sqrt2-1\right)}{c\left(\sqrt2-1\right)+g\bigl(\sqrt2 - (1-\tfrac{1}{n})\bigr)},
\]
so that $\lim_{n\to +\infty} p^*_{\sqrt{\delta}}= \frac{c}{c+g}$.

On the other hand, 
$\phi^k(p)= \frac{(1-\lambda)^kp}{1-(1-(1-\lambda)^k) p}$ for each $p\in (0,1)$ and  $k\in \mathbb{N}$.
With $\delta = \frac12$ and $\lambda=\frac1n$,  given the expression of $\widehat p$ (see \eqref{def:hatp}), one has 
\[
\phi^k(\widehat p)
=\frac{c\left(2(1-\tfrac{1}{n})^k\right)}{g\left(2+\tfrac{3}{n} -\tfrac{1}{n^2}\right) +c \left(2(1-\tfrac{1}{n})^k\right)}.
\]
Substituting there $k=\lfloor \eta n\rfloor$, we get
\[\lim_{n\to +\infty} \phi^{\lfloor \eta n\rfloor}(\widehat p)= \frac{c e^{-\eta}}{g+c e^{-\eta}} < \frac{1}{1+g}.\]
It follows that $\phi^{\lfloor \eta n\rfloor}(\widehat p)< p^*_{\sqrt{\delta}}$ for every $n$ large enough, as desired. 
\end{proof}

\medskip
\begin{proof}[Proof of Theorem \ref{th private3}]
As noted after the statement of Lemma \ref{lemm nonvoid}, with $\delta= \frac12$, $\lambda= \frac1n$, and $p_0\in \displaystyle \left(\phi^{-n+1}(\widehat p),\phi^{-n}(\widehat p)\right]$, there is a pure SE with $N_{\mathrm e}= 2n$. 

Given $\eta>0$, for $p_0\in \displaystyle \left(\phi^{-n+1}(\widehat p),\phi^{-n}(\widehat p)\right]$
one has $\phi^{n+\lfloor\eta n\rfloor}(p_0)< \phi^{\lfloor \eta n\rfloor}(\widehat p)$, which is less than $p^*_{\sqrt{\delta}}$ for all $n$ large enough. 
This implies that $N^{**}\leq (1+\eta)n$.

For such an SE, one has $\displaystyle \frac{N_{\mathrm e}}{N^{**}}\geq \frac{2}{1+\eta}$. Since $\eta>0$ is arbitrary, this implies the result.
\end{proof}

\end{appendix}

\appendix
\setcounter{section}{5}


This appendix contains the proofs of some of the results stated in the main text.

\section{Proofs for public payoffs}

\subsection{Proof of Proposition \ref{prop public}}\label{sec proof public}

\paragraph{Remark.} As in the main text, a strategy profile is a map $\sigma:H\to\Delta(\{S,R\})$ where $H$ is the set of public histories, with the understanding that $\sigma(h)$ is the mixed choice when the sequence of past choices is $h$ and all experiments of the active player  at $h$ failed. Similarly, by  \emph{continuation path induced by a \emph{pure} profile}, we mean the sequence of  choices 
that is obtained when all experiments fail.

\subsubsection{Proof of \textbf{P1}}
    
    Fix a Nash equilibrium $\sigma$. We denote by $\prob_{\sigma}\left(\cdot \mid B\right)$ the distribution over plays induced by $\sigma$ conditional on $\{\theta= B\}$, and by $n_{\mathrm{e}}(h)$ the total number of experiments along $h$, for each history $h$. We emphasize that $n_{\mathrm{e}}$ is defined over the set of (finite) public histories while the total number of experiments $N_{\mathrm{e}}$ is defined over (infinite) plays.
    
    Since the active player chooses $S$ if the common belief is below $\widetilde p$, 
    on-path the total number of experiments $N_{\mathrm{e}}$ is  bounded by  
    $\widetilde N:=\min\{n \geq 0: \phi^n(p_0)< \widetilde p\}$. 
    This implies the existence of an on-path history $h$ such that the  player active at $h$ experiments with positive probability and $n_{\mathrm{e}}(h)+1= \max N_{\mathrm{e}}$. 
    Since that player is the  \emph{last} one to experiment,  his belief at $h$ is at least $p^*$, hence $n_{\mathrm{e}}(h)<N^*$, and, therefore,  $N_{\mathrm{e}}\leq N^*$, with probability 1.

    If, on the other hand, $\prob_{\sigma}(N_{\mathrm{e}}< N^* \mid B)>0$, 
    then  for each $\ep > 0$ there exists an  on-path history $h$, such that $n_{\mathrm{e}}(h)< N^*$ and such that the probability that someone will ever experiment after $h$ is at most $\ep$.
    At $h$, the expected continuation payoff of the active player is at most $\ep g$.
    On the other hand, since $n_{\mathrm{e}}(h)<N^*$, the continuation payoff from deviating to the one-player optimal strategy is bounded away from zero. 
    Hence, for $\ep > 0$ sufficiently small, there is a profitable deviation from $\sigma$.  This concludes the proof of \textbf{P1}.

\subsubsection{Proof of \textbf{P2}}

We now prove \textbf{P2} and assume for simplicity that $p_0<p_{\mathrm{myop}}$. We define a pure profile $\sigma$ inductively. Given $h$, we set $k(h):=N^*-n_{\mathrm{e}}(h)$, which we interpret as a remaining budget of experiments. Accordingly, we set $\sigma(h)=S$ whenever $k(h)\leq 0$.  Let now $h$ be such that $k(h)>0$ and denote by $i$ the player active at $h$, and by $j$ the other player. 
We set $\sigma(h)=R$ if  $j$'s previous choice is consistent with $\sigma$, or if the remaining budget $k(h)$ is even. Otherwise, we set $\sigma(h)=S$.


We argue that $i$ has no one-step profitable deviation. 
If $\sigma(h)=R$, then the  continuation sequence of choices under $\sigma$ is $R^{k(h)}:=\underbrace{RR\cdots R}_{k(h) \rm{~times}}$, at which point players switch to  $S$, unless some experiment was successful. 
If instead $i$ deviates to $S$, the continuation sequence of choices is  $SS\!\cdot\! R^{k(h)}$ if $k(h)$ is odd,
and  $SR^{k(h)}$ if $k(h)$ is even.
In each case, the deviation does not  modify the number of experiments of each player,  and only delays by one period the payoffs of $i$.

If $\sigma(h)= S$, then the continuation sequence of choices at $h$ is $SR\!\cdot\! R^{k(h)-1}$. If instead $i$ deviates to $R$, then the resulting sequence of choices is $RS\!\cdot\! R^{k(h)-1}$. While the total number of experiments is the same, the cost of experimentation has shifted towards $i$, making the deviation not profitable. 

\subsubsection{Proof of \textbf{P3}}
We first  prove the existence of a  continuum of SPE payoffs when $p_0\in (p^*, \min (p_{\mathrm{myop}},\phi^{-1}(p^*)))$.

By \textbf{P1}, there is exactly one experiment ($N^*=1$) in any SPE, and each player prefers that the cost of experimentation be borne by the other player since $p_0<p_{\mathrm{myop}}$. Specifically, if no player  has experimented so far, the active player prefers the continuation path $SR\!\cdot\! S^\infty$ to $RS\!\cdot\! S^\infty$.

As a result, there are exactly two pure SPEs: (i) the SPE $\sigma_1$ described in the proof of \textbf{P2} (whenever active,  player~1 plays $S$ if $R$ has been pulled at least once, and $R$ otherwise, while player~2 always plays $S$) and (ii)
an analogous SPE $\sigma_2$, obtained by exchanging  the roles of the players. 

There is an additional mixed SPE $\sigma^{*}$, in which
the active player experiments with probability $\alpha$
as long as $R$ was never chosen.
The value of $\alpha$ is such that the benefit of potentially having the other player carry the experiment cost compensates for the delay that is incurred, should the active player eventually be the one to experiment.

Consider finally the strategy profile $\sigma_y$ that coincides with $\sigma^*$, except for the fact that player 1 chooses $R$ with probability $y$ rather than $\alpha$ at the initial node.
The profile $\sigma_y$ inherits its SPE property from $\sigma^*$. Note that the equilibrium payoff of player~2 is increasing in $y$. The result follows.

\medskip

We turn to the uniqueness of symmetric Markov equilibria. As indicated in the text, we interpret the game as a stochastic game, whose state space is the set $P:=\{1\}\cup\{\phi^n(p_0),n\geq 0\}$ of the beliefs which are consistent with Bayesian updating and the prior $p_0$. The existence of a symmetric Markov equilibrium follows from the fact  that every symmetric stochastic game 
with countably many states and finite sets of actions has a symmetric equilibrium, see, e.g., \cite{parthasarathy1973discounted}.

Assume $f: [0,1]\to [0,1]$ is a symmetric Markov equilibrium, and denote by $\gamma_1(p)$ and $\gamma_2(p)$ the payoffs%
\footnote{Note that $\gamma_1(p)$ is typically different than $\gamma_2(p)$. 
Indeed, 
$\gamma_2(p)$ is player~$2$'s payoff when player~$1$ makes a choice,
while 
since the equilibrium is symmetric, $\gamma_1(p)$ is player~$2$'s payoff when she (player~$2$) makes a choice.}
of the two players when the prior is $p_0=p$. 
Note that 
\begin{equation}\label{R1}\gamma_1(p)\geq \delta \gamma_2(p),\mbox{ with equality if }f(p)<1.\end{equation} 
Indeed, since $f$ is symmetric,
the RHS $\delta\gamma_2(p)$ is the expected payoff of player 1 when deviating to $S$ in the first period and playing according to $f$ afterwards, so the inequality (\ref{R1}) follows from the equilibrium property.

We first claim that $f(p)>0$ if and only if $p>p^*$. Indeed, 
if $f(p_0)=0$ for some $p_0 > p^*$, then no player ever experiments when the prior is $p_0$, in contradiction with \textbf{P1}. Note next that if $f(p_0)>0$ for some $p_0 < p^*$, then $N_{\mathrm{e}}\geq 1$ when the prior is $p_0$, again in contradiction with \textbf{P1}.
It remains to show that $f(p^*)=0$. 
This follows from (\ref{R1}) since
(a) if $f(p^*)>0$, one has $\gamma_1(p^*)=0$,
and
(b) $\gamma_2(p^*)>0$.
Indeed, the equality $\gamma_1(p^*)=0$ holds since the overall payoff of player 1 when experimenting at $p^*$ is zero; $\gamma_2(p^*)>0$ since player 2 benefits from the fact player 1 experiments with positive probability.

We now prove the uniqueness claim by induction. 
Assume that for some $n\geq 0$, $f(p)$ is uniquely defined for every $p \in [0,\phi^{-n}(p^*)]$,\footnote{This holds for $n=0$.} and let 
 $p_0= p \in [\phi^{-n}(p^*),\phi^{-(n+1)}(p^*)]$.

The continuation payoff of player 1 when choosing $R$ in the first period is $\delta \gamma_2(\phi(p))$
if unsuccessful, 
and $g$ if successful. Since $f(p)>0$ and $\gamma_2(\phi(p))$ is uniquely defined,  the equilibrium payoff of player 1 is uniquely defined and is given by
\begin{equation}
\label{equ:900}
\gamma_1(p) 
= (1-\delta)\bigl( p\lambda m -c \bigr) +\delta \left( p\lambda  g+ (1-p\lambda) \gamma_2(\phi(p))\right).
\end{equation}

After player 1's first move, player 2 is in the position of the first player. Hence, denoting $\widetilde p$ the (random) belief of  player 2 after the first move of player 1, one has 
\begin{equation}
\label{equ:901}
\gamma_2(p) = \E\left[\gamma_1(\widetilde p)\right]=  f(p) \left( p \lambda g + (1-p\lambda) \gamma_1(\phi(p))\right) + (1-f(p))\gamma_1(p).
\end{equation} 

We claim that there is at most one equilibrium\footnote{Starting from $p_0=p$.} such that $f(p)<1$. Assume indeed that $f(p)<1$.  Then player 1 is indifferent at $p$ between $S$ and $R$, so that $\gamma_1(p)= \delta \gamma_2(p)<\gamma_2(p)$. Therefore, the right-hand side of (\ref{equ:901}) is (i) not constant in $f(p)$, since this would imply $\gamma_1(p)=\gamma_2(p)$, and (ii) increasing in $f(p)$, since $\gamma_2(p)> \gamma_1(p)$. 
This implies that there is \emph{at most} one value of $f(p)\in (0,1)$ such that both (\ref{equ:900}) and (\ref{equ:901}) hold, which proves our claim. 

In addition, we claim that if such a value $f(p)$ exists, then there cannot be another equilibrium such that $f(p)= 1$. Assume such an additional equilibrium exists, with equilibrium payoffs $\widetilde \gamma_1(p),\widetilde \gamma_2(p)$. 
Eq.~(\ref{equ:900}) still holds, hence $\widetilde \gamma_1(p)= \gamma_1(p)$. 
In addition, one must have $\widetilde \gamma_2(p)> \gamma_2(p)$, since (\ref{equ:901}) is increasing in $f$. 
Since $\gamma_1(p)= \delta \gamma_2(p)$, this implies $\widetilde \gamma_1(p) < \delta \widetilde \gamma_2(p)$: in the additional equilibrium, player 1 is better off deviating to the safe arm -- a contradiction. This concludes the proof of the uniqueness claim. 

\section{Proof for private outcomes}

\subsection{ Existence of a pure sequential equilibrium}\label{sec proof existence}

In this section, we formalize the sketch provided in Section \ref{sec pure SE}.

\subsubsection{Strategies}\label{sec strategies}

Fix $p_0\geq p^*$, and let $N^*:=\inf\{n \geq 0: \phi^n(p_0)< p^*\}$. 
Given $r \geq 0$, we set $h^*_r:=(RS)^r\cdot (RR)^{N^*-r}$, and we denote by $\gamma_r$ the expected payoff of player 2 in the scenario in which (i) both players follow $h^*_r$ in the first $N^*$ periods, (ii) in period $N^*+1$ player 1 chooses $S$ if he was not successful (and $R$ if successful), and (iii) player 2 then repeats forever the latter choice of player 1.\footnote{Unless player 2 was successful in the first $N^*$ periods.}
\smallskip

 The definition of $\sigma$ at a given history $h$ depends on the belief held by the player active at $h$. Since these beliefs will only depend on earlier play, there is no circularity in the definition.
\medskip

We set $\sigma(\emptyset)= R$, and first let $h$ be any public history that starts with $R$. If player 1 is active at $h$, we set $\sigma(h)= R$ if $p^1(h)\geq p^*$, and set $\sigma(h)= S$ otherwise. 
If player 2 is active at $h$, we set $\sigma(h)=S$ if either (i) $p^2(h)< p^*$ or if (ii) $h= (RS)^r\!\cdot\! R$ for some $r<N^*$ such that $\gamma_r < \max_{\tilde r\geq r} \gamma_{\tilde r}$. We set $\sigma(h)= R$ if neither (i) nor (ii) holds.
\medskip

Beliefs $p^i(h)$ are uniquely defined by Bayes rule at any $h$ such that $\prob_\sigma(h)>0$. Assuming players follow $\sigma$, the play unfolds as follows:
\begin{itemize}
\item The first $N^*$ periods form an experimentation phase. During this phase, the players follow the sequence $h^*_{r_0}$ of choices, where 
$r_0$ is the minimal integer that satisfies
$\gamma_{r_0}= \max_{r\in \llbracket 0,N^*\rrbracket}\gamma_r$. Player 1 experiments in every single period, and player 2 starts experimenting after some delay, until period $N^*$ included. 
The sequence of choices of a player conveys no information to the other,
and one's belief only incorporates one's failures (or successes).
\item In period $N^*+1$, players exchange information. At that point, player 1 holds the belief $p^1(h^*_{r_0}) = \phi^{N^*}(p_0)<p^*$ if unsuccessful, and chooses $S$. 
Thus, player 1's choice in period $N^*+1$ discloses the outcomes of his earlier experiments, and player 2 updates his belief to either $p^2(h^*_{r_0}\!\cdot\! R)=1$ or to $p^1(h^*_{r_0}\!\cdot\! S)= \phi^{2N^*-r_0}(p_0)<p^*$ (unless player 2 was successful). This implies that the continuation sequence of choices beyond period $N^*$, following $h^*_{r_0}$, is either $(RR)^\infty$ if player 1 was successful, $(SS)^\infty$ if no player was successful, or $(SR)\cdot(RR) ^\infty$ if player 2 was successful, but player 1 was not. The latter case is possible only if $r_0<N^*$.
\end{itemize}

\medskip

Finally, we define $\sigma$ at \emph{off-path} histories that start with $S$.
\begin{description}
    \item[Case 1] If $\gamma^1(\sigma)< \delta \gamma^2(\sigma)$, we set $\sigma(S)= S$, $\sigma(SR\!\cdot\! h')= \sigma(Rh')$, and $\sigma(SS\!\cdot\! h')= \sigma(h')$, for each $h'\in H$.
    \item[Case 2] If $\gamma^1(\sigma)\geq  \delta \gamma^2(\sigma)$, we set $\sigma(Sh')= \sigma(h')$ for each $h'\in H$.
\end{description}

\subsubsection{Beliefs}

  Since beliefs are not assumed to be reasonable, they are not uniquely pinned down by the definition of $\sigma$ at earlier nodes. 
    The definition of the beliefs $(p^1,p^2)$ relies on two core ideas:
\begin{itemize}
    \item Once a public history $h$ is reached where one player $i$ is convinced that the other was successful,   that is $p^i(h)=1$, then the belief of $i$ stays equal to 1 thereafter, while the beliefs of player $j$ are updated on the basis of $j$'s experiments, irrespective of $i$'s choices (which are therefore viewed as non-informative).
    \item When both beliefs $p^1(h)$ and $p^2(h)$ are below 1, they are updated using $\sigma$ as long as choices are  consistent with $\sigma$. 
Whenever 
the choice of
some player $i$ 
is inconsistent with
$\sigma$, the other player updates his belief to $p^j=1$  if $i$ had experimented at least once before deviating, and does not update his belief if $i$ had never experimented before deviating. 
\end{itemize}

Formally, we define beliefs inductively. 
Let $h\in H$ be arbitrary and assume that $p^1(h)$ and $p^2(h)$ have been defined. Let $i$ be the player active at $h$.

Player $i$'s belief after choosing $a$ at $h$ is set to (i) $p^i(ha)=p^i(h)$ if $a=S$  and (ii) $p^i(hR)=\phi(p^i(h))$.

Player $j$'s belief at $ha$ is defined as follows, as a function of $a$ and of the beliefs at $h$:
\begin{description}
    \item[Case 1:] $p^j(h)=1$. We set $p^j(ha)=1$ for each $a$.
    \item[Case 2:] $p^j(h)<1$ and $p^i(h)=1$. We set $p^j(ha)= p^j(h)$.
    \item[Case 3:] $p^i(h),p^j(h)<1$ and $a=\sigma(h)$. Then $p^j(ha)$ is uniquely pinned down from $p^j(h)$ by Bayes rule.
    \item[Case 4:]  $p^i(h),p^j(h)<1$ and $a\neq \sigma(h)$. 
We set $p^j(ha)=p^j(h)$ if $n^i_{\mathrm e}(h)=0$, and $p^j(ha)=1$ if $n^i_{\mathrm e}(h)\geq 1$.
\end{description}

\subsubsection{Sequential rationality}

We prove that for each history, the player active at that history has no profitable one-step deviation.

The construction of $\sigma$ in case player 1 deviates to $S$ in period 1 ensures that this deviation is not profitable. In addition, the sequential rationality of $\sigma$ at histories that start with $S$ will follow from the sequential rationality of $\sigma$ at histories that start with $R$.

Let then $h\in H$ be a non-empty history that starts with $R$, with active player $i$. 

If $p^i(h)=1$, the dominant choice is $R= \sigma(h)$. We assume below that $p^i(h)<1$. 

\smallskip

If $p^j(h)=1$, $i$ expects that future choices of $j$ will be non-informative, so that $i$ is effectively facing the one-player decision problem.\footnote{Sequential rationality dictates that $j$ will choose $R$. But we also need to check that $i$'s choice are rational after any sequence of choices.} Hence, it is optimal to choose $R$ if $p^i(h)\geq p^*$ and to choose $S$ if $p^i(h)\leq p^*$. If the active player is $i=1$, this shows that $i$'s choice at $h$ is sequentially rational. If the active player is $i=2$, the reasoning is slightly different. 
It follows from the construction of $\sigma$ that the only histories at which player 2 chooses $\sigma(h)=S$ while holding a belief $p^2(h)\in [p^*,1)$ are histories  of the form $(RS)^k\!\cdot\! R$ for some $k$. But these histories do not fall in that case since $p^1(h)<1$.

\smallskip

Assume finally that $p^i(h),p^j(h)<1$.

\paragraph{Case 1: $|h|< 2N^*$ and even.}

In that case, $h$ is a (strict) prefix of $h^*_r$ for some $r$. 
Choosing $R= \sigma(h)$ will 
induce $h^*_r$, at which point player 1  will learn the experiment outcomes of player 2.
Choosing $S$ induces player 2 to repeat $R$ forever since $p^2(hSh')=1$ for each $h'$. Since $p^1(h)\geq p^*$, the choice at $h$ of the risky arm is sequentially rational at $h$.

\paragraph{Case 2: $|h|< 2N^*$ and odd.}

In that case, $h$ is a (strict) prefix of $h^*_r$ for some $r$. That is, $h$ is either of the form $(RS)^k\!\cdot\! R$ for some $k< N^*$, or $h= (RS)^k\!\cdot\!
(RR)^n\cdot R$ for some $k,n$ with $n>0$.
If $h=(RS)^k\!\cdot\! R$, the overall payoff of player 2 is $\gamma_k$ if he chooses $R$, and is $\max_{\tilde r> k}\gamma_{\tilde r}$ if he chooses $S$. Hence the choice $\sigma(h)$ of player 2 is sequentially rational. 

If $h= (RS)^k\!\cdot\! (RR)^n\!\cdot\! R$, with $n>0$, choosing $S$ induces player 1 to repeat $R$ forever, since player 1 will update his belief to $p^1(h\!\cdot\! S)=1$. Since $p^2(h)\geq p^*$, the corresponding continuation payoff is at most the one-player optimum. On the other hand, choosing $R= \sigma(h)$ yields a continuation payoff that strictly exceeds the one-player optimum.

\paragraph{Case 3: $|h|= 2N^*$.}

Since $p^1(h),p^2(h)<1$, one has $h= h^*_r$ for some $r\in \llbracket 0,N^*\rrbracket$. 

The continuation sequence of choices under $\sigma$ is $S^\infty$, hence the expected continuation payoff of player 1 is nonnegative.\footnote{It need not be zero, because player 2 may have been successful in the experimentation phase.}
If player 1 chooses $R$, all future choices of player 2 are $R$, so that player 1 is facing the one-player problem. Since $p^1(h)<p^*$, his expected continuation payoff when choosing $R$ is negative.

\paragraph{Case 4: $|h| =2N^*+1$.}

Since $p^1(h),p^2(h)<1$, one has $h= h^*_r \!\cdot\! S$ for some $r\in \llbracket 0,N^*\rrbracket$. The continuation sequence of choices under $\sigma$ is $S^\infty$, hence the expected continuation payoff of player 2 is zero. If player 2 chooses $R$, all future choices of player 1 are noninformative: if $n^2_{\mathrm e}(h)\geq 1$ all future choices of player 1 will be $R$, while if $n^2_{\mathrm e}(h)=0$, future choices of player 1 will be $S$ until the next period in which player 2 chooses $R$ again (if any); Hence player 2 expects to be facing the one-player problem. Since $p^2(h)= \phi^{2N^*-r}(p_0)<p^*$, his expected continuation payoff when choosing $R$ is negative.

\paragraph{Case 5: $|h|> 2N^*$ is even.}

Note first that $p^1(h)\leq \phi^{N^*}(p_0)<p^*$, hence $\sigma(h)=S$. If instead player 1 chooses $R$, he expects all future choices of player 2 to be $R$, and therefore to be facing the one-player problem. Since $p^1(h)<p^*$, $\sigma(h)$ is sequentially rational at $h$.

\paragraph{Case 6: $|h|> 2N^*  + 2$ is odd.}

As in \textbf{Case 5}, we note 
that $p^2(h)\leq \phi^{N^*}(p_0)<p^*$, hence $\sigma(h)=S$. If instead player 2 chooses $R$, he expects all future choices of player 1 to be non-informative for the same reason as in \textbf{Case 4}. Therefore, player 2 expects to be facing the one-player problem. Since $p^2(h)<p^*$, $\sigma(h)$ is sequentially rational at $h$.

\subsubsection{Belief consistency}

To check the consistency of the beliefs with $\sigma$, we need to construct a sequence $(\tau_n)$ of completely mixed strategy profiles which converges to $\sigma$ (in the product topology) and such that the belief systems $(p_n)$ deduced from $\tau_n$ converge to $p$ (in the product topology). In particular, we need to allow for strategies that choose $S$ after a successful experiment. 

For clarity, we use the letter $\tau$ to denote strategies that sometimes play the safe arm after being successful, and keep the letter $\sigma$ for those strategies that repeat  the risky arm after a success. 

We proceed in two steps. 
For every $n \in \dN$
we define a 
strategy profile $\widetilde \tau_n= (\widetilde \tau^1_n,\widetilde \tau^2_n)$ as follows. Let $h$ be a public history with active player $i$, and $h^i$ be any private history of $i$ that is consistent with $h$. 
\footnote{In addition to the sequence of choices $h$, $h^i$ specifies  the outcomes of $i$'s experiments along $h$.}
\begin{itemize}
    \item If $p^i(h)<1$, we let $\widetilde \tau^i_n(h^i)$ assign probability $\frac{1}{n}$ to the arm $a\neq \sigma(h)$ if either $i$ was successful along $h^i$ or if $n^i_{\mathrm{e}}(h)=0$, and set $\widetilde \tau^i_n(h^i)= \sigma(h)$ otherwise.  Thus, player $i$ trembles if and only if he was successful in the past or never experimented.
    \item If $p^i(h)=1$, we let $\widetilde \tau^i_n(h^i)$ assign probability $1- \frac{1}{n}$ to the arm $\sigma(h)= R$ and probability $\frac{1}{n}$ to the other arm, irrespective of whether $i$ was successful along $h$.\footnote{That is, irrespective of the private history $h^i$ consistent with 
$h$.} Thus, player $i$'s trembles are non-informative.

\end{itemize}

In particular, at each public history, conditional on the information of the non-active player,  both arms are selected with positive probability by the active player. 
This implies that
all public histories occur with positive probability under $\widetilde \tau_n$, hence the beliefs $p^i_n(h)$ induced by $\widetilde \tau_n$ are uniquely defined 
by 
Bayes rule. 

One can verify that $\lim_{n\to +\infty}p^i_n(h)= p^i(h)$ for each $i$ and $h\in H$.

Given $n \in \dN$, 
we  let $(\tau_n^m)_{m \in \dN}$ be an arbitrary sequence of fully mixed strategies such that $\lim_{m \to \infty} \tau^m_n(h^i)= \widetilde \tau_n(h^i)$ for each $i$ and each private history $h^i$ of $i$. This implies that the beliefs induced by $(\tau^m_n)_{m \in \dN}$ converge to $p^i_n(h)$ as $m\to +\infty$. Using a diagonal extraction argument, this implies the existence of a sequence of fully mixed strategy profiles $(\tau_n)$,\footnote{With $\tau_n= \tau^{m_n}_n$ for some $m_n$.} such that the beliefs induced by $(\tau_n)$ converge to $p^i(h)$.

\subsection{Proof of Proposition~\ref{prop comparison p/e}}
\label{appendix prop 5}\label{section:prop:compare}

We follow the brief explanation in Section~\ref{section:simple:pure}. 
In the main text, we argued that $\sigma_0$ is a Nash equilibrium (NE) if $n=0$. Suppose then that $p_0 \geq p^*$ and $n \geq 1$.
Given  $n\geq 0$, we denote by  $\sigma_n$ the pure strategy profile that follows the sequence $h^*_\infty(n):=(RR)^{N^*+n}\!\cdot\!S^\infty$ as long as the past sequence of choices is consistent with $h^*_\infty(n)$, and that chooses $R$ otherwise. 
That is, $\sigma_n(h)=S$ if and only if  $h$ is a prefix of $h^*_\infty(n)$  of length $|h|\geq  2(N^*+n)$. 
In Lemma \ref{lemm cns}, we identify a necessary and sufficient condition on $n$ under which $\sigma_n$ is a Nash equilibrium. We then exploit this condition to prove Proposition \ref{prop comparison p/e}.

\medskip
\paragraph{Preparations.}
We first argue that there is  no profitable (unilateral) deviation that agrees on-path with $\sigma$ in the first $N^*+n$ periods. That is, given any on-path $h$ of length  $|h|\geq 2(N^*+n)$, the  player active at $h$ cannot profitably deviate in the continuation game. To show this, it suffices to consider three specific histories.

\paragraph{Case 1: $h= (RR)^{N^*+n}$.} 

If player 1 makes the equilibrium choice $S= \sigma(h)$, player 2's next choice truthfully reveals whether player 2 was successful in the first the $N^*+n$ periods. From that point on, there is no informational spillover from player 2. Indeed, following the history $h\!\cdot\!SS$, player 2 chooses $S$ as long as player 1 chooses the safe arm, and switches forever to $R$ in case player 1 experiments. Since $p^1(h\!\cdot\!SS)<p^*$, the best continuation strategy for player 1 in case he plays $S$ at $h$, is $\sigma$. On the other hand, if player 1 deviates to $R$ at $h$,
player 2 plays $R$ forever, irrespective of player 1's later choices. Since  $p^1(h)< p^*$, such a deviation is not profitable.

\paragraph{Case 2: $h= (RR)^{N^*+n}\!\cdot\!R$.} 

The history $h$ occurs if and only if player 1 is successful in the first $N^*+n$ periods. This implies that $p^2(h)=1$ and that player 2 expects player 1 to choose $R$ in all subsequent periods. The unique best-reply of player 2 is to follow $\sigma$ and choose the risky arm as well.

\paragraph{Case 3: $h= (RR)^{N^*+n}\!\cdot\!S$.} 

At $h$, player 2 assigns probability zero to the event that player 1 was successful. As in \textbf{Case 1},  player 2 expects player 1 to repeat $S$ as long as player 2 chooses $S$, and to  switch forever to $R$ in case player 2 experiments. Irrespective of player 2's continuation strategy, there is no informational spillover from player 1. Since $p^2(h)<p^*$, the claim holds in that case as well.\medskip

We can thus focus on on-path deviations that take place up to the period $N^*+n$.
If the active player deviates from $\sigma$ in one of these periods -- choosing $S$ rather than $R$ -- the other player sticks to the risky arm forever, hence the deviating player never infers any information on the outcomes of the non-deviating player's experiments.

The optimal 
deviation to $S$ among the first $N^*+n$ stages
consists in experimenting exactly $N^*$ periods until the deviator's belief falls below $p^*$, and then switching forever to the safe arm. 
Since $n\geq 1$,  this deviation coincides with $\sigma$ in the first $N^*$ periods and yields a continuation payoff of zero.
Hence,
$\sigma_n$ is a Nash equilibrium  if and only if the continuation payoff induced by $\sigma_n$ after $N^*$ (unsuccessful) experiments is non-negative for both players. 

We exploit this observation to prove the next result.

\begin{lemma}\label{lemm cns}
Denote $\overline p^*= \phi^{N^*}(p_0)\in (\phi(p^*),p^*)$ the belief of the players after $N^*$ periods.  
The strategy profile $\sigma_n$ is a Nash equilibrium if and only if
    \begin{equation}\label{eq cns}(1-\delta^n)\left(\overline p^*\lambda m -c\right)+\delta^n g \overline p^*\left\{\left(1-(1-\lambda)^n\right) +(1-\lambda)^n\left(1-\left(1-\lambda\right)^{n+N^*}\right)\delta \right\}\geq 0.\end{equation}
\end{lemma}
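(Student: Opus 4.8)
The plan is to verify the Nash-equilibrium property of $\sigma_n$ by exhibiting a single "worst" profitable deviation and checking exactly when it fails to be profitable. The preparatory discussion already establishes the crucial reduction: any deviation that agrees with $\sigma_n$ on the first $N^*+n$ periods is not profitable (handled via the three cases $h=(RR)^{N^*+n}$, $h=(RR)^{N^*+n}\!\cdot\!R$, $h=(RR)^{N^*+n}\!\cdot\!S$), so it suffices to consider deviations occurring within the first $N^*+n$ periods. Since an observable deviation (switching to $S$ prematurely) triggers the other player to play $R$ forever, the deviator is then effectively alone and learns nothing further from the opponent; the best such deviation is the one-player optimal policy, which experiments exactly $N^*$ periods and then switches to $S$ forever, giving a continuation payoff of $0$ (computed from the belief $p_0$, since $N^*$ is the one-player cutoff count). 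Because $n\geq 1$, this best deviation agrees with $\sigma_n$ for the first $N^*$ periods. Hence $\sigma_n$ is a Nash equilibrium if and only if, after $N^*$ unsuccessful experiments (belief $\overline p^*=\phi^{N^*}(p_0)$), the continuation payoff prescribed by $\sigma_n$ — namely, experimenting $n$ more periods and then stopping, while the other player symmetrically experiments $n$ more periods before "revealing" — is nonnegative for each player.

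The remaining task is simply to write down that continuation payoff and check it equals the left-hand side of \eqref{eq cns}. First I would fix the player, say player 1, at the on-path history after $N^*$ failed experiments, relabel periods to start at $0$, and compute. In each of the next $n$ periods player 1 experiments, incurring the flow reward $(1-\delta)(\overline p^*\lambda m - c)$, which summed over $n$ periods with the discounting gives the term $(1-\delta^n)(\overline p^*\lambda m - c)$. From period $n$ onward, the continuation value is $g$ per period (i.e.\ a lump $\delta^n\cdot g\,\overline p^*\cdot(\text{success probability})$, with the convention that a success yields the per-period flow $g$ forever starting the following period — matching the bookkeeping used throughout the paper) in the event that someone was successful. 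The event "player 1 is successful within these $n$ experiments" has probability $\overline p^*(1-(1-\lambda)^n)$ conditional on nothing, contributing $\delta^n g\,\overline p^*(1-(1-\lambda)^n)$. In the complementary event that player 1's $n$ experiments all fail (probability factor $(1-\lambda)^n$) but player 2 was successful in \emph{his} $n$ experiments, player 1 only learns this one period later (when player 2 repeats $R$), so this contributes $\delta^{n+1}g\,\overline p^*(1-\lambda)^n(1-(1-\lambda)^{n+N^*})$ — here the probability that player 2 had at least one success among all $n+N^*$ of his experiments (the $N^*$ in the experimentation phase plus the $n$ extra ones), conditional on $\theta=G$, is $1-(1-\lambda)^{n+N^*}$, and conditioning on $\theta=G$ is already absorbed in the leading $\overline p^*$. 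Collecting the three contributions yields exactly the expression in \eqref{eq cns}, and the equilibrium condition is that this be $\geq 0$. By symmetry between the two players in the scenario (both experiment $N^*+n$ times along $(RR)^{N^*+n}$), the same inequality governs player 2, so no separate computation is needed.

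I would then note the stated range $\overline p^*\in(\phi(p^*),p^*)$: the upper bound $\overline p^*<p^*$ is the definition of $N^*$ (the belief has dropped below $p^*$ after $N^*$ failures), and the lower bound $\phi(p^*)<\overline p^*$ holds because $N^*$ is the \emph{first} time the belief falls below $p^*$, so $\phi^{N^*-1}(p_0)\geq p^*$ and hence $\overline p^*=\phi(\phi^{N^*-1}(p_0))\geq \phi(p^*)$; the genericity assumption $\phi^k(p_0)\neq p^*$ makes this strict. This parenthetical fixes $\overline p^*$ inside the relevant interval and is what later lets one take limits (e.g.\ $\overline p^*/p^*\to 1/e$ type statements in Proposition~\ref{prop comparison p/e}).

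\textbf{Main obstacle.} There is no deep obstacle; the argument is essentially bookkeeping. The one place to be careful is the timing convention in the last term of \eqref{eq cns}: one must justify the extra factor of $\delta$ (the one-period delay before player 1 infers player 2's success, since it is player 2's \emph{repetition} of $R$ after the disclosure period that conveys the good news), and one must correctly identify that player 2's relevant success window has length $n+N^*$, not $n$ — because under $\sigma_n$ player 2 also experimented throughout the first $N^*$ periods, and any of those successes would equally cause player 2 to stick to $R$. Getting these two indices right ($\delta^{n+1}$ and $(1-\lambda)^{n+N^*}$) is the only subtle point; everything else is routine geometric-series discounting and an appeal to the already-established reduction to on-path deviations within the first $N^*+n$ periods.
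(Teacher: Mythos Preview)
Your approach and your computation of player~1's continuation payoff $CP_1$ are essentially the same as the paper's, and correct. There is, however, a genuine (if easily repaired) gap in your treatment of player~2.

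You write that ``by symmetry between the two players\ldots the same inequality governs player~2.'' This is false: the alternating-move structure breaks the symmetry at the disclosure stage. After $(RR)^{N^*+n}$, player~1 moves first in period $N^*+n+1$; hence player~2 observes player~1's choice \emph{before} acting in that same period, whereas player~1 must wait until the next period to observe player~2's choice. Consequently, player~2's continuation payoff at $(RR)^{N^*}\!\cdot\! R$ is
\[
CP_2=(1-\delta^n)(\overline p^*\lambda m -c)+\delta^n g\,\overline p^*\Bigl\{\bigl(1-(1-\lambda)^n\bigr)+(1-\lambda)^n\bigl(1-(1-\lambda)^{n+N^*}\bigr)\Bigr\},
\]
which differs from $CP_1$ precisely in that the last term carries no extra factor of $\delta$. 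Since $\delta<1$, one has $CP_2\geq CP_1$, so player~1's constraint is the binding one and the iff in the lemma reduces to $CP_1\geq 0$. The paper makes exactly this argument. Your symmetry claim gets the right conclusion but for the wrong reason; you should replace it by the one-line observation $CP_2\geq CP_1$.
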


\begin{proof}
Assume that $n\geq 1$, and denote $\ell:=N^*+n$.
Consider player 1 first. 
Conditional on the public history $(RR)^{N^*}$, the expected flow payoff in each stage $N^*+1, \ldots, \ell$ is $\overline p^*\lambda m -c$. The continuation payoff from stage $\ell+1$ is equal to (i) $g$, if player 1 was successful, (ii) $\delta g$, if only player 2 was successful,
and (iii) 0, if both players were unsuccessful.

Hence, the expected continuation payoff of player 1 after $(RR)^{N^*}$ is 
\begin{equation}
\label{equ:cplayer 1}
CP_1:=(1-\delta^n)\left(\overline p^*\lambda m -c\right)+\delta^n g \overline p^*\left\{\left(1-(1-\lambda)^n\right) +(1-\lambda)^n\left(1-\left(1-\lambda\right)^{n+N^*}\right)\delta \right\}.
\end{equation}
%
Similarly,  the expected continuation payoff of player 2 after $h= (RR)^{N^*}\!\!\cdot\! R$ is
\[CP_2 : =(1-\delta^n)\left(\overline p^*\lambda m -c\right)+\delta^n g \overline p^*\left\{\left(1-(1-\lambda)^n\right) +(1-\lambda)^n\left(1-\left(1-\lambda\right)^{n+N^*}\right) \right\}.\]

There is a slight difference between $CP_1$ and $CP_2$ owing to the fact that player 2 learns the outcome of the $\ell$ experiments of player 1 \emph{before} playing in stage $\ell+1$, while player 1 learns whether player 2 was successful only \emph{after} playing in stage $\ell+1$.
Since $\delta < 1$, we have $CP_2\geq CP_1$. Consequently, $\sigma$ is a Nash equilibrium if and only if $CP_1\geq 0$. This is (\ref{eq cns}).
\end{proof}\\

\begin{corollary} \label{corJ}
Fix $\lambda$, $n\geq 0$, and $n^*\geq 0$ such that $(n+1)\lambda+(1-\lambda)^{2n+n^*+2}<1.$ If   $\delta$ is large enough, then $\sigma_n$ is a Nash equilibrium for all $p_0\geq \phi^{-n^*}(p^*)$.
 \end{corollary}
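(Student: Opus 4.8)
The plan is to read the statement off Lemma~\ref{lemm cns}, which asserts that $\sigma_n$ is a Nash equilibrium if and only if inequality \eqref{eq cns} holds. For $n=0$ there is nothing to do: $\sigma_0$ is a Nash equilibrium for every $p_0\ge p^*$ (as recorded in the main text), and since $\phi^{-n^*}(p^*)\ge p^*$ the hypothesis range is contained in $\{p_0\ge p^*\}$, while the stated inequality $(0+1)\lambda+(1-\lambda)^{n^*+2}<1$ is automatic. So I would assume $n\ge 1$ and prove that \eqref{eq cns} holds once $\delta$ is close enough to $1$. First I would rewrite its left-hand side: using $\lambda m=c+g$, write $\overline p^*\lambda m-c=g\overline p^*-c(1-\overline p^*)$ and divide through by $g\overline p^*>0$, so that, with $q:=1-\lambda$, \eqref{eq cns} is equivalent to
\[
(1-\delta^n)\Bigl(1-\tfrac{c(1-\overline p^*)}{g\overline p^*}\Bigr)+\delta^n\bigl[(1-q^n)+q^n(1-q^{n+N^*})\delta\bigr]\ \ge\ 0 .
\]

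Next I would bound the two summands, reducing everything to expressions in $\delta$ and $\lambda$. Since $\overline p^*=\phi^{N^*}(p_0)\ge\phi(p^*)$ (because $\phi^{N^*-1}(p_0)\ge p^*$) and $x\mapsto 1-\tfrac{c(1-x)}{gx}$ is increasing, the first parenthesis is at least $1-\tfrac{c(1-\phi(p^*))}{g\phi(p^*)}$; the explicit formula \eqref{def:p*} yields $\tfrac{c(1-\phi(p^*))}{g\phi(p^*)}=\tfrac{1-\delta q}{q(1-\delta)}$, so this parenthesis is $\ge\tfrac{-\lambda}{q(1-\delta)}$. Combined with $\tfrac{1-\delta^n}{1-\delta}=1+\delta+\dots+\delta^{n-1}\le n$, the first summand is $\ge-\tfrac{n\lambda}{q}$. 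For the second summand I would use the crude estimate $(1-q^n)+q^n(1-q^{n+N^*})\delta\ge\delta\bigl(1-q^{2n+N^*}\bigr)$, hence $\delta^n[\cdots]\ge\delta^{n+1}(1-q^{2n+N^*})$. This is the only place the hypothesis on $p_0$ enters: $p_0\ge\phi^{-n^*}(p^*)$ forces $\phi^{n^*}(p_0)\ge p^*$, i.e.\ $N^*\ge n^*+1$, so $q^{2n+N^*}\le q^{2n+n^*+1}$ and the second summand is $\ge\delta^{n+1}\bigl(1-q^{2n+n^*+1}\bigr)$.

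Putting the pieces together, the divided form of \eqref{eq cns} is at least $-\tfrac{n\lambda}{q}+\delta^{n+1}\bigl(1-q^{2n+n^*+1}\bigr)$, a quantity that no longer depends on $p_0$ and that tends, as $\delta\uparrow 1$, to $-\tfrac{n\lambda}{q}+1-q^{2n+n^*+1}$. A one-line manipulation (multiply by $q=1-\lambda$ and use $(1-\lambda)q^{2n+n^*+1}=(1-\lambda)^{2n+n^*+2}$) shows this limit is strictly positive precisely when $(n+1)\lambda+(1-\lambda)^{2n+n^*+2}<1$, which is the hypothesis. Hence there is $\delta_0<1$, depending only on $\lambda,n,n^*$, such that \eqref{eq cns} holds for all $\delta\in(\delta_0,1)$ and all $p_0\ge\phi^{-n^*}(p^*_\delta)$; Lemma~\ref{lemm cns} then concludes.

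I do not expect a genuine obstacle: the argument is essentially bookkeeping. The one point that requires care is that $\overline p^*$, $p^*$ and $N^*$ all move with $\delta$ (indeed $\overline p^*,p^*\to 0$), so one must track the $(1-\delta)$-scaling; the device that makes this painless is to divide by $\overline p^*$ and use $\overline p^*\ge\phi(p^*)$, after which the residual dependence on $p_0$ collapses to the single inequality $N^*\ge n^*+1$. The other mildly delicate step is the algebraic equivalence between positivity of $1-q^{2n+n^*+1}-\tfrac{n\lambda}{q}$ and the stated condition, which I would verify directly as indicated above.
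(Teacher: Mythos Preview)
Your argument is correct and follows essentially the same route as the paper: both proofs start from Lemma~\ref{lemm cns}, replace $\overline p^*$ by its lower bound $\phi(p^*)$, replace $N^*$ by its lower bound $n^*+1$, use $1-\delta^n\le n(1-\delta)$, and then let $\delta\to 1$ to reduce everything to the inequality $(1-\lambda)\bigl(1-(1-\lambda)^{2n+n^*+1}\bigr)>n\lambda$, which is equivalent to the hypothesis. The only cosmetic difference is that you divide through by $g\overline p^*$ before bounding while the paper rearranges first, and you treat the trivial case $n=0$ explicitly.
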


\begin{proof} 
Assume $p_0\geq \phi^{-n^*}(p^*)$. 
Then $N^*\geq n^*+1$.  Since $\overline p^*>\phi(p^*)$, 
Inequality  (\ref{eq cns}) is satisfied when
 \begin{equation} \label{eqJ2} \delta^n g \phi(p^*)\left\{\left(1-(1-\lambda)^n\right) +(1-\lambda)^n\left(1-\left(1-\lambda\right)^{n+n^*+1}\right)\delta \right\}\geq c(1-\delta^n).\end{equation}

Since $1-\delta^n \leq n(1-\delta)$,
Inequality (\ref{eqJ2}) holds as soon as 
\begin{equation} \label{eqJ21} \delta^n g \phi(p^*)\left\{\left(1-(1-\lambda)^n\right) +(1-\lambda)^n\left(1-\left(1-\lambda\right)^{n+n^*+1}\right)\delta \right\}> nc(1-\delta).\end{equation}
Since 
$\phi(p^*)=\frac{c(1-\delta)(1-\lambda)}{c(1-\delta)(1-\lambda)+g(1-\delta+\lambda\delta)}$,
taking the limit of \eqref{eqJ21} as $\delta$ goes to $1$,
we obtain that if 
\begin{equation}\label{N22} (1-\lambda)\left( 1 - (1-\lambda)^{2n+n^*+1}\right) > n\lambda, \end{equation}
then \eqref{eqJ2} holds for all $\delta$ sufficiently close to 1. Finally, note that the inequality (\ref{N22}) follows from the assumption.
\end{proof}







\begin{corollary}
Denote by $p_f:= \phi^{n-1}(\overline p^*)$ the players' belief \emph{prior} to their last experiment.
Then $p_f/p^*$ is arbitrarily close to   $1/e$, 
provided $\lambda$ is small enough and $\delta$ is close enough to 1.
\end{corollary}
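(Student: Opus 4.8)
The plan is to combine Corollary~\ref{corJ} (which gives concrete parameter regimes where $\sigma_n$ is a Nash equilibrium) with an asymptotic analysis of the belief ratio $p_f/p^*$ as $\lambda\to 0$ and $\delta\to 1$. First I would recall that, by Bayes' rule, beliefs transform linearly on the likelihood-ratio scale: writing $LR(p):=p/(1-p)$, each failed experiment multiplies $LR$ by $(1-\lambda)$, so that
\[
\frac{LR(p_f)}{LR(p^*)} \;=\; (1-\lambda)^{\,n-1}\cdot\frac{LR(\overline p^*)}{LR(p^*)}\;=\;(1-\lambda)^{\,n-1}\cdot (1-\lambda)^{N^*_\delta},
\]
since $\overline p^*=\phi^{N^*_\delta}(p_0)$ and $p_f=\phi^{n-1}(\overline p^*)$. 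Thus everything reduces to controlling the exponent $n-1+N^*_\delta$, where $N^*_\delta=\inf\{k\geq 0:\phi^k(p_0)<p^*_\delta\}$.

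The key step is to choose $n$ as a function of $\lambda$ so that, on the one hand, the Nash-equilibrium condition of Corollary~\ref{corJ} holds, and, on the other hand, $(1-\lambda)^{\,n-1}$ is close to $1/e$. Concretely, given $\ep>0$, I would pick $\lambda$ small enough that $(1-\lambda)^{\lfloor 1/\lambda\rfloor - c_0}$ lies within $\ep$ of $1/e$ for a suitable small constant $c_0$ absorbing the $O(1)$ shifts, and then set $n:=\lfloor 1/\lambda\rfloor - c_0$ (with, say, $n^*$ a fixed constant). For such $\lambda$ the hypothesis $(n+1)\lambda+(1-\lambda)^{2n+n^*+2}<1$ of Corollary~\ref{corJ} is satisfied --- the first term is below $1$ by a margin bounded away from $0$ and the second term is $O(e^{-2})$, so the sum stays below $1$ --- hence $\sigma_n$ is a Nash equilibrium provided $\delta$ is close enough to $1$, for all $p_0\geq \phi^{-n^*}(p^*)$. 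At this point $(1-\lambda)^{\,n-1}$ is within $\ep$ of $1/e$ by construction.

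It remains to pass from the likelihood-ratio statement to the claimed statement about the ratio $p_f/p^*$ itself. Here I would use that, for \emph{fixed} $\lambda$ and $n$, both $p^*_\delta$ and $p_f=\phi^{n-1}\!\big(\phi^{N^*_\delta}(p_0)\big)$ tend to $0$ as $\delta\to 1$ (indeed $p^*_\delta\to 0$ by formula~\eqref{def:p*}, and $\overline p^*<p^*_\delta\to 0$). For beliefs tending to $0$, $LR(p)=p/(1-p)\sim p$, so the ratio $p_f/p^*_\delta$ converges to the ratio of likelihood ratios, namely $(1-\lambda)^{\,n-1+N^*_\delta}$. Choosing $p_0$ so that $N^*_\delta$ is held fixed while $\delta\uparrow 1$ (for instance $p_0=\phi^{-n^*}(p^*_\delta)$ gives $N^*_\delta=n^*+1$, or one can appeal to $\phi^n(p_0)\neq p^*$ genericity), we obtain $p_f/p^*_\delta\to (1-\lambda)^{\,n-1+n^*+1}$, which is within $O(\ep)$ of $1/e$ once $\lambda$ is small; letting $\ep\to 0$ concludes. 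The main obstacle is the bookkeeping of the $O(1)$ additive shifts in the exponent (the ``$-1$'', the ``$+N^*_\delta$'', the choice of $n^*$): one must make sure these vanish in the limit $\lambda\to 0$ while simultaneously not breaking the strict inequality in the Corollary~\ref{corJ} hypothesis, which is why $n$ is taken slightly below $1/\lambda$ rather than equal to it.
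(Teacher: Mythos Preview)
Your overall strategy --- pass to likelihood ratios, choose $n\approx 1/\lambda$, invoke Corollary~\ref{corJ} for the Nash property, and use $p^*_\delta\to 0$ as $\delta\to 1$ to convert $LR$-ratios into belief-ratios --- is exactly the paper's. There is, however, a slip in your first display: from $\overline p^*=\phi^{N^*}(p_0)$ one obtains $LR(\overline p^*)=(1-\lambda)^{N^*}LR(p_0)$, not $(1-\lambda)^{N^*}LR(p^*)$, so the second equality is false unless $p_0=p^*$. The error turns out to be harmless (with your choice $p_0=\phi^{-n^*}(p^*)$ the correct factor is $LR(\overline p^*)/LR(p^*)=(1-\lambda)$, not $(1-\lambda)^{n^*+1}$, but both are $1+O(\lambda)$), yet it hides a much cleaner observation that the paper uses and that makes all of your bookkeeping about $N^*$, $n^*$ and the moving $p_0$ unnecessary: by the very definition of $N^*$ one has $\overline p^*\in(\phi(p^*),p^*)$, hence $LR(\overline p^*)/LR(p^*)\in((1-\lambda),1)$ \emph{for every} $p_0$. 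With this, the whole ratio is simply $(1-\lambda)^{n-1}$ up to a factor in $((1-\lambda),1)$, and the argument reduces to controlling $(1-\lambda)^{n-1}$ alone.
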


\begin{proof}
Note that $\lim_{\lambda \to 0} (1-\lambda)^{\frac{1}{\lambda} - 4} = \frac{1}{e}$.
Let $\ep>0$ be arbitrary, and let $\lambda_0>0$ be such that 
$\displaystyle (1-\lambda)^{ \frac{1}{\lambda} -4 }< (1+\ep)\frac{1}{e}$, for each $\lambda<\lambda_0$. 
Fix $\lambda< \min(\lambda_0,\frac15)$, and set $n :=\lfloor \frac{1}{\lambda}\rfloor -2\geq 2$.  Since  $(n+1)\lambda <1$, by Corollary \ref{corJ}, for large $\delta$ there exists $\widetilde{p}_0<1$ such that $\sigma_n$ is a Nash equilibrium for  $p_0\geq \widetilde{p}_0$. 
Since $n-1\geq \frac1\lambda-4$, 
we have $(1-\lambda)^{n-1}<(1+\ep)\frac{1}{e}$. 
By definition,
$\bar p^*\leq p^*$.

Denote by $\displaystyle LR(p):= \frac{p}{1-p}$ the belief likelihood ratio as a function of the belief  $p\in (0,1)$ assigned to $\theta = G$. By Bayes rule, the beliefs $p^*$ and $p_f$ are related through 
\[\frac{LR(p_f)}{LR(p^*)}= (1-\lambda)^{n-1}\times \frac{LR(\overline p^*)}{LR(p^*)}<(1+\ep)\frac{1}{e}.\] 

For fixed $\lambda \in (0,1)$ and $n$, both $p^*$ and $p_f= \phi^{\r{n-1}}(\overline p^*)$ converge to zero as $\delta \to 1$. It follows that for $\delta$ high enough, one has
\[\frac{p_f}{p^*} < (1+\ep)\frac{1}{e} .\]
Since $N^*$ increases to $\infty$ as $p_0\to 1$,  we have  $\displaystyle \frac{N_{\mathrm{e}}}{N^*}=\frac{2(N^*+n)}{N^*}\xrightarrow[p_0\to 1]2$. 
Simple computations show that   $N^{**}-N^*$ is  bounded from above (by $1-\ln(2)/\ln(1-\lambda)$) when $\delta\to 1$, independently of $p_0$.  Hence $\displaystyle \frac{N_{\mathrm{e}}}{N^{**}}=\frac{N_{\mathrm{e}}}{N^{*}}\times\frac{N^*}{N^{**}}\xrightarrow[p_0\to 1]{}2.$\end{proof}
\color{black}

\bigskip

We turn to prove the claims of Proposition~\ref{prop comparison p/e}.
We first show that under the Nash equilibria $(\sigma_n)_{n \geq 0}$, the ratio $\frac{N_{\mathrm e}}{N^*}$ can be arbitrarily large.
Indeed,
fix $n \in \dN$.
For $\lambda>0$ small enough, $(n+1)\lambda+(1-\lambda)^{2n+2}<1$. 

In this case,
Corollary \ref{corJ} applies to   $n^*=0$.
In particular,  $\sigma_n$ is a Nash equilibrium for  $p_0=p^*$. At this equilibrium,
 $N^*=1$ and 
\[\frac{N_{\mathrm{e}}}{N^*}= \frac{2(N^*+n)}{N^*}= 2(1+n),\]
which can be  arbitrarily  large.

We next show that under the Nash equilibria $(\sigma_n)_{n \geq 0}$, $\frac{N_{\mathrm{e}}}{N^{**}}$ can be arbitrarily close to $\frac{2x_0}{\ln(2)}$, where $x_0\simeq 0.7968$ is the positive solution to $x+e^{-2x}=1$. 

Fix $\lambda>0$ small and $n \in\dN$ such that $(n+1)\lambda<x_0$. 
Applying Corollary \ref{corJ},
we obtain that $\sigma_n$ is a Nash equilibrium for  $p_0=p^*$. At this equilibrium,
\[ \frac{N_{\mathrm{e}}}{N^{**}}=\frac{2(1+n)}{N^{**}}, \]
where
$N^{**}=\inf\{l\geq 0, \phi^l(p^*)<p^{**}\}=\inf\left\{l \geq 0, (1-\lambda)^l< \frac{1-\delta+\lambda \delta}{1-\delta+\lambda \delta+\lambda \sqrt{\delta}}\right\}$.  
Hence,  for $\lambda$ small, $(1-\lambda)^{N^{**}}$ is close to $1/2$,
provided that $\delta$ is sufficiently close to $1$. 
We then have
\begin{equation}
\label{equ:J3}
\frac{N_{\mathrm{e}}}{N^{**}}=\frac{2(1+n)}{N^{**}}\sim \frac{2 (1+n)\lambda}{\ln(2)}.
\end{equation}
Since $(n+1)\lambda$ can be chosen arbitrarily close to $x_0$,  the ratio in \eqref{equ:J3} can be arbitrarily close to $\displaystyle\frac{2 x_0}{\ln(2)}\simeq 2.299$.
This concludes the proof of Proposition~\ref{prop comparison p/e}.\\

\subsection{Proof of Remark \ref{remark:6}}

We define a pure profile $\sigma$ and prove that it is a reasonable SE for a set of parameter values. The logic is simple: once some player has experimented at least once in the past, a player chooses $R$ if and only if he is convinced that the other was successful
(or, as usual, if he was successful). 
Formally:

\begin{itemize}
\item $\sigma(\emptyset)=R$: player 1 experiments in the first stage.
\item At any history of the form $h= Rh'$ with active player $i$, we set $\sigma(h)=R$ if $p^i(h)=1$, and $\sigma(h)=S$ otherwise. That is, the active player chooses $R$ if convinced that the other player was successful, and chooses $S$ otherwise. Since $p^i(h)$ is based on the definition of $\sigma$ at shorter histories, there is no circularity in this definition. 
\item The definition of $\sigma$ in the event where player 1 deviates to $S$ at the initial node  depends on the expected payoffs $\gamma^1(\sigma)$ and $\gamma^2(\sigma)$ induced by $\sigma$,
as in Section \ref{sec pure SE}: 
\begin{itemize}
    \item If  $\gamma^1(\sigma)>\delta \gamma^2(\sigma)$, players switch roles after $S$: $\sigma(Sh)=\sigma(h)$ for each $h$.
    \item If $\gamma^1(\sigma)\leq \delta \gamma^2(\sigma)$, $\sigma(S)=S$ and players resume according to $\sigma$ after $SS$: $\sigma(SS\!\cdot\!h)= \sigma(h)$ for each $h$.
\end{itemize} 
\end{itemize}

Finding a characterization of the histories $h$ such that $\sigma(h)=R$ without using beliefs is tricky, because the interpretation by $i$ of $j$'s previous choices hinges on how $j$ previously interpreted earlier choices of $i$.

\begin{proposition}\label{ex pure reasonable SE}  If $p_0\in [p^*,\min(p^*_1,\phi^{-1}(\widehat p)]$, then  $\sigma$ is a   reasonable SE.
\end{proposition}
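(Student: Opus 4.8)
The strategy is to verify the one-shot deviation principle at every public history $h$, splitting into cases by the shape of $h$ exactly as in the proof of Proposition~\ref{ex pure reasonable SE} of Section~\ref{sec: ex2} (the version with $\sigma(S^k\cdot a)=R$), since the present $\sigma$ is the ``more conservative'' profile in which players never experiment again off-path unless convinced. First I would dispose of the trivial case $p^i(h)=1$, where $R=\sigma(h)$ is strictly dominant, and of histories ending in $RSR$-type patterns which force $p^i(h)=1$ and are thus already covered. After that, I would record the two beliefs that drive everything: along an on-path-like history where nobody has been seen to succeed, the active player's belief is $\phi^{n}(p_0)$ with $n\geq n^i_{\mathrm e}(h)$ (Remark~\ref{remark pure beliefs}), and by the reasonability lemma $p^j(hS)=\phi^{n_{\mathrm e}(h)}(p_0)$.

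The case analysis then runs as follows. \textbf{(i)} $h=\emptyset$ (or more generally $h=S^k$ reached off-path after role-switching): here $\sigma$ prescribes $R$, the continuation is $R\cdot S^\infty$, and the deviation to $S$ merely delays player~1's single experiment by a period (because after $S$ play resumes as in the original game, either with roles swapped or after $SS$); since $p_0\geq p^*$, experimenting now rather than next period is optimal — this is the same computation as Case~1 in Section~\ref{sec proof of pure reasonable SE}. \textbf{(ii)} $h$ ends with $RR$ (nobody seen successful, so $p^i(h)=\phi^n(p_0)$ with $n\geq 1$, hence $p^i(h)\leq\phi(p_0)\leq p^*_1\le ?$ — more precisely $p^i(h)\le\phi(p_0)$): $\sigma(h)=S$ gives $S^\infty$; deviating to $R$ raises $j$'s belief off-path to $1$ only if $j$ has experimented, otherwise $p^j(hR)<1$ and the continuation is still $S^\infty$; either way the informational gain is at most that of the first thought experiment, and since $p^i(h)\le\phi(p_0)\le\widehat p$, the deviation is unprofitable. \textbf{(iii)} $h$ ends with $RS$ with $n_{\mathrm e}(h)\geq 1$: by reasonability $p^j(hR)=1$, so deviating to $R$ triggers $RR\cdot S^\infty$ — exactly the first thought experiment with belief $p^i(h)=\phi^{n_{\mathrm e}(h)}(p_0)\le\phi(p_0)\le\widehat p$, hence unprofitable. \textbf{(iv)} $h$ ends with $SS$ (with $n_{\mathrm e}(h)\geq 1$): $p^i(h)\le\phi(p_0)\le p^*$ and the continuation after a deviation to $R$ is still $S^\infty$ because $p^j(hR)<1$ — unprofitable. \textbf{(v)} $h$ ends with $SR$, i.e.\ the first experiment by the active player's opponent just occurred while the active player has $n^i_{\mathrm e}(h)$ experiments already disclosed: here one must check that $\sigma(hR)=?$; in fact this is the subtle case — $\sigma$ says the active player chooses $S$ unless convinced, and deviating to $R$ yields $R\cdot S^\infty$, so unprofitability needs $p^i(h)<p^*_1$, which is where the hypothesis $p_0\le p^*_1$ is used (the relevant belief being $\le p_0$). \textbf{(vi)} the boundary history $h=R$: player~2 is active with $p^2(R)=p_0\geq p^*$; choosing $S=\sigma(R)$ gives $S^\infty$ while choosing $R$ gives $R\cdot S^\infty$, so the constraint is $p_0<p^*_1$ — again exactly the hypothesis. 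Finally, the off-path definitions after player~1's initial deviation to $S$ are, by construction, a best reply: if $\gamma^1(\sigma)>\delta\gamma^2(\sigma)$ the role-swap makes $S$ yield $\delta\gamma^2(\sigma)<\gamma^1(\sigma)$, and if $\gamma^1(\sigma)\le\delta\gamma^2(\sigma)$ then $\sigma(S)=S$ with resumption after $SS$ yields $\delta\gamma^1(\sigma)\le\gamma^1(\sigma)$ — I would note this and move on.

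Throughout, I would invoke Lemma~\ref{prop A2} (or its proof technique) to justify that once both players' beliefs drop below $p^*$ off-path the continuation is indeed $S^\infty$, so that the one-shot comparisons above are against the correct continuation paths; and I would use $\phi(p_0)\le\widehat p< p^*$ and $p_0\le p^*_1$ as the only substantive inequalities, checking once that $[p^*,\min(p^*_1,\phi^{-1}(\widehat p))]$ can be nonempty. The main obstacle I anticipate is case~(v)/the bookkeeping of off-path beliefs: because the active player's interpretation of the opponent's $R$ after an $S$ depends on whether the opponent had ever experimented, one must carefully track, along arbitrary off-path histories, whether $p^j(hR)=1$ or $p^j(hR)=\phi^{(\cdot)}(p_0)$, and confirm that in every branch the relevant belief of the active player is bounded by $p_0$ (for $p^*_1$-type comparisons) or by $\phi(p_0)$ (for $\widehat p$-type comparisons); this is exactly the ``no circularity, but tedious'' point flagged in Section~\ref{sec: ex2}, and handling it cleanly — rather than the individual payoff inequalities, which are all one-line algebra reducing to the definitions of $p^*$, $p^*_1$, and $\widehat p$ — is where the real work lies.
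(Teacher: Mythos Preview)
Your plan is sound and matches the paper's approach: verify the one-shot deviation principle by case analysis on the history, using $p^*\leq p_0\leq p^*_1$ for the node $h=R$ and $\phi(p_0)\leq\widehat p$ for all later ones. Two points deserve sharpening.

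First, your appeal to Lemma~\ref{prop A2} to justify the $S^\infty$ continuation is circular: that lemma \emph{assumes} $\sigma$ is already a reasonable SE. You do not need it. For this profile, $\sigma(h)=S$ at every history $h=R\bar h$ with $p^i(h)<1$ \emph{by definition}, so once both beliefs are below~$1$ the continuation is $S^\infty$ directly. Relatedly, your pointer to ``Case~1 in Section~\ref{sec proof of pure reasonable SE}'' is for the \emph{other} profile (where player~2 also experiments on-path); the correct argument for $h=\emptyset$ is precisely the $\gamma^1(\sigma)$ vs.\ $\delta\gamma^2(\sigma)$ comparison you record at the end, which is built into the role-switch construction.

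Second, the paper's decomposition is coarser than yours and avoids your case~(v) entirely. Instead of casing on the last two moves, it distinguishes only $h=\emptyset$, $h=S$, and $h=R\bar h$. For the last, once $|h|>1$ and $p^i(h)<1$, one shows in one stroke that $p^i(h)\leq\phi(p_0)$: for $i=1$ because $n^1_{\mathrm e}(h)\geq 1$; for $i=2$ because $p^2(h)<1$ forces player~1 to have played $S$ at least once along $h$ (otherwise player~1's repeated $R$'s would have driven $p^2$ to~$1$), and beliefs are monotone short of jumping to~$1$ by Remark~\ref{remark pure beliefs}. This single inequality $p^i(h)\leq\phi(p_0)\leq\widehat p$ disposes of all of your cases~(ii)--(v) simultaneously: whichever continuation the deviation to~$R$ triggers ($RR\cdot S^\infty$ if $p^j(hR)=1$, or $RS\cdot S^\infty$ otherwise), the first thought experiment bounds the gain and $\widehat p$ is the operative cut-off. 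Your worry in case~(v) that $p^*_1$ might be needed at such histories does not materialise; $p^*_1$ enters only at $h=R$, exactly as you note in~(vi).
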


\begin{proof}
We  check  that for each $h\in H$, the active player  at $h$ has no profitable one-step deviation. Given the symmetries in the construction, this follows from the three cases discussed below.

\paragraph{Case 1: $h=\emptyset$.}
If player 1 deviates to $S$ in the first stage, his expected payoff is equal to $\delta \gamma^1(\sigma)$ if $\gamma^1(\sigma)\leq \delta \gamma^2(\sigma)$,
and is equal to $\delta \gamma^2(\sigma)$ if $\gamma^1(\sigma)>  \delta \gamma^2(\sigma)$. In both cases, it is lower than $\gamma^1(\sigma)$.

\paragraph{Case 2: $h=S$.}
Assume first that  $\gamma^1(\sigma)\leq  \delta \gamma^2(\sigma)$. Then $\sigma(S)= S$ and player 2's payoff is equal to $\gamma^1(\sigma)$ if he deviates to $R$, and to $\delta \gamma^2(\sigma)$ if he chooses $S= \sigma(S)$.
Assume next that  $\gamma^1(\sigma)> \delta \gamma^2(\sigma)$. Then $\sigma(h)=R$ and player 2's payoff is equal to  $\delta \gamma ^2(\sigma)$ if he deviates to $S$, and to $\gamma^1(\sigma)$ if he chooses $R= \sigma(S)$. 
In both cases, player 2 is better off not deviating.

\paragraph{Case 3: $h= R\bar h$, for some $\bar h\in H$.}

 Assume first that $h=R$. The profile $\sigma$ induces the sequence $S^\infty$  from the second stage on,  irrespective of the choice of player 2 at $h$. That is, player 2 anticipates that he will learn in stage 2 the outcome of player 1's experiment. The assumption $p_0<p^*_1$ implies that the payoff when choosing $S=\sigma(h)$ is higher.
 
Assume next that $|h|>1$, and let $i$ be the player active at $h$. If $p^i(h)=1$, it is optimal to choose $R=\sigma(h)$.

We claim that otherwise, $p^i(R\bar h)\leq \phi(p_0)$. This is clear if the player active at $h$ is player 1, thanks to the inequality $p^1(h)\leq \phi^{n^1_{\mathrm{e}}(h)}(p_0)$.
If the player active at $h$ is player 2, then the assumption $|h|>1$ implies $|h|>2$. For such histories and by construction, the assumption $p^2(h)<1$ implies that player 1 has played $S$ at least once along $h$, implying that  $p^2(R\bar h)\leq \phi(p_0)$. According to $\sigma$, the continuation play following $h$ is $S^\infty$ if player 2 plays $S$, and $RR\!\cdot\!S^\infty$ if he plays $R$. Since $p^2(h)\leq \phi(p_0)<\widehat p$, the optimal choice is $S=\sigma(h)$.    
\end{proof}

\subsection{Non-pure reasonable SE}\label{sec: mixed}

When pure reasonable SEs fail to exist, players must resort to randomization. 
We here illustrate the nature of the equilibrium in such cases. 
We assume that $\phi(p^*)\in (\widehat{p}, \widehat{p}_1 )$, and that  $p_0 \in (p^*,\min\{p^*_1,   \phi^{-1}(\widehat{p}_1)\})$.  
These intervals are nonempty
when, e.g.,
$\delta - \lambda \geq 0.8$.
Since  $\phi(p^*_1)<p^*$ and $\phi(\widehat{p}_1)<\widehat{p}$, one has 
\[\phi^2(p_0)<\phi(\widehat{p}_1)<\widehat{p} <\phi(p^*)<\phi(p_0)<\phi(p^*_1)<p^*<p_0< p^*_1.\]

In this case, $N^*=1$ and simple algebraic manipulations show that $N^{**}=3$.

  \begin{proposition}\label{prop mixed}
      There is a unique reasonable sequential equilibrium outcome. At equilibrium, the support of the number $N_{\mathrm{e}}$ of experiments is $\{1,2,3\}$ if $\theta =B$. 
  \end{proposition}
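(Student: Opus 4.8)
The plan is to pin down the equilibrium outcome step by step, working backwards from the belief thresholds implied by the parameter assumptions, and to show that pure play is impossible at a critical node so that a unique mixing probability is forced. First I would record the consequences of the parameter hypotheses: since $\phi^2(p_0)<\widehat p < p^*$ we have $N^*=1$, and by Lemmas~\ref{prop A2} and~\ref{lemm last2} (applicable because $p_0<p^*_1$), after $RS$ and after $RS\cdot RR\cdot S$ the continuation play is $S^\infty$. In particular $\sigma(RS)=S$, hence $p^2(RS\cdot R)=1$ and $\sigma(RS\cdot R)=R$. This identifies the key node: at $h=RS$ player~1 is in the situation of the first thought experiment (choosing $R$ now induces one extra experiment by player~2, whose outcome is disclosed one period later), and since $p^1(RS)=\phi(p_0)>\widehat p$, player~1 \emph{strictly} prefers $R$ to $S$ there. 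Combined with $\sigma(RS)=S$ forced above, this shows there is no pure reasonable SE (this is exactly Proposition~\ref{ex4}), so player~1 must randomize at the initial node, and/or player~2 must randomize after $R$.

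Next I would analyze the play that follows an initial $R$. After $R$, player~2's belief is $p_0$; since $p_0<p^*_1$ but $p_0>p^*$, whether player~2 experiments is delicate and will in general be mixed. If player~2 plays $R$ after $R$ with some probability $\beta\in(0,1)$, then at the history $RR$ player~1's belief lies strictly between $\phi^2(p_0)$ (if $2$ chose $R$ uninformatively — but here $2$'s move \emph{is} informative since the support is mixed) and must be computed by Bayes' rule from $\beta$; the hypothesis $\phi(p_0)<\widehat p$ and the fact that $\phi^2(p_0)<\widehat p$ together with the isomorphism of the continuation game after two safe moves (as in the proof of Lemma~\ref{prop A2}, which invokes Lemma~\ref{prop A1}) will force $S^\infty$ from $RR$ on in the absence of a success. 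I would then set up the indifference conditions: player~1 indifferent between $R$ and $S$ at $\emptyset$, and, if needed, player~2 indifferent between $R$ and $S$ at $R$. Each indifference is an affine equation in one mixing probability (the payoffs being convex combinations of the payoffs of the pure continuation profiles $S^\infty$, $RS\cdot S^\infty$, $RR\cdot S^\infty$, $RR\cdot RS\cdot S^\infty$ already computed in Lemma~\ref{lemma:3claims}), so the solution, when it lies in $(0,1)$, is unique. Checking that these solutions indeed lie in the open unit interval is where the parameter restrictions $\phi(p^*)\in(\widehat p,\widehat p_1)$ and $p_0\in(p^*,\min\{p^*_1,\phi^{-1}(\widehat p_1)\})$ are used; the role of $\widehat p_1$ is to control the incentive at the node $RR\cdot R$ (second thought experiment with a stock of one prior experiment).

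Having found the unique candidate mixing probabilities, I would verify that the resulting profile, together with the reasonable belief system (unique by Lemma~\ref{lemm unique beliefs}), is a sequential equilibrium — i.e., check the one-shot deviation principle at every public history. On the equilibrium support the arguments above give indifference or strict optimality; off the support one uses: after any $S$, reasonable beliefs give $p^i(hS)=\phi^{n_{\mathrm e}(h)}(p_0)$, and since the relevant beliefs are all below $p^*$ (indeed below $\widehat p$ after enough experiments), $S^\infty$ is optimal; after an off-path $R$ by a player who has already experimented, the opponent's belief jumps to $1$ and the deviator faces the one-player problem with belief $\le\phi(p_0)<\widehat p<p^*$, so the deviation is unprofitable. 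Finally, reading off the support of $N_{\mathrm e}$ conditional on $\theta=B$: with positive probability player~1 plays $S$ at the start and then (after $SR$, i.e.\ player~2 experimenting) the path has one experiment; with positive probability $RS\cdot S^\infty$ giving one; with positive probability $RR\cdot S^\infty$ giving two; with positive probability $RR\cdot RS\cdot S^\infty$ (player~1 successful? no — conditional on $B$ the branch $RS\cdot RR$ requires player~1's period-1 experiment to look like a success, impossible under $B$, so instead the length-3 branch comes from $RR\cdot RR\cdot S$ being cut to $RR\cdot RS$... ) — here I must be careful to track exactly which histories have positive probability under $\theta=B$. The cleanest route is: conditional on $B$, every experiment fails, so the realized path is determined by the players' mixing alone; enumerating the at most three reachable "stopping times" gives $N_{\mathrm e}\in\{1,2,3\}$, and each value is attained with positive probability because each relevant mixing probability is interior and $\phi^2(p_0)<\widehat p\le$ the threshold that stops a third period while $\phi(p_0)>\widehat p$ permits a second. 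The uniqueness of the \emph{outcome} (as opposed to the strategy) follows since the indifference equations admit a unique interior solution and all off-path play is forced.

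The main obstacle I anticipate is the bookkeeping in the last step — correctly identifying which histories are on-path under $\theta=B$ versus $\theta=G$, and verifying that the interior mixing probabilities produce exactly the support $\{1,2,3\}$ rather than a strict subset. A secondary difficulty is confirming that the hypotheses on $\widehat p_1$ are precisely what is needed to keep player~2's (and player~1's) mixing probabilities strictly inside $(0,1)$ at every node where mixing is required, and that no additional node forces further randomization; this requires carefully comparing $\phi(p_0)$ and $\phi(p^*)$ against $\widehat p$, $\widehat p_1$, $p^*$, $p^*_1$ using the monotonicity of $\phi$ and of the cut-off sequences.
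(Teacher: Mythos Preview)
Your plan misidentifies where the randomization must occur, and this derails the whole construction. You correctly observe the tension at the node $RS$: a pure choice $\sigma(RS)=S$ is ruled out because $p^1(RS)=\phi(p_0)>\widehat p$ makes $R$ strictly better in the first thought experiment, while a pure choice $\sigma(RS)=R$ would leave player~2 at $RSR$ with belief $1$ and trigger $R^\infty$. But from this you jump to ``player~1 must randomize at the initial node, and/or player~2 must randomize after $R$.'' That inference is unwarranted: Lemmas~\ref{prop A2} and~\ref{lemm last2} are stated and proved only for \emph{pure} reasonable SE, so in a mixed equilibrium nothing forces $\sigma(RS)=S$. The natural resolution --- and the paper's --- is that the mixing happens at $RS$ itself.

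Concretely, the paper's equilibrium has $\sigma(\emptyset)=R$ and $\sigma(R)=S$ purely, while player~1 mixes at $RS$ (probability $\alpha$ on $R$) and player~2 mixes at $RSR$ (probability $\beta$ on $R$). The value of $\alpha$ is pinned down by the requirement $p^2(RSR)=p^*_2$, which makes player~2 indifferent at $RSR$; the value of $\beta$ is pinned down by making player~1's continuation payoff at $RS$ equal to zero, so that player~1 is indifferent there. Both indifference conditions are affine in the relevant mixing probability and admit a unique interior solution under the parameter hypotheses (this is where $\phi(p^*)\in(\widehat p,\widehat p_1)$ and $p_0<p^*_1$ enter). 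Conditional on $\theta=B$ the realized path is then $RS\!\cdot\!S^\infty$, $RS\!\cdot\!RS\!\cdot\!S^\infty$, or $RS\!\cdot\!RR\!\cdot\!S^\infty$ with probabilities $1-\alpha$, $\alpha(1-\beta)$, $\alpha\beta$, giving the support $\{1,2,3\}$ directly --- no need for the delicate bookkeeping you anticipated. Your alternative placement of the mixing (at $\emptyset$ and at $R$) would not close the indifference loop: with player~2 mixing at $R$, the node $RS$ is still reached with positive probability, and there player~1 again faces the first thought experiment with belief $\phi(p_0)>\widehat p$ and a strict preference for $R$, so you have not resolved the original tension.
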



\begin{proof}[Proof Sketch]
Define a  profile $\sigma$ as follows:
\begin{itemize}
    \item  $\sigma(h)= R$ if $h= (SS)^n$ for some $n\geq 0$ or if $h= (SS)^n\!\cdot\!S$ for some $n\geq 0$;
    \item  $\sigma(h)$ assigns probability $\alpha$ to $R$ whenever $h= S^nRS$ for some $n\geq 0$;
    \item  $\sigma(h)$ assigns probability $\beta$ to $R$ whenever $h= S^nRSR$ for some $n\geq 0$;
    \item for any other public history $h$ with active player $i$,  $\sigma(h)= S$ if $p^i(h)<1$,
    and $\sigma(h)= R$ if $p^i(h)$=1.
\end{itemize}

Figure 3 displays $\sigma$ and the active player's belief in the early periods of the game.  

\begin{center}
\begin{tikzpicture}
    
\node[draw,cyan] (T) at (0,0) {P1, $p_0$};
\node[draw,magenta] (2) at (2,4/3) {P2, $p_0$};
\node[draw,cyan] (3) at (4,0) {P1, $\phi(p_0)$};
\node[draw,magenta] (41) at (6,-4/3) {P2, $\phi(p_0)$};
\node[draw,magenta] (42) at (6,4/3) {P2, $p^*_2$};
\node[draw,cyan] (5) at (8,8/3) {P1, $\phi^2(p_0)$};
\node[draw,cyan] (6) at (8,0) {P1, $\phi^2(p_0)$};
\node (21) at (2,-4/3) {};
\node (31) at (4,-8/3) {};
\node (7) at (10,4/3) {$S^\infty$};
\node (8) at (10,-4/3) {$S^\infty$};
\node (51) at (8,-8/3) {$S^\infty$};

\draw (0.8,2/3) node[above]{$R$};
\draw (0.8,-2/3) node[below]{$S$};
\draw (3.2,2/3) node[above]{$S$};
\draw (4.8,2/3) node[above]{$R, \alpha$};
\draw (5.6,-2/3) node[above]{$S, 1-\alpha$};
\draw (7.6,2/3) node[above]{$S, 1-\beta$};
\draw (6.5,1.8) node[above]{$R, \beta$};

\draw[color=gray!40] (T) -- (21);

\draw[thick](3) -- (42);
\draw[thick](T) -- (2);
\draw[thick](2) -- (3);
\draw[thick](3) -- (41);
\draw[thick](41) -- (51);
\draw[thick](42) -- (5);
\draw[thick](42) -- (6);
\draw[thick](5) -- (7);
\draw[thick](6) -- (8);

\end{tikzpicture}

Figure 3: Strategies and beliefs in the example in Appendix~\ref{sec: mixed}. 
\end{center}

The value of $\alpha>0$ is pinned down by the condition that $p^2(RS\!\cdot\!R)= p^*_2$. The existence of $\alpha$ follows by continuity, since $p^2(RS\!\cdot\!R)= 1$ when $\alpha = 0$ and $p^2(RS\!\cdot\!R)= p_0$ when $\alpha= 1$.
The value of $\beta >0$ is pinned down by the condition that the continuation payoff of player 1 at $h= RS$ is zero, so that it is optimal to randomize at $h$. The existence of $\beta$ follows by continuity. Indeed, this continuation payoff is negative if $\beta = 0$ since $ \phi(p_0)<p^*$, and 
positive if $\beta = 1$ since $\phi(p_0)>\phi(p^*)>\widehat{p}$. 

\medskip

We now show that $\sigma$ is a reasonable SE when supplemented with the associated reasonable beliefs, and that all reasonable SEs induce the same distribution over plays.

 Given $\alpha$, the probability  that player 2 observes $RS\!\cdot\!R$ is $(\lambda +(1-\lambda)\alpha)$ if $\theta= G$ and $\alpha$ if $\theta= B$. By Bayes' rule, the condition on $\alpha$ translates to  $p^*_2=\frac{p_0(\lambda+(1-\lambda)\alpha)}{\alpha+p_0\lambda(1-\alpha)}$, so that
\[\alpha=\frac{p_0\lambda(1-p^*_2)}{p_0\lambda(1-p^*_2)+(p^*_2-p_0)}\in (0,1).\]

\medskip

Define  $l(p):= \displaystyle 1-\delta+\lambda \delta -\frac{c(1-\delta)(1-p)}{pg} $, so that 
\[p=\frac{c(1-\delta)}{c(1-\delta)+g(1-\delta+\lambda \delta-l(p))}.\]
The reader can verify that 
$l(p)$ is increasing in $p$, with $l(p^*)=0$, $l(p^*_1)=\lambda^2\delta$, and $l(p^*(2))=\lambda^2 \delta (2-\lambda).$ 
Therefore,
$\alpha=\alpha(p_0)$ can  be expressed as a function of $l(p_0)$ by:
$$\alpha=\frac{\lambda(1-\delta+\lambda \delta(1-\lambda)^2)}{-l(p_0)+\lambda(1-\delta+3 \lambda \delta -3\lambda^2 \delta + \lambda^3\delta)}.$$


Let $h$ be an arbitrary  public history, and let $i$ be the player active at $h$. We prove that player $i$ has no one-step profitable deviation at $h$.

\paragraph{Case 1: $h= (SS)^n$ for some $n\geq 0$.}

The continuation play induced by $\sigma$ after $h$ is $RS^\infty$ if player 1 chooses $R=\sigma(h)$, and $(SS)\!\cdot\!RS^\infty$ if player 1 deviates to $S$. Since $p_0>p^*$, it is better not to delay the experiment, hence the optimal choice for $i$ is $R=\sigma(h)$.

\paragraph{Case 1 bis: $h= (SS)^n\!\cdot\!S$ for some $n\geq 0$.}

The continuation play induced by $\sigma$ after $h$ is $SR\!\cdot\!S^\infty$ if player 2 chooses $S=\sigma(h)$, and $RS^\infty$ if player 1 deviates to $R$. In the former case, player 2's continuation payoff is $A:=\delta \gamma^2(\sigma)= p_0\delta^3 g(1+\alpha(1-\lambda))$. In the latter case, his continuation payoff is 
$B:= \gamma^1(\sigma)= (1-\delta)\left(p_0\lambda m -c\right) +p_0\delta \lambda g$. 
The reader can verify that $A\geq B$ if and only if 
\[l(p_0)\leq \lambda \delta^3(1+\alpha(1-\lambda)).\]
Since $p_0\in (p^*,p^*_1)$, one has $l(p_0)\leq \lambda^2 \delta$. 
On the other hand, one can verify that the assumptions in the proposition imply that $\lambda \leq \delta^2$, so that $l(p_0)\leq \lambda \delta^3$.

\paragraph{Case 2: $n_{\mathrm{e}}(h)=1$.}
We need to discuss according to the position of the experiment along $h$.  

Assume first that $h$ ends with $SS$. According to $\sigma$, the continuation play is  $S^\infty$ if  player $i$ chooses $S=\sigma(h)$,
and $RS\!\cdot\!S^\infty$ if he deviates to $R$. Since $p^i(h)= \phi(p_0)<p^*$, it is optimal to choose $S=\sigma(h)$.

Assume next that $h= S^kRS$ for some $k\geq 0$. The definition of $\beta$ ensures that player $i$ is indifferent between $S$ and $R$.

Assume finally that $h= S^kR$ for some $k\geq 0$. 
If player $i$ chooses $S=\sigma(h)$, 
in the next stage player $j$ randomizes  and chooses $S$ with probability $1-\alpha$.
The expected continuation payoff of player $i$ is, therefore,
\[A:=p_0\delta^2g(\lambda+(1-\lambda ) \alpha \lambda).\]

On the other hand, the the continuation play following $hR$ is $S^\infty$. 
Hence, should player $i$ deviate to $R$ at $h$, he expects to learn immediately the outcome of the unique experiment of $j$, and  $i$'s continuation is $g$ if and only if one of the two experiments is successful.
The expected continuation payoff of player $i$ is therefore given by
\[B:= (1-\delta) (p_0\lambda m -c) +\delta p_0 g\left(\lambda +\lambda(1-\lambda)\right)= p_0 g \left(l(p_0)+\lambda \delta (1-\lambda)\right).\]
We need to prove that $A\geq B$, which is equivalent to 
\begin{equation}\label{eq1}
 l(p_0)+\lambda \delta (1-\lambda)\leq \lambda \delta^2(1+\alpha(1-\lambda)).
\end{equation}
      
      Since $p_0\leq p^*_1$, one has $l(p_0)+\lambda \delta (1-\lambda)\leq l(p^*_1)+\lambda \delta (1-\lambda)=\lambda \delta$. 
      On the other hand, since $p_0\geq p^*$, one has $\lambda \delta^2(1+\alpha(1-\lambda)) \geq \lambda \delta^2(1+\alpha^*(1-\lambda))$, where $\displaystyle \alpha^*:=\lim_{p_0\to p^*}\alpha(p_0)=\frac{1-\delta + \delta \lambda (1-\lambda)^2}{1-\delta (1-\lambda)^3} $.
Therefore,   (\ref{eq1}) will follow from the following inequality:
      \begin{equation} \label{eq2} \delta (1+\alpha^*(1-\lambda))\geq 1\end{equation}
The inequality (\ref{eq2}) can be rewritten as $P(\delta)\geq 0$, 
with \[P(\delta):= -\delta^2\left(\left(1-\lambda\right)+\left(1-\lambda\right)^4\right) +\delta \left(1+ \left(1-\lambda\right) + \left(1-\lambda\right)^3\right)-1.\]
Observe that $P$ is concave in $\delta$, and  $P(1)>0$. 
In addition, the reader can verify that the  assumption $\phi(p^*)\geq \widehat p$ is equivalent to $\delta(1-\lambda) \geq a:=\frac{-1+\sqrt{5}}{2}$. Since $P$ is concave in $\delta$, it is therefore sufficient to prove that $\psi(\lambda):= P(\frac{a}{1-\lambda})>0$ for each $\lambda \leq 1-a$. 
Basic algebraic manipulations show  that $\psi$ is increasing, with $\psi(0)= 3a - 2a^2 -1>0$. This concludes the proof in that case as well.

\paragraph{Case 3: $n_{\mathrm{e}}(h)=2$.}

Again, we discuss several cases according to the location of the experiments along $h$.
\begin{itemize}
    \item Assume $h= S^k RR$ for some $k\geq 0$. The continuation play is $S^\infty$ if $i$ chooses $S$, and $RRS^\infty$ if $i$ chooses $R$ since $p^j(S^k\!\cdot\!RR\!\cdot\!R)=1$. Since $p^i(h)= \phi(p_0)< \widehat p_1$, the sequentially rational choice is $S=\sigma(h)$.
    \item Assume $h= S^kRSR$ for some $k\geq 0$. By construction of $\sigma$, $p^i(h)=p^*_2$ and $i$ is indifferent between $S$ and $R$.
    \item Assume $h= S^k RS^nR$ for some $k\geq 0$ and $n\geq 2$. The continuation play is $S^\infty$ if $i$ chooses $S$, and $RS^\infty$ if $i$ chooses $R$. Since $p^i(h)= \phi(p_0)< p^*_1$, the sequentially rational choice is $S=\sigma(h)$.
    \item Assume $h$ ends with $S$. The continuation play is $S^\infty$ if $i$ chooses $S$, and $RS^\infty$ if $i$ chooses $R$. Since $p^i(h)= \phi(p_0)< p^*$, the sequentially rational choice is $S=\sigma(h)$.
\end{itemize}

\paragraph{Case 4: $n_{\mathrm{e}}(h)\geq 3$.}

If $p^i(h)=1$, then $\sigma(h)=R$ is the optimal choice at $R$. These include histories of the form
\begin{itemize}
    \item $h= \overline h R^k$, where (i) $\overline h$ is either empty or ends with $S$, and (ii) $k$ is odd. 
    \item $h= \overline h RSR$, where $n_{\mathrm{e}}(\overline h)\geq 1$.
\end{itemize}
      
 Assume now that $p^i(h)<1$, so that $\sigma(h)=S$. 
 \begin{itemize}
 \item If $h= \overline h S$, then $p^i(h)= \phi^{n_{\mathrm{e}}(h)}(p_0)\leq \phi^3(p_0)<\widehat p$. 
 Choosing $R$, the strategy profile $\sigma$ induces the continuation path $RRS^\infty$. Since $p^i(h) < \widehat p$, the choice $S$ is sequentially rational.
\item Suppose that $h= \overline hS R^n$, where $n$ is even. The last $n/2$ experiments of $j$ are uninformative, for two  reasons. 
First,  at the history $\overline hS R$, player $i$ (who is active since $n$ is even) assigns probability 0 to the event that $j$ was successful along $\overline h$. Second, later experiments are uninformative as well since $\sigma(\overline hSR^k)= R$ for odd $k>1$. This implies that $p^i(h) \leq \phi^{n/2}(p_0)$. If $i$ chooses $S$ at $h$, he learns the outcomes of the last $n/2$ experiments of $j$. If $i$ instead chooses $R$, this induces an additional experiment of $j$, and then $i$ learns all experiments' outcomes, with a one-period delay. This is the thought experiment that defines $\widehat p_{n/2}$. 
Since  $\phi^{n/2}(p_0)<\widehat p_{n/2}$, the sequentially rational choice at $h $ is $S$.
 \end{itemize}

\end{proof}

\end{document}